%
\documentclass[twocolumn,a4paper,10pt]{article} 
\usepackage[margin=1in]{geometry}

\usepackage{titlesec}

\titleformat*{\section}{\large\bfseries}
\titleformat*{\subsection}{\normalsize\sffamily}
\titleformat*{\subsubsection}{\normalsize\itshape}

\usepackage{amsthm,amsmath,amssymb}
\usepackage{mathtools}
\usepackage{mathrsfs}
\usepackage{graphicx}
\usepackage{algorithm}
\usepackage{algorithmic}
\usepackage[version=4]{mhchem}
\usepackage{url}
\usepackage{xspace}
\usepackage{xcolor}

\makeatletter
\newcommand{\cupdot}{\charfusion[\mathbin]{\cup}{\cdot}}

\makeatletter
\def\moverlay{\mathpalette\mov@rlay}
\def\mov@rlay#1#2{\leavevmode\vtop{%
   \baselineskip\z@skip \lineskiplimit-\maxdimen
   \ialign{\hfil$\m@th#1##$\hfil\cr#2\crcr}}}
\newcommand{\charfusion}[3][\mathord]{
    #1{\ifx#1\mathop\vphantom{#2}\fi
        \mathpalette\mov@rlay{#2\cr#3}
      }
    \ifx#1\mathop\expandafter\displaylimits\fi}
\DeclareRobustCommand\bigop[1]{%
  \mathop{\vphantom{\sum}\mathpalette\bigop@{#1}}\slimits@
}
\newcommand{\bigop@}[2]{%
  \vcenter{%
    \strbalox\z@{$#1\sum$}%
    \hbox{\resizebox{\ifx#1\displaystyle.9\fi\dimexpr\ht\z@+\dp\z@}{!}{$\m@th#2$}}%
  }%
}
\makeatother

\newcommand{\trans}{\top}
\newcommand{\N}{\mathbb{N}}
\newcommand{\R}{\mathbb{R}}
\newcommand{\RR}{\mathscr{R}}
\newcommand{\SM}{\mathbf{S}} 
\newcommand{\vv}{\mathbf{v}} 
\newcommand{\gf}{\mathbf{g}} 
\newcommand{\GF}{\mathbf{G}} 
\newcommand{\mm}{\mathbf{m}} 
\newcommand{\yy}{\mathbf{y}} 
\newcommand{\chemlike}{chemistry-like\xspace}
\DeclareMathOperator{\val}{val} 
\DeclareMathOperator{\im}{im} 
\DeclareMathOperator{\cone}{cone} 
\DeclareMathOperator{\supp}{supp}
\DeclareMathOperator{\lcd}{lcd}
\definecolor{sepia}{rgb}{0.44, 0.26, 0.08}
\definecolor{rotig}{rgb}{0.78, 0.12, 0.06}
\definecolor{darkgreen}{rgb}{0.2, 0.6, 0.2}
\definecolor{applegreen}{rgb}{0.55, 0.71, 0.0}
\definecolor{amber}{rgb}{1.0, 0.75, 0.0}
\definecolor{pfscolor}{rgb}{0.4, 0.5, 0.9}
\definecolor{purple}{RGB}{180,90,200}
\definecolor{dgreen}{RGB}{0,160,0}
\definecolor{turquoise}{RGB}{0,180,140}

\definecolor{revision}{rgb}{0.55,0.0,0.0}

\newcounter{addfl}

\newtheorem{theorem}{Theorem}
\newtheorem{proposition}[theorem]{Proposition}
\newtheorem{corollary}[theorem]{Corollary}
\newtheorem{lemma}[theorem]{Lemma}

\theoremstyle{definition}
\newtheorem{definition}[theorem]{Definition}
\newtheorem{example}[theorem]{Example}


\begin{document}

\renewcommand{\textfraction}{.002}
\renewcommand{\floatpagefraction}{.995}
\setcounter{dbltopnumber}{2}
\setcounter{totalnumber}{5}
\renewcommand{\topfraction}{.995}
\renewcommand{\dbltopfraction}{.995}
\renewcommand{\dblfloatpagefraction}{.995}

\title{What makes a reaction network ``chemical''?}

\author{Stefan M\"uller$^1$, Christoph Flamm$^2$, Peter F.~Stadler$^{3,4,5,2,6,7*}$}

\date{\footnotesize
$^1$~Faculty of Mathematics, University of Vienna, Oskar-Morgenstern-Platz 1, A-1090 Wien, Austria.
$^2$~Department of Theoretical Chemistry, University of Vienna, W\"ahringer Stra{\ss}e 17, A-1090 Wien, Austria.
$^3$~Bioinformatics Group, Department of Computer Science, and Interdisciplinary Center for Bioinformatics, 
Universit\"at Leipzig, H\"artelstra{\ss}e 16-18, D-04107, Leipzig, Germany.
$^4$~German Centre for Integrative Biodiversity Research (iDiv) Halle-Jena-Leipzig \& Competence Center for Scalable Data Services and
Solutions Dresden-Leipzig \& Leipzig Research Center for Civilization Diseases, University Leipzig, D-04107 Leipzig, Germany. 
$^5$~Max Planck Institute for Mathematics in the Sciences, Inselstra{\ss}e 22, D-04103 Leipzig, Germany.
$^6$~Faculdad de Ciencias, Universidad Nacional de Colombia, Sede Bogot\'a, Ciudad Universitaria, COL-111321 Bogot\'a, D.C., Colombia.
$^7$~Santa Fe Institute, 1399 Hyde Park Road, NM87501 Santa Fe, USA.
$^*$~Correspondence: studla@bioinf.uni-leipzig.de \\[3ex]
\normalsize \today}


\twocolumn[
  \begin{@twocolumnfalse}
  \maketitle
\begin{abstract} 
  {\bf Background:}
  Reaction networks (RNs) comprise a set $X$ of species and a set $\RR$ of
  reactions $Y\to Y'$, each converting a multiset of educts $Y\subseteq X$
  into a multiset $Y'\subseteq X$ of products. RNs are equivalent to
  directed hypergraphs.  However, not all RNs necessarily admit a chemical
  interpretation. Instead, they might contradict fundamental principles of
  physics such as the conservation of energy and mass or the reversibility
  of chemical reactions. The consequences of these necessary conditions for
  the stoichiometric matrix $\SM \in \R^{X\times\RR}$ have been discussed
  extensively in the chemical literature. Here, we provide sufficient
  conditions for $\SM$ that guarantee the interpretation of RNs in terms of
  balanced sum formulas and structural formulas, respectively.

  {\bf Results:}
  Chemically plausible RNs allow neither a perpetuum mobile, i.e., a
  ``futile cycle'' of reactions with non-vanishing energy production, nor
  the creation or annihilation of mass. Such RNs are said to be
  thermodynamically sound and conservative. For finite RNs, both conditions
  can be expressed equivalently as properties of the stoichiometric matrix
  $\SM$. The first condition is vacuous for reversible networks, but it
  excludes irreversible futile cycles and -- in a stricter sense -- futile
  cycles that even contain an irreversible reaction. The second condition
  is equivalent to the existence of a strictly positive reaction
  invariant. Furthermore, it is sufficient for the existence of a
  realization in terms of sum formulas, obeying conservation of
  ``atoms''. In particular, these realizations can be chosen such that any
  two species have distinct sum formulas, unless $\SM$ implies that they
  are ``obligatory isomers''.  In terms of structural formulas, every
  compound is a labeled multigraph, in essence a Lewis formula, and
  reactions comprise only a rearrangement of bonds such that the total bond
  order is preserved. In particular, for every conservative RN, there
  exists a Lewis realization, in which any two compounds are realized by
  pairwisely distinct multigraphs.  Finally, we show that, in general,
  there are infinitely many realizations for a given conservative RN.
  
  {\bf Conclusions:}
  ``Chemical'' RNs are directed hypergraphs with a stoichiometric matrix
  $\SM$ whose left kernel contains a strictly positive vector and whose
  right kernel does not contain a futile cycle involving an irreversible
  reaction. This simple characterization also provides a concise
  specification of random models for chemical RNs that additionally
  constrain $\SM$ by rank, sparsity, or distribution of the non-zero
  entries. Furthermore, it suggests several interesing avenues for future
  research, in particular, concerning alternative representations of
  reaction networks and infinite chemical universes.
  
  \vspace{2ex}
  {\bf Keywords:} 
chemical reaction network,
directed hypergraph,
stoichiometric matrix,
futile cycle,
perpetuum mobile,
energy conservation,
mass conservation,
reaction invariants,
null spaces,
sum formula,
multigraph,
Lewis formula.
\end{abstract}
\vspace{1cm}
\end{@twocolumnfalse}
]

%
%


\section{Background}

Most authors will agree that a chemical reaction network RN consists of a
set $X$ of chemical species or compounds and a set $\RR$ of chemical
reactions, each describing the transformation of some (multi)set of
  educts into a (multi)set of products. Depending on the application, this
basic construction may be augmented by assigning properties such as mass,
energy, sum formulas, or structural formulas to the compounds.  Similarly,
reactions may be associated with rate constants, equilibrium constants, and
so on. A formal theory of reaction networks (RN) describes a reaction
on a given set of compounds $X$ as a \emph{stoichiometric relation}, i.e.,
as a pair of formal sums of chemical species $x \in X$:
\begin{equation}
  \sum_{x\in X} s^-_{xr} \, x \to  \sum_{x\in X} s^+_{xr} \, x .
\label{eq:reaction}
\end{equation}
The left-hand side in equ.(\ref{eq:reaction}) lists the educts and the
right-hand side gives the products of the reaction. The
\emph{stoichiometric coefficients} $s^-_{xr}\in\N_0$ and $s^+_{xr}\in\N_0$
denote the number of species $x\in X$ that are consumed (on the left-hand
side) or produced (on the right-hand side) by the reaction $r$,
respectively. A species $x\in X$ is an educt in reaction $r$ if
$s^-_{xr}>0$ and a product if $s^+_{xr}>0$. If $s^+_{xr}=s^-_{xr}=0$, then
species $x$ does not take part in reaction $r$ and is suppressed in the
conventional chemical notation. The formal sums
  $\sum_{x\in X} s^-_{xr} \, x$ and $\sum_{x\in X} s^+_{xr} \, x$ form the
  \emph{complexes} of educts $r^-$ and products $r^+$ of the reaction $r$.
We denote the set of reactions under considerations by $\RR$ and call the
  pair $(X,\RR)$ a reaction network (RN). Throughout this contribution we
  will assume that both $X$ and $\RR$ are non-empty and finite. Excluding
explicit catalysis, that is, forbidding $s^-_{xr} \, s^+_{xr}>0$, is
suffices to consider the \emph{stoichiometric matrix}
$\SM \in \N_0^{X \times \RR}$. Its entries $\SM_{xr} = s^+_{xr} - s^-_{xr}$
describe the net production or consumption of species $x$ in reaction
$r$. In many practical applications, e.g.\ in the context of metabolic
networks, RNs are embedded in an open system. In that manner, the
consumption of nutrients and the production of waste can be modeled. We will
return to this point only after discussing chemical RNs in isolation, i.e.,
as closed systems.

Several graph representations have been considered as (simplified) models
of a RN, see \cite{Sandefur:13} for a recent summary. In contrast to the
pair $(X,\RR)$, they do not always completely represent the RN.

The S-graph (\emph{species graph}, \emph{compound graph}, or
\emph{substrate network} in the context of metabolic networks) has the
species as its vertices. A (directed) edge connects $x$ to $y$ if the RN
contains a reaction that has $x$ as an educt and $y$ as a product
\cite{Alon:07,Shellman:13}.  The corresponding construction in the kinetic
setting is the \emph{interaction} graph with undirected edges whenever
$\partial [x]/\partial[y]\not\equiv 0$, which are usually annotated by the
sign of the derivative \cite{Soule:03}. S-graphs have also proved to be
useful in approximation algorithms for the minimal seed set problem
\cite{Borenstein:08}, which asks for the smallest set of substrates that
can generate all metabolites. Complementarily, \emph{reaction graphs} model
reactions as nodes, while edges denote shared molecules
\cite{Fagerberg:13a}.

The \emph{complex-reaction graph} simply has the complexes $\mathscr{C}$
(the left- and right-hand sides of the reactions) as its vertex set and the
reactions $\RR$ as its edge set.  That is, two complexes $r^-$ and $r^+$
are connected by a directed edge if there is a reaction
$r=(r^-,r^+)\in\RR$.  Its incidence matrix
$\mathbf{Z} \in \RR^{\mathscr{C}\times\RR}$ (with entries
$\mathbf{Z}_{cr}=-1$ if $c=r^-$, $\mathbf{Z}_{cr}=1$ if $c=r^+$, and
$\mathbf{Z}_{cr}=0$ otherwise) is linked to the stochiometric matrix via
$\mathbf{S}=\mathbf{Y}\mathbf{Z}$, where the entries of the
(stoichiometric) complex matrix $\mathbf{Y} \in \R^{X \times \mathscr{C}}$
are the corresponding stochiometric coefficients.  The complex-reaction
graph plays a key role in the analysis of chemical reaction networks with
mass-action kinetics and arbitrary rate constants, as studied in classical
``chemical reaction network theory''
(CRNT)~\cite{Horn:72,Horn:72a,Feinberg:72}.  It gives rise to notions such
as ``complex balancing'' and ``deficiency'', which allow the formulation of
strong (global) stability results, see e.g.\ \cite{Craciun:09,Angeli:09}.

SR-graphs (\emph{Species-reaction networks}) are bipartite graphs with
different types of nodes for chemical species and reactions, respectively
\cite{Craciun:06,Kaltenbach:12}. As such, they can be endowed with
additional annotations or extended with multiple edges to represent
stoichiometric coefficients. In this extended form, they are faithful
representations of chemical RNs. Alternatively, the edges are often
annotated with the multiplicities of molecules, i.e., the stoichiometric
coefficients; in this case, they completely specify the RN $(X,\RR)$.
Undirected SR-graphs have a close relationship to classical deficiency
theory \cite{Feinberg:72,Horn:72} and form the starting point for a
qualitative theory of chemical RN kinetics \cite{Shinar:13}). More detailed
information on qualitative kinetic behavior can be extracted from directed
SR-graphs \cite{Mincheva:06}. Both the S- and the R-graph can be extracted
unambigously from an SR-graph.
 
The bipartite SR-graphs can be interpreted as the K{\"o}nig's
representation \cite{Zykov:74} of directed hypergraphs. The connection
between hypergraph and graph representations is discussed in some more
detail in \cite{Zhou:11}. While SR-graphs and directed hypergraphs can be
transformed into each other, they carry a very different semantic. For
instance, the notions of path and connectivity are very different for
bipartite graphs and directed hypergraphs \cite{SantiagoArguello:21a}. It
has been argued, therefore that any graph representation of chemical
networks necessarily treats edges as independent entities and thus fails to
correctly capture the nature of chemical reactions
\cite{Klamt:09,Montanez:10}. In a similar vein, \cite{Andersen:19a} adopts
the hypergraph representation and models (bio)chemical pathways as integer
hyperflows to ensure mass balance at each vertex. Not every pair of an S-
and R-graph implies an SR-graph, and if they do, the result need not be
unique \cite{Fagerberg:13a}.

Over the last decade, many authors, including one of us, have investigated
metabolic networks from a statistical perspective and reached the
conclusion that they are distictly ``non-random'', presumably as the
consequence of four billion years of evolution.  This conclusion is
typically reached by first converting a RN into one of the graph
representations mentioned above. The choice of graphs is largely motivated
by a desire to place metabolic or other chemical RNs within the scheme of
small world and scale free networks and to analyze the RNs with the
well-established tools of network science \cite{Wagner:01,Klamt:09}. Thus
one concludes that graph-theoretical properties of metabolic networks are
significantly different from the properties of randomly generated or
randomized background models for chemical reaction networks
\cite{Jeong:00,Gleiss:01a,Shellman:13,Fischer:15}.  The insights gained
from this ``non-randomness'' of metabolism, however, critically depend on
what exactly the authors meant by ``random'', that is, how the background
models are defined. In particular, it is important to understand whether
differences between chemical networks and the background are caused by the
implementation of universal properties (that any ``\chemlike'' RN must
satisfy) or whether they arise from the intrinsic structure of particular
chemical networks.

To this end, however, we first need a comprehensive conception of what
constitutes a \emph{\chemlike} reaction network. The different
representations used in the literature highlight the fact that it is far
from obvious which graphs or hypergraphs properly describe \emph{chemical}
RNs among a possibly much larger set of network models.  There is a
significant body of work in the literature that describes necessary
conditions on the stoichiometric matrix $\SM$ that derive from key
properties of chemical RNs, such as the conservation of mass or atoms in
each reaction \cite{Horn:72a,Schuster:91,Gadewar:01,%
  Famili:03,Flockerzi:07,Haraldsdottir:18}. In contrast, we are interested
here in sufficient conditions with the aim of providing a concise
characterization of RNs $(X,\RR)$ and their stoichiometric matrices $\SM$
that describe reaction system that can reasonably be considered as
``\chemlike''. This is of practical relevance in particular for the
construction of artifical chemistry models
\cite{Fontana:91a,Dittrich:01,Benkoe:03b,Banzhaf:15} and random
``chemistries''. In particular, it is still an open problem how random RNs
can be constructed that can serve as fair, \chemlike background models. We
therefore start with a brief survey of random artificial chemistries and
randomized RNs. As we shall see, no explicit provisions are made to include
``chemical'' constraints such as the conservation of matter and energy into
the background models.

The main part of this contribution is the characterization of \chemlike
RNs. Starting from the principles of energy conservation and conservation
of matter, we derive equivalent conditions on the stoichiometric matrix
$\SM$. We then introduce realizability of RNs by sum formulas and
structural formulas as a first step towards a formalization of \chemlike
networks, and show that conservation of matter is already sufficient to
guarantee the existence of such \chemlike representations. Finally we
discuss the consequences of the mathematical results for the construction
of random RNs.
  
\section{A brief survey of random and randomized chemical RNs}

Chemical reaction networks are specified either as a set of chemical
reactions or as a system of differential equations describing its
kinetics. Graphical models have been extracted from both. 

\subsection{Simple graph models of RNs}

S-graphs have been used to explore statistical properties of large RNs.
In this line of research, empirical S-graphs are
compared to the ``usual'' random networks models such as Erd{\H{o}}s
Reny{\'\i} (ER) random graphs, Small World networks in the sense of Watts
and Strogatz \cite{Watts:98}, or the {\'A}lbert-Barabasi model of
preferential attachment. Generative models for random graphs with given
degree distributions were introduced in \cite{Newman:01}.  Not
surprisingly, chemical reaction networks do not very well conform to either
one of them. As noted early on, however, R-graphs of metabolic networks at
least qualitatively fit the small world paradigm \cite{Wagner:01}. More
sophisticated analyses detected evidence for modularity and hierarchical
organization in metabolic networks \cite{Ravasz:02}, using random graph
models with the same degree distributions as contrasts.  Arita noted,
however, that S-graphs are poor representations of biochemical pathways and
proposed an analysis in terms of atom traces, concluding that ``the metabolic
world [of \emph{E.\ coli}] is not small in terms of biosynthesis and
degradation'' \cite{Arita:04}. The motivation to focus on atom maps
  comes from the insight that two compounds that are linked by reactions
  are only related by the chemical transformation if they share at least
  one atom.

A versatile generator for bipartite graphs that can handle joint degree
distributions is described in \cite{Azizi:17}. Surprisingly, bipartite
random graph models apparently have not been used to model chemistry.
Instead of generative models such ER graph or the preferential attachment
model, null models are often specified in terms of rewiring, that is, edit
operations on the graph. Rewiring rules define a Markov Process on a set of
graphs that can produce samples of randomized networks. The key idea is to
specify the rewiring procedure in such a way that it preserves graph
properties that are perceived to be important \cite{Rao:96,Hanhijaervi:09}.
For example, the degrees of all vertices in a digraph are preserved when a
pair of directed edges $x_1y_1$ and $x_2y_2$ is replaced by $x_1y_2$ and
$x_2y_1$ as long as $x_1$ and $x_2$ have the same out-degree while $y_1$
and $y_2$ have the same in-degree. Randomization procedures for bipartite
graphs have become available in the context of ecological networks
\cite{Strona:14} or trade networks \cite{Saracco:15}. To our knowledge they
have not been used for SR graphs.

\subsection{Random (Directed) Hypergraphs}

In \cite{dePanafieu:15} a hypergraph is defined as a multiset of
hyperedges, each of which in turn is a multiset of vertices. In this
setting, a random hypergraph is specified by the probabilities $p_k$ to
include a hyperedge $e$ with cardinality $|e|=k$. Similar models for
undirected hypergraphs are used e.g.\ in \cite{Ghoshal:09}.  In a directed
hypergraph, every hyperedge is defined as the pair $(e_-,e_+)$ consisting
of the multisets $e_-$ and $e_+$. The construction of \cite{dePanafieu:15}
thus naturally generalizes to directed hypergraphs specified by picking $e$
with probability $p_{|e_-|,|e_+|}$. In the context of chemistry this amounts
to picking educt and product sets for reactions with probabilities
depending on their cardinality. This type of random (directed) hypergraph
models are the obvious generalizations of the Erd{\H{o}}s Reny{\'\i}
(di)graphs. A certain class of random directed hypergraphs with $|e^-|=2$
and $|e^+|=1$ for all hyperedges $e$ is considered in \cite{Sloan:12}.

Hypergraphs are also amenable to rewiring procedures that ensures the
preservation of certain local or global properties. For instance,
\cite{Zhou:11} proposes a scheme that preserves the number and cardinality
of the hyperedges (replacing a randomly selected $(e_-,e_+)$ with a
randomly selected pair of disjoint subsets $(e_-',e_+')$ with
$|e_-|=|e_-'|$ and $|e_+|=|e_+'|$). On this basis, the authors conclude
that the hierarchical structure hypothesis proposed in \cite{Ravasz:02} is
not supported for metabolic networks when a clustering coefficient is
defined for directed hypergraphs.  \cite{Zhou:11} also compares S- and
R-graphs of metabolic networks with ensembles of S- and R-graphs derived
from randomized directed hypergraphs and cast further doubt on previously
reported scaling results. Randomization procedures for hypergraphs that
preserve local clustering are described in \cite{Nakajima:21}. An approach
that uses a chemical graph rewriting model to ensure soundness of reactions
is described in the MSc thesis \cite{Braun:19}.

In \cite{Fischer:15} networks are constructed in a stepwise procedure
starting with directed graphs whose arcs are then re-interpreted as
directed hyperarcs by combining multiple arcs. This process is guided by
matching the degree distribution of the implied S-graph.

\subsection{Reaction Universes: Random Subhypergraphs} 

Instead of generating a random RN directly from a statistical model or
rewiring a given one, one can also start from a \emph{reaction universe}
RU, that is, a RN that contains all species of interest and all known or
inferred reactions between them. Without losing generality we can think of
the RU as a directed hypergraph in the sense of \cite{dePanafieu:15}, where
the multi-set formalism accounts for the stoichiometric coefficients.  In
contrast to the generative and rewiring approaches the \emph{a priori}
specification of an RU ensures a high level of chemical realism and RNs can
now be sampled by randomly selecting subsets of directed hyperedges, that
is, chemical reactions. If the RU already ensures conservation of
  matter or energy, these properties are inherited by the sub-networks.
In order to generate random metabolic networks, reactions can be drawn from
databases such as KEGG or EcoCyc \cite{Samal:10,Kim:19}. Such selections of
reactions are sometimes called ``metabolic genotypes'' since the available
reactions are associated with enzymes, whose presence or absence is
determined by an organism's genome \cite{Samal:10}. In some studies,
additional constraints such as the production of biomass are exploited and
networks are sampled e.g.\ by combining Flux-Balance Analysis (FBA) and a
Markov Chain Monte Carlo (MCMC) approach
\cite{MatiasRodrigues:09,Samal:10}.

\section{A characterization of \chemlike reaction networks} 

In this section, we start from reaction networks that are specified as
abstract stoichiometric relations, Equ.~(\ref{eq:reaction}), and identify
minimal constraints necessary to avoid blatantly unphysical behavior. 

\subsection{Notation and Preliminaries}

Let $X$ be a finite set and let $\RR$ be a pair of formal sums of elements
of $X$ with non-negative integer coefficients according to
Equ.~(\ref{eq:reaction}). Then we call the pair $(X,\RR)$ a \emph{reaction
  network} (RN). Equivalently, a RN is a directed, integer-weighted
hypergraph with directed edges $(r^-,r^+)$ such that $x\in r^-$ with weight
$s^-_{xr}>0$ and $x\in r^+$ with weight $s^+_{xr}>0$. The weights
$s^-_{xr}$ and $s^+_{xr}$ are usually called the \emph{stoichiometric
  coefficients}. We set $s^-_{xr}=0$ and $s^+_{xr}=0$ if $x\notin r^-$ and
$x\notin r^-$, respectively. We deliberately dropped the qualifier
\emph{chemical} here since, as we shall see, not every RN $(X,\RR)$ makes
sense as a model of a chemical system. In fact, the aim of this
contribution is to characterize the set of RNs that make sense as models of
chemistry.

\begin{figure*}
  \begin{center}
      \includegraphics[width=0.6\textwidth]{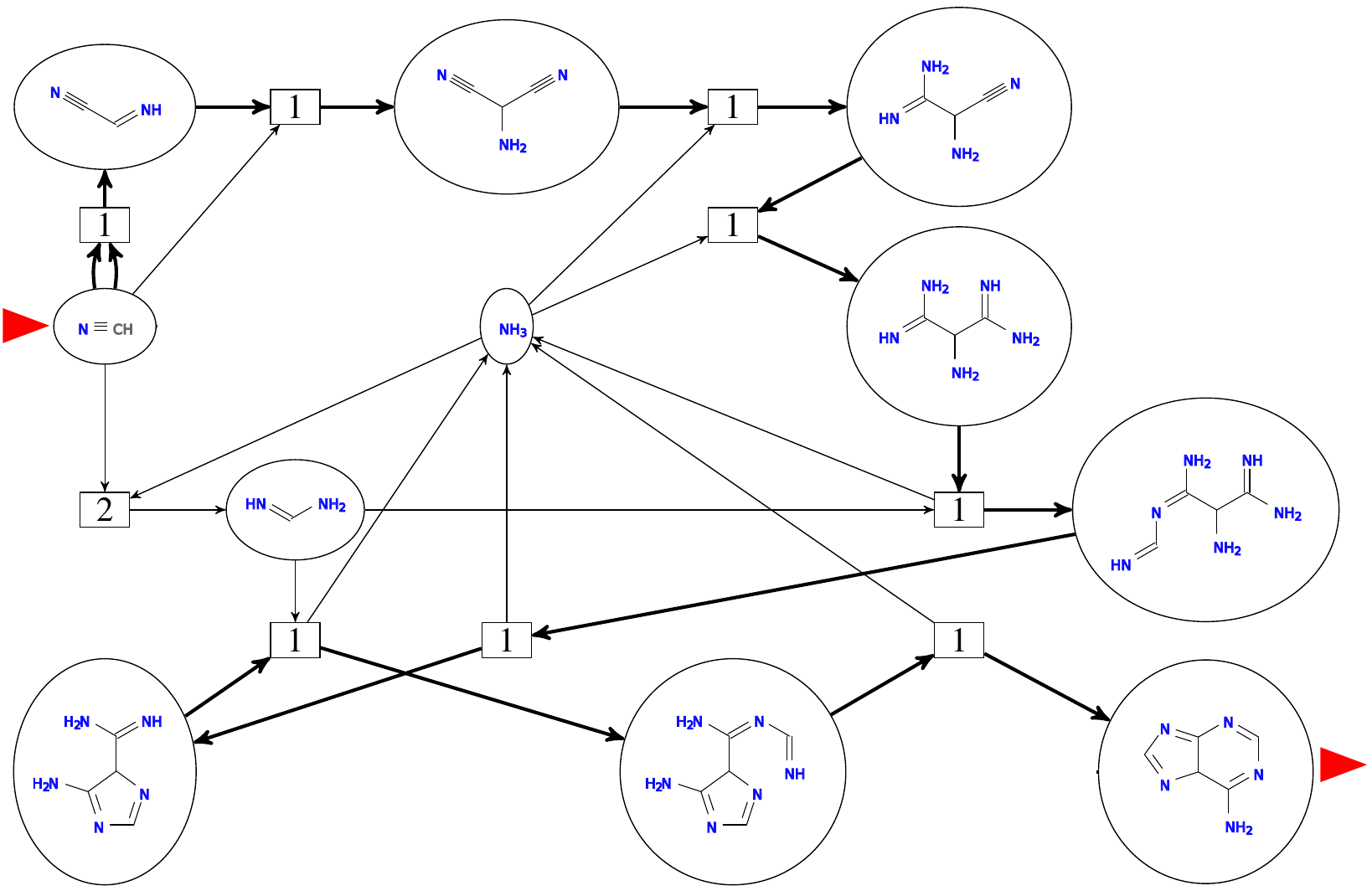}
      \caption{Representation of a RN as K{\"o}nig multigraph of the
        corresponding directed hypergraph. Round vertices (with chemical
        structures shown inside) designate compounds $x\in X$, while
        reactions $r\in R$ are shown as square vertices. Stoichometric
        coefficients are indicated by the number of edges from $x$ to $r$
        for $s^-_{xr}>0$ and $r$ to $x$ for $s^+_{xr}>0$, respectively. A
        flow (an overall reaction) is given by non-negative integer
        multiples of individual reactions. Here the coefficients $v_r$ are
        indicated in the square nodes for each reaction $r$. The flow shown
        here defines Or{\'o}'s \cite{Oro61} route from \ce{HCN} to adenine
        (marked by red triangles) and corresponds to the net reaction \ce{5
          HCN \longrightarrow H5C5N5}. Figure adapted from
        \cite{Andersen:13b}.}
      \label{fig:hypergraph}
  \end{center}
\end{figure*}

Such directed hypergraphs are most conveniently drawn as (bipartite)
K{\"o}nig multigraphs, with distinct types of vertices representing
compounds $x\in X$ and reactions $r\in\RR$, respectively. Stoichiometric
coefficients larger than one appear as multiple edges. See the example in
Fig.~\ref{fig:hypergraph}.

For each reaction $r\in\RR$, we define its support as
$\supp(r)=\{x \mid s^-_{xr}+s^+_{xr}>0\}$; that is, $x\in \supp(r)$ if it
appears an educt, a product, or a catalyst in $r$.  The stoichiometric
matrix of $(X,\RR)$ is $\SM \in \N_0^{X \times \RR}$ with entries
$\SM_{xr}= s^+_{xr} - s^-_{xr}$.

We distinguish \emph{proper reactions} $r$, for which there is both
$x\in X$ with $s_{xr}<0$ and $y\in X$ with $s_{yr}>0$, \emph{import
  reactions} for which $s_{xr}\ge 0$ for all $x\in X$, and \emph{export
  reactions} for which $s_{xr}\le 0$ for all $x\in X$. We write
$\varnothing$ for the empty formula, hence \ce{$\varnothing$
  \longrightarrow A} and \ce{B \longrightarrow $\varnothing$} designate the
import of \ce{A} and the export of \ce{B}, respectively. Note that this
definition also allows catalyzed import and export reactions, e.g., \ce{C
  \longrightarrow C + A} or \ce{B + C \longrightarrow C}.

In thermodynamics, a system is closed if it does not exchange matter with
its environment. For a RN, this rules out import and export reactions.
\begin{definition}
  A RN $(X,\RR)$ is \emph{closed} if all reactions $r\in\RR$ are proper.
\end{definition}
Given an arbitrary RN $(X,\RR)$, there is a unique inclusion-maximal closed
RN contained in $(X,\RR)$, namely $(X,\RR^\textrm{p})$ with
\begin{equation} 
  \RR^p= \{r\in\RR \mid r \textrm{ is proper}\}.
\end{equation}
We will refer to $(X,\RR^p)$ as the \emph{proper part} of $(X,\RR)$.

For every reaction $r$, one can define a \emph{reverse reaction} $\bar r$
that is obtained from $r$ by exchanging the role of products and
educts. That is, $\bar r$ is the reverse of $r$ iff, for all $x\in X$, it
holds that
\begin{equation}
  s^-_{x\bar r} = s^+_{xr} \quad\textrm{and}\quad
  s^+_{x\bar r} = s^-_{xr} . 
\end{equation}
While thermodynamics dictates that every reaction is reversible in
principle (albeit possibly with an extremely low reaction rate), it is a
matter of modeling whether sufficiently slow reactions are included in the
reaction set $\RR$.

Chemical reactions can be composed and aggregated into ``overall
reactions''. In the literature on metabolic networks, \emph{pathways} are
of this form. An overall reaction consists of multiple reactions that
collectively convert a set of educts into a set of products. It can be
represented as a formal sum of reactions $\sum_{r\in \RR} \vv_r \, r$,
where the vector of multiplicities $\vv \in \N^\RR_0$ has non-negative
integer entries. Thereby, $[\SM\vv]_x$ determines the net consumption or
production of compound $x$ in the overall reaction specified by $\vv$.

A vector $\vv\in\N_0^\RR$ can be interpreted as an \emph{integer hyperflow}
in the following sense: If $x$ is neither an educt nor a product of the
overall reaction specified by~$\vv$, then
$[\SM\vv]_x = \sum_r (s^+_{xr}-s^-_{xr}) \vv_r = 0$, i.e., every unit of
$x$ that is produced by some reaction $r$ with $\vv_r>0$ is consumed by
another reaction $r'$ with $\vv_{r'}>0$.

The effect of an overall reaction can be represented via formal sums of
species in two ways: as \emph{composite reactions},
\begin{equation}
  \label{eq:compositeR}
  \sum_{x\in X} \left(\sum_{r\in\RR} s^-_{xr}\vv_r\right)  x \longrightarrow
  \sum_{x\in X} \left(\sum_{r\in\RR} s^+_{xr}\vv_r\right) x , 
\end{equation}
or as \emph{net reactions},
\begin{equation}
  \label{eq:netR}
  \begin{split}
    \sum_{x\in X}& \left(\sum_{r\in\RR}
      \left[(s^-_{xr}-s^+_{xr})\vv_r\right]_{+}\right) x \\&\longrightarrow
    \sum_{x\in X} \left(\sum_{r\in\RR}
      \left[(s^+_{xr}-s^-_{xr})\vv_r\right]_{+}\right) x .
  \end{split}
\end{equation}
Here we use the notation $[c]_+ = c$ if $c>0$ and $[c]_+=0$ for $c\le 0$.
In Equ.(\ref{eq:netR}), intermediates, i.e., formal catalysts are
cancelled.  Hence, the net consumption (or production) of a species $x$ is
$\sum_{r\in\RR}[(s^-_{xr}-s^+_{xr})\vv_r]_{+}=-[\SM\vv]_x$ if
$[\SM\vv]_x<0$ (or
$\sum_{r\in\RR}[(s^+_{xr}-s^-_{xr})\vv_r]_{+}=[\SM\vv]_x$ if
$[\SM\vv]_x>0$).
  
Fig.~\ref{fig:hypergraph} shows the RN of Oro's prebiotic adenine synthesis
from \ce{HCN} and the integer hyperflow $\vv$ corresponding to the net
reaction ``\ce{5 HCN $\longrightarrow$} adenine'' as an example.

While a restriction to integer hyperflows $\vv\in\N_0^\RR$ is necessary in
many applications, see e.g.\ \cite{Andersen:19a} for a detailed discussion,
it appears mathematically more convenient to use the more general setting
of \emph{fluxes} $\vv\in\R^\RR_\ge$ as in the analysis of metabolic
pathways. To emphasize the connection with the body of literature on
network (hyper)flows we will uniformly speak of flows.

For any vector $\vv \in \R^\RR$, we write $\vv\ge 0$ if $\vv$ is
non-negative, $\vv>0$ if $\vv$ is non-negative and non-zero, that is, at
least one entry is positive, and $\vv\gg0$ if all entries of $\vv$ are
positive. Analogously, we write $\vv\le 0$, $\vv<0$, and $\vv\ll 0$. In
particular, a vector $\vv\in\R^\RR$ is called a \emph{flow} if $\vv\ge 0$.

A \emph{non-trivial} flow satisfies $\vv>0$, i.e., $\vv\ne0$.  Two flows
$\mathbf{v_1}$ and $\mathbf{v_2}$ are called \emph{parallel} if they
describe the same net reaction.  In particular, we therefore have
$\SM\mathbf{v_1} = \SM\mathbf{v_2}$ for parallel flows.

Futile cycles in a RN are non-trivial flows for which educts and products
coincide and thus the net reaction is empty.
\begin{definition}
  A flow $\vv>0$ is a \emph{futile cycle} if $\SM\vv=0$.
\end{definition}
We use the term futile cycle in the strict sense to describe the concurrent
activity of multiple reactions (or pathways) having no net effect other
than the dissipation energy.  In the literature on metabolic networks often
a less restrictive concept is used that allows certain compounds (usually
co-factors, ATP/ADP, redox equivalents, or solvents) to differ between
products and educts, see e.g.\
\cite{Schilling2000,Beard2002,Schwender:04,Qian:06}.  In this setting, the
net reaction of concurrent glycolysis and gluconeogenesis, namely the
hydrolysis of ATP, is viewed as energy dissipation rather than a chemical
reaction. In our setting, \ce{ATP + H2O -> ADP + P$_i$- + H+}, is a net
reaction like any other, and hence a futile cycle would only arise if
recycling of ATP, i.e., \ce{ADP + P$_i$- + H+ -> ATP + H2O}, was included
as well.

If a RN has a futile cycle, it also has an integer futile cycle
$\vv\in\N_0^\RR$, since $\SM$ has integer entries and thus its kernel has a
rational basis, which can be scaled with the least common denominator to
have integer entries.

A pair $(X',\RR')$ is a \emph{subnetwork} of $(X,\RR)$ if $X'\subseteq X$,
$\RR'\subseteq\RR$, and $\supp(r)\subseteq X'$ implies $r\in\RR'$.  We say
that a property $P$ of a RN is \emph{hereditary} if ``$(X,\RR)$ has $P$''
implies that every subnetwork ``$(X',\RR')$ has $P$''.

Chemical reactions are subject to thermodynamic constraints that are a
direct consequence of the conservation of energy, the conservation of mass,
and the reversibility of chemical reactions. In the context of chemistry,
conservation of mass is of course a consequence of the conservation of
atoms throughout a chemical reaction. In the following sections, we
investigate how these physical principles constrain RNs.  Since we have
introduced RNs in terms of abstract molecules and reactions,
Equ.~(\ref{eq:reaction}), we express the necessary conditions in terms of
the stoichiometric matrix~$\SM$, which fully captures only the proper part
of the RN.  Throughout this work, therefore, \emph{we assume that $(X,\RR)$
  is a closed RN, unless explicitly stated otherwise.}

\subsection{Thermodynamic constraints} 

\subsubsection*{Reaction energies and perpetuum mobiles}

Every chemical reaction $r$ is associated with a change in the Gibbs free
energy of educts and products. We therefore introduce a vector of
\emph{reaction (Gibbs free) energies} $\gf \in \R^\RR$ and write
$(X,\RR,\gf)$ for a RN endowed with reaction energies. The reaction energy
for an overall reaction is the total energy of the individual reactions
involved. In terms of $\vv\in\R^\RR$, it can be expressed as
\begin{equation}
  \sum_{r\in\RR} \gf_r\vv_r = \gf^\trans \vv =
  \langle \gf,\vv\rangle ,
\end{equation}
where $\langle\cdot,\cdot\rangle$ denotes the scalar product on $\R^\RR$.

Futile cycles \emph{may} act as a chemical version of a perpetuum
mobile. This is the case whenever a flow $\vv > 0$ with zero formal net
reaction, $\SM \vv = 0$, increases or decreases energy, i.e., if
$\langle \gf,\vv\rangle \neq 0$.
\begin{definition} 
  Let $(X,\RR,\gf)$ be a RN with reaction
  energies.  A flow $\vv > 0$ is a \emph{perpetuum mobile} if $\SM\vv=0$
  and $\langle\gf,\vv\rangle \neq 0$.
\end{definition}
The classical concept of a perpetuum mobile decreases its energy,
$\langle\gf,\vv\rangle < 0$, thereby ``creating'' energy for its
environment.  An ``anti'' perpetuum mobile with $\langle\gf,\vv\rangle > 0$
would ``annihilate'' energy. Either situation violates energy conservation
and thus cannot be allowed in a chemical RN. Obviously, there is no
perpetuum mobile if $(X,\RR)$ does not admit a futile cycle.

In fact, thermodynamics dictates that Gibbs free energy is a state
function. Two parallel flows $\mathbf{v_1}$ and $\mathbf{v_2}$ therefore
must have the same associated net reaction energies. That is,
$\SM\mathbf{v_1}=\SM\mathbf{v_2}$ implies
$\langle \gf, \mathbf{v^1}\rangle = \langle
\gf,\mathbf{v^2}\rangle$. Equivalently, any vector
$\vv=\mathbf{v^1}-\mathbf{v^2} \in \R^\RR$ with $\SM\vv=0$ must satisfy
$\langle \gf,\vv\rangle =0$.  That is, $\gf \in (\ker \SM)^\perp$.
\begin{definition}
  \label{def:thdyn}
  Let $(X,\RR,\gf)$ be a RN with reaction energies.  Then $(X,\RR,\gf)$ is
  \emph{thermodynamic} if $\vv\in\R^\RR$ and $\SM\vv=0$ imply
  $\langle \gf,\vv\rangle =0$, that is, if $\gf \in (\ker \SM)^\perp$.
\end{definition}
Let $(X,\RR,\gf)$ be thermodynamic, $(X',\RR')$ be a subnetwork of
$(X,\RR)$, and $\gf'$ be the restriction of $\gf$ to $\RR'$. Then
$\vv'\in\R^{\RR'}$ corresponds to $\vv\in\R^{\RR}$ with
$\supp(\vv)\subseteq\RR'$, and thus $\vv'\in\R^{\RR'}$ and $\SM'\vv'=0$
imply $\SM\vv=0$ and further
$\langle \gf',\vv'\rangle=\langle \gf,\vv\rangle =0$. Hence
$(X',\RR',\gf')$ is again thermodynamic.

We note that the reaction energies of a reaction $r$ and its reverse
$\bar r$ necessarily cancel:
\begin{lemma} \label{fact:g-reverse}
  If $r$ and $\bar r$ are reverse reactions in a thermodynamic network
  $(X,\RR,\gf)$, then $\gf_{\bar r}=-\gf_r$.
\end{lemma}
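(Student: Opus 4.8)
The plan is to produce one explicit element of $\ker\SM$ and then read the claim off Definition~\ref{def:thdyn}. The only preliminary observation needed is that the column of $\SM$ belonging to $\bar r$ is the negative of the column belonging to $r$. This is immediate from the definition of a reverse reaction: for every $x\in X$ we have $\SM_{x\bar r}=s^+_{x\bar r}-s^-_{x\bar r}=s^-_{xr}-s^+_{xr}=-\SM_{xr}$.

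Next I would consider the vector $\vv\in\R^\RR$ with $\vv_r=\vv_{\bar r}=1$ and all other entries $0$ (in the degenerate case $r=\bar r$ this reads $\vv_r=2$, which does not affect the argument). By the observation above, $\SM\vv$ equals the $r$-th column of $\SM$ plus the $\bar r$-th column of $\SM$, hence $\SM\vv=0$ and $\vv\in\ker\SM$. Since $(X,\RR,\gf)$ is thermodynamic, Definition~\ref{def:thdyn} forces $\langle\gf,\vv\rangle=0$. Expanding the scalar product gives $\gf_r+\gf_{\bar r}=0$ (resp.\ $2\gf_r=0$), i.e.\ $\gf_{\bar r}=-\gf_r$, as claimed.

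There is no real obstacle here; the single point worth flagging is that $\vv$ is used only as a kernel vector, and being \emph{thermodynamic} constrains $\langle\gf,\cdot\rangle$ on all of $\ker\SM$, so no sign condition on $\vv$ is needed. Equivalently, one may phrase the argument in the terminology introduced earlier: running $r$ and $\bar r$ once each is a flow parallel to the trivial (empty) flow, so both must carry the same net reaction energy, which is again $\gf_r+\gf_{\bar r}=0$.
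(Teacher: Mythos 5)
Your proof is correct and follows exactly the paper's argument: take $\vv$ with $\vv_r=\vv_{\bar r}=1$ and all other entries zero, note $\SM\vv=0$, and apply Definition~\ref{def:thdyn} to get $\gf_r+\gf_{\bar r}=0$. The extra details you supply (the column identity $\SM_{x\bar r}=-\SM_{xr}$ and the degenerate case $r=\bar r$) are fine but not needed beyond what the paper already does.
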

\begin{proof}
  If $r$ and $\bar r$ are reverse reactions, then $\vv$
  with $\vv_r=\vv_{\bar r}=1$ (and $\vv_{r'}=0$
  otherwise) satisfies $\SM\vv=0$. Thus
  $\langle\gf,\vv\rangle = \gf_r+\gf_{\bar r}=0$.
\end{proof}

\subsubsection*{Digression: molecular energies and {Hess' Law}}

Every molecular species $x\in X$ has an associated \emph{Gibbs free energy
  of formation}. For notational simplicity, we write $\GF_x$ instead of the
commonly used symbol $G_\mathrm{f}(x)$. The corresponding vector of
\emph{molecular energies} is denoted by $\GF \in \R^X$.  Molecular energies
and reactions energies $\gf \in \R^\RR$ are related by Hess' law: For every
reaction $r \in \RR$, it holds that
\begin{equation*}
  \gf_r = \sum_{x\in X} \GF_x (s^+_{xr}-s^-_{xr}) =
  \sum_{x\in X} \GF_x \, \SM_{xr} .
\end{equation*}
In matrix form, the relationship between reaction energies $\gf$ and
molecular energies $\GF$ amounts to 
\begin{equation} \label{eq:Hess}
\gf = \SM^\trans \GF .
\end{equation}

\begin{proposition}
  \label{pro:thermodynamic}
  Let $(X,\RR)$ be a RN and $\gf\in\R^\RR$ be a vector of reaction
  energies. Then $(X,\RR,\gf)$ is thermodynamic if and only if there
  is a vector of molecular energies $\GF\in\R^X$ satisfying Hess'
  law, Equ.~\eqref{eq:Hess}.
\end{proposition}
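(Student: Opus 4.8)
The proposition is exactly the assertion that the vector $\gf$ lies in the row space $\im(\SM^\trans)$ of $\SM$ if and only if it is orthogonal to $\ker\SM$; that is, it is the fundamental linear-algebra identity $(\ker\SM)^\perp = \im(\SM^\trans)$ specialised to $\gf$. The plan is to prove the two implications separately.

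For the ``if'' direction I would argue directly. Assume $\gf=\SM^\trans\GF$ for some $\GF\in\R^X$. Then for every $\vv\in\R^\RR$ with $\SM\vv=0$,
\[
  \langle\gf,\vv\rangle = \langle\SM^\trans\GF,\vv\rangle = \langle\GF,\SM\vv\rangle = \langle\GF,0\rangle = 0 ,
\]
so $\gf\in(\ker\SM)^\perp$, which by Definition~\ref{def:thdyn} means $(X,\RR,\gf)$ is thermodynamic. This step is immediate from the adjointness relation $\langle\SM^\trans\GF,\vv\rangle=\langle\GF,\SM\vv\rangle$.

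For the converse I would first note that $\im(\SM^\trans)^\perp=\ker\SM$: a vector $\vv$ is orthogonal to every $\SM^\trans\GF$ iff $\langle\GF,\SM\vv\rangle=0$ for all $\GF\in\R^X$, i.e.\ iff $\SM\vv=0$. Because $X$ and $\RR$ are finite, $\R^X$ and $\R^\RR$ are finite-dimensional real inner-product spaces, so the orthogonal complement is an involution on linear subspaces, $(W^\perp)^\perp=W$. Applying this to $W=\im(\SM^\trans)$ gives $\im(\SM^\trans)=\big(\im(\SM^\trans)^\perp\big)^\perp=(\ker\SM)^\perp$. Hence, if $(X,\RR,\gf)$ is thermodynamic, then $\gf\in(\ker\SM)^\perp=\im(\SM^\trans)$, so there exists $\GF\in\R^X$ with $\gf=\SM^\trans\GF$, which is precisely Hess' law, Equ.~\eqref{eq:Hess}.

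I do not expect any substantial obstacle here: the entire content is the standard orthogonality between the null space and the row space of a matrix, and the only point requiring (minor) care is that the involutivity $(W^\perp)^\perp=W$ needs the ambient spaces to be finite-dimensional, which is guaranteed by the standing finiteness assumption on $X$ and $\RR$. If one preferred an explicit construction of $\GF$ rather than the abstract argument, one could take $\GF=(\SM^+)^\trans\gf$ with $\SM^+$ the Moore--Penrose pseudoinverse of $\SM$, or equivalently solve the consistent system $\SM^\trans\GF=\gf$ after restricting $\SM$ to the isomorphism between its row space and its column space; but the two-line orthogonality argument is cleaner and suffices.
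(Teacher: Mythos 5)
Your proof is correct and follows essentially the same route as the paper, which simply invokes the identity $(\ker\SM)^\perp=\im\SM^\trans$ and reads off the equivalence from Definition~\ref{def:thdyn}. You merely spell out the standard orthogonality argument (adjointness plus the involutivity of $\perp$ in finite dimensions) that the paper takes for granted.
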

\begin{proof}
  By Definition~\ref{def:thdyn}, $(X,\RR,\gf)$ is thermodynamic if
  $\gf \in (\ker \SM)^\perp = \im \SM^\trans$, that is, if there is $\GF$
  such that $\gf = \SM^\trans \GF$, satisfying Hess's law.
\end{proof}
Note that the vector of molecular energies $\GF$ is \emph{not}
uniquely determined by $\gf$ in general. 

\subsubsection*{Reversible and irreversible networks}

To begin with, we consider purely reversible or irreversible RNs.
\begin{definition} 
  A RN $(X,\RR)$ is \emph{reversible} if $r\in\RR$ implies $\bar r\in\RR$
  and \emph{irreversible} if $r\in\RR$ implies $\bar r\notin\RR$.
\end{definition}
In reversible networks, general vectors $\vv \in \R^\RR$ have corresponding
flows $\mathbf{\tilde v} \ge 0$ with the same net reactions and, in the
case of thermodynamic networks, with the same energies.
\begin{lemma}
  \label{lem}
  Let $(X,\RR,\gf)$ be a reversible RN (with reaction energies), and let
  $\vv \in \R^\RR$ be a vector.  Then there is a flow
  $\mathbf{\tilde v}\ge0$ such that $\SM \mathbf{\tilde v} = \SM \vv$.  If
  $(X,\RR,\gf)$ is thermodynamic, then further
  $\langle \gf , \mathbf{\tilde v} \rangle = \langle \gf , \vv \rangle$.
\end{lemma}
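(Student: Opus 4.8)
The plan is to construct $\mathbf{\tilde v}$ by processing $\RR$ in pairs, using the fact that in a reversible closed RN the map $r\mapsto\bar r$ is a fixed-point-free involution on $\RR$: one has $\bar{\bar r}=r$ directly from the defining equations of the reverse reaction, while $r=\bar r$ would force $s^-_{xr}=s^+_{xr}$ for all $x$, contradicting that $r$ (being a reaction of a closed RN) is proper. Hence $\RR$ splits into disjoint blocks $\{r,\bar r\}$. The algebraic fact I would use is that the $r$-th and $\bar r$-th columns of $\SM$ are negatives of each other, since $\SM_{x\bar r}=s^+_{x\bar r}-s^-_{x\bar r}=s^-_{xr}-s^+_{xr}=-\SM_{xr}$; likewise, if the network is thermodynamic, $\gf_{\bar r}=-\gf_r$ by Lemma~\ref{fact:g-reverse}.

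For each block $\{r,\bar r\}$ I would put $a:=\vv_r-\vv_{\bar r}$ and define $\tilde v_r:=[a]_+$ and $\tilde v_{\bar r}:=[-a]_+$ (this prescription does not depend on which element of the block is called $r$, since the block $\{\bar r,r\}$ yields the value $-a$). Then $\mathbf{\tilde v}\ge0$ by construction, and the block's contribution to $\SM\mathbf{\tilde v}$ is $(\tilde v_r-\tilde v_{\bar r})\,\SM_{\cdot r}=a\,\SM_{\cdot r}$, which equals the block's contribution $(\vv_r-\vv_{\bar r})\,\SM_{\cdot r}=a\,\SM_{\cdot r}$ to $\SM\vv$; summing over all blocks gives $\SM\mathbf{\tilde v}=\SM\vv$. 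For the energy statement I would then avoid repeating the bookkeeping and simply observe that $\SM(\mathbf{\tilde v}-\vv)=0$, so Definition~\ref{def:thdyn} immediately yields $\langle\gf,\mathbf{\tilde v}-\vv\rangle=0$, i.e.\ $\langle\gf,\mathbf{\tilde v}\rangle=\langle\gf,\vv\rangle$. (Alternatively, the same per-block computation with $\gf_{\bar r}=-\gf_r$ gives $\langle\gf,\mathbf{\tilde v}\rangle=\sum_{\text{blocks}}a\,\gf_r=\langle\gf,\vv\rangle$.)

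I do not expect a genuine obstacle here. The only point that needs care is the legitimacy of the block decomposition, i.e.\ that $\bar r$ exists in $\RR$ (reversibility), is uniquely determined by $r$ (reactions are identified with their stoichiometric relations), and satisfies $\bar r\ne r$ (closedness, the standing assumption); once this is in place, the whole statement reduces to the two sign identities $\SM_{\cdot\bar r}=-\SM_{\cdot r}$ and $\gf_{\bar r}=-\gf_r$ together with the trivial decomposition $a=[a]_+-[-a]_+$.
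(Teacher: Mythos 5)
Your proof is correct and rests on the same key idea as the paper's: in a reversible network the column of $\bar r$ is the negative of the column of $r$, so negative coefficients can be shifted onto reverse reactions (the paper does this via the split $\vv=\mathbf{v^1}-\mathbf{v^2}$ and the reverse flow $\mathbf{\bar v^2}$ rather than your per-block cancellation, but the mechanism is identical). Your handling of the energy claim---deducing it directly from $\SM(\mathbf{\tilde v}-\vv)=0$ and Definition~\ref{def:thdyn}---is a slight shortcut compared to the paper's appeal to Lemma~\ref{fact:g-reverse}, but both are valid.
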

\begin{proof}
  If $\vv \ge 0$, there is nothing to show. Otherwise, there are two flows
  $\mathbf{v^1} \ge 0$ and $\mathbf{v^2} > 0$ such that
  $\vv=\mathbf{v^1}-\mathbf{v^2}$. Since $(X,\RR)$ is reversible, each
  reaction $r\in\RR$ has a reverse $\bar r$, and we define the reverse flow
  $\mathbf{\bar v^2}>0$ such that
  $\mathbf{\bar v^2}_{r} = \mathbf{v}^\mathbf{2}_{\bar r}$. By
  construction, it satisfies $\SM \mathbf{\bar v^2} = - \SM \mathbf{v^2}$.
  Now consider the flow
  $\mathbf{\tilde v} = \mathbf{v^1}+\mathbf{\bar v^2} > 0$.
  It satisfies
  $$\SM\mathbf{\tilde v}= \SM (\mathbf{v^1}+\mathbf{\bar v^2}) = \SM
  (\mathbf{v^1} - \mathbf{v^2}) = \SM \vv .$$ If the network is
  thermodynamic, then the reverse flow satisfies
  $\langle \gf , \mathbf{\bar v^2} \rangle = - \langle \gf, \mathbf{v^2}
  \rangle$, by Lemma~\ref{fact:g-reverse}.  Hence,\\
  $\displaystyle \langle \gf , \mathbf{\tilde{v}} \rangle = \langle \gf ,
  \mathbf{v^1} + \mathbf{\bar v^2} \rangle = \langle \gf , \mathbf{v^1} -
  \mathbf{v^2} \rangle = \langle \gf , \vv \rangle$.
\end{proof}

By definition, a thermodynamic network cannot contain a perpetuum mobile.
Conversely, by the result below, if a reversible network is not
thermodynamic, then it contains a perpetuum mobile.

\begin{proposition}
  \label{prop:TD=PM}
  Let $(X,\RR,\gf)$ be a reversible RN with reaction energies.
  Then, the following two statements are equivalent:
  \begin{itemize}
  \item[(i)] $(X,\RR,\gf)$ is thermodynamic.
  \item[(ii)] $(X,\RR,\gf)$ contains no perpetuum mobile.
  \end{itemize}
\end{proposition}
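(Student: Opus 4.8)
The plan is to prove the two implications separately, using the equivalences already established. The direction (i)$\Rightarrow$(ii) is essentially immediate: if $(X,\RR,\gf)$ is thermodynamic, then by Definition~\ref{def:thdyn} every $\vv\in\R^\RR$ with $\SM\vv=0$ satisfies $\langle\gf,\vv\rangle=0$; in particular no flow $\vv>0$ with $\SM\vv=0$ can have $\langle\gf,\vv\rangle\neq0$, so there is no perpetuum mobile. (This is already noted in the text right before the statement, so I would just restate it in one sentence.) Note that this direction does not use reversibility at all.

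For the contrapositive of (ii)$\Rightarrow$(i), suppose $(X,\RR,\gf)$ is \emph{not} thermodynamic. Then there exists some $\vv\in\R^\RR$ with $\SM\vv=0$ but $\langle\gf,\vv\rangle\neq0$. The issue is that this $\vv$ need not be a flow (it may have negative entries), so it is not yet a perpetuum mobile in the sense of the definition. This is exactly where reversibility enters: I would invoke Lemma~\ref{lem} to produce a flow $\mathbf{\tilde v}\ge0$ with $\SM\mathbf{\tilde v}=\SM\vv=0$ and, since the network is thermodynamic --- wait, it is \emph{not} thermodynamic here, so I cannot use the energy-preservation clause of Lemma~\ref{lem} directly.

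The fix is to look more carefully at the proof of Lemma~\ref{lem}. Writing $\vv=\mathbf{v^1}-\mathbf{v^2}$ with $\mathbf{v^1},\mathbf{v^2}\ge0$ and setting $\mathbf{\tilde v}=\mathbf{v^1}+\mathbf{\bar v^2}$ (the reverse flow of $\mathbf{v^2}$), we get a genuine flow with $\SM\mathbf{\tilde v}=0$. For its energy we compute $\langle\gf,\mathbf{\tilde v}\rangle=\langle\gf,\mathbf{v^1}\rangle+\langle\gf,\mathbf{\bar v^2}\rangle$. Here I \emph{do} get to use Lemma~\ref{fact:g-reverse}, because that lemma only requires $r$ and $\bar r$ to be reverse reactions in a network where the single two-reaction cycle has vanishing energy --- but actually Lemma~\ref{fact:g-reverse} as stated assumes the whole network is thermodynamic. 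So instead I would argue directly: the flow $\mathbf{w}$ with $\mathbf{w}_r=\mathbf{w}_{\bar r}=1$ satisfies $\SM\mathbf{w}=0$, hence is a flow with zero net reaction; if $\gf_r+\gf_{\bar r}\neq0$ for some reversible pair, that $\mathbf{w}$ is itself a perpetuum mobile and we are done. So we may assume $\gf_{\bar r}=-\gf_r$ for every reversible pair, which gives $\langle\gf,\mathbf{\bar v^2}\rangle=-\langle\gf,\mathbf{v^2}\rangle$ and therefore $\langle\gf,\mathbf{\tilde v}\rangle=\langle\gf,\mathbf{v^1}-\mathbf{v^2}\rangle=\langle\gf,\vv\rangle\neq0$. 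Thus $\mathbf{\tilde v}\ge0$ with $\SM\mathbf{\tilde v}=0$ and $\langle\gf,\mathbf{\tilde v}\rangle\neq0$; it remains only to check $\mathbf{\tilde v}\neq0$, which holds because $\langle\gf,\mathbf{\tilde v}\rangle\neq0$ forces $\mathbf{\tilde v}\neq0$. Hence $\mathbf{\tilde v}$ is a perpetuum mobile, contradicting (ii).

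The main obstacle is the one flagged above: the natural witness of non-thermodynamicity is a signed vector, not a flow, and the cleanest way to convert it is to essentially re-run the construction inside Lemma~\ref{lem} rather than cite it as a black box, since the energy-bookkeeping step needs the reversible-pair cancellation $\gf_{\bar r}=-\gf_r$, and that cancellation is only \emph{available for free} once we have ruled out the trivial two-reaction perpetuum mobile. Structuring the argument as a case distinction --- either some reversible pair already violates $\gf_{\bar r}=-\gf_r$ (done), or none does and the symmetrization argument applies --- keeps everything rigorous without needing the full strength of Lemma~\ref{fact:g-reverse}.
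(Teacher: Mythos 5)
Your proposal is correct and follows the same overall route as the paper: the direction (i)$\Rightarrow$(ii) is definitional, and (ii)$\Rightarrow$(i) is proved by contraposition, converting a signed kernel vector $\vv$ with $\langle\gf,\vv\rangle\neq0$ into a nonnegative one via the symmetrization $\mathbf{\tilde v}=\mathbf{v^1}+\mathbf{\bar v^2}$. The paper's own proof is a one-liner that cites Lemma~\ref{lem} as a black box, and you have correctly spotted that this citation is formally illegitimate: the energy-preservation clause of Lemma~\ref{lem} is stated only under the hypothesis that the network is thermodynamic, which is precisely what is being denied in the contrapositive. Your fix --- first disposing of the case where some reversible pair has $\gf_r+\gf_{\bar r}\neq0$ (in which case the two-reaction cycle is itself a perpetuum mobile), and only then using the cancellation $\gf_{\bar r}=-\gf_r$ to carry the nonzero energy over to $\mathbf{\tilde v}$ --- closes this gap cleanly. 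An equivalent repair, closer in spirit to the paper, is to observe that the proof of Lemma~\ref{lem} never uses the full thermodynamic hypothesis but only the pairwise cancellation $\gf_{\bar r}=-\gf_r$, so the lemma could be restated with that weaker hypothesis; your case distinction then shows the weaker hypothesis may be assumed without loss of generality. Either way, your argument is rigorous where the paper's is slightly too terse, and the final step (noting $\mathbf{\tilde v}\neq0$ follows from $\langle\gf,\mathbf{\tilde v}\rangle\neq0$) is also handled correctly.
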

\begin{proof}
  Suppose $(X,\RR,\gf)$ is not thermodynamic.  That is, there is
  $\vv \in \R^\RR$ with $\SM\vv=0$ and $\langle \vv, \gf \rangle \neq 0$.
  By Lemma~\ref{lem}, there is $\mathbf{\tilde v}\ge0$ with
  $\SM \mathbf{\tilde v}=0$ and
  $\langle \mathbf{\tilde v}, \gf \rangle \neq 0$, that is, a perpetuum
  mobile.
\end{proof}

The exclusion of a perpetuum mobile is not sufficient in non-reversible
systems.
\begin{example} 
Consider the following RN (with reaction
energies $\gf$):
\begin{equation} 
  \begin{split}
    1\colon \quad \ce{A}&\longrightarrow \ce{B} , \quad \gf_1=-1 , \\
    \bar{1}\colon \quad \ce{B}&\longrightarrow \ce{A} ,
    \quad \gf_{\bar 1}=+1 , \\
    2\colon \quad \ce{B}&\longrightarrow \ce{C} , \quad \gf_2=-1 , \\
    3\colon \quad \ce{A}&\longrightarrow \ce{C} , \quad \gf_3=-1 .
  \end{split}
  \label{eq:example1}
\end{equation}
It contains one futile cycle, \\
$\ce{A} \stackrel 1 \to \ce{B} \stackrel {\bar 1} \to \ce{A}$,
$\vv=(1,1,0,0)^\trans$ with $\langle \gf,\vv\rangle = 0$, \\
but no perpetuum mobile.
However, it contains two parallel flows with different energies, \\
$\ce{A} \stackrel 1 \to \ce{B} \stackrel 2 \to \ce{C}$,
$\vv=(1,0,1,0)^\trans$ with $\langle \gf,\vv\rangle = -2$, \\
$\ce{A} \stackrel 3 \to \ce{C}$, $\vv=(0,0,0,1)^\trans$ with
$\langle \gf,\vv\rangle = -1$. \\
Hence, the RN (with reaction energies $\gf$) is not thermodynamic.  By
setting $\gf_3=-2$, it can be made thermodynamic.
\end{example}

Many RN models are non-reversible, i.e., they contain
irreversible reactions whose reverse reactions are so slow that they
are neglected. From a thermodynamic perspective, irreversible reactions $r$
must be \emph{exergonic}, i.e., $\gf_r<0$. We first consider the extreme
case that all reactions $r\in\RR$ are irreversible.

\begin{proposition} 
  Let $(X,\RR,\gf)$ be an irreversible RN with reaction energies.
  Then, every futile cycle is a perpetuum mobile. Hence, if
  $(X,\RR,\gf)$ is thermodynamic, then there are no futile cycles.
\end{proposition}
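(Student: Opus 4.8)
The plan is to reduce the statement to the sign pattern of $\gf$. The key input is the physical principle recalled just before the proposition: because $(X,\RR)$ is irreversible, no reaction occurs together with its reverse, and thermodynamics forces every reaction to be exergonic, i.e.\ $\gf_r<0$ for all $r\in\RR$, so $\gf\ll 0$. This is the one point where a physical axiom rather than pure linear algebra enters; if one wants the statement to rest only on the formal data $(X,\RR,\gf)$, then $\gf\ll 0$ should be listed as a hypothesis. The argument below uses nothing more than $\gf\ll 0$.

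Next I would compute the reaction energy of an arbitrary flow. Let $\vv>0$, meaning $\vv\ge 0$ with $\vv\neq 0$, and pick $r_0\in\RR$ with $\vv_{r_0}>0$. Then $\langle\gf,\vv\rangle=\sum_{r\in\RR}\gf_r\vv_r\le \gf_{r_0}\vv_{r_0}<0$, since every summand $\gf_r\vv_r$ is a product of a negative number and a non-negative number, hence $\le 0$, while the $r_0$-term is strictly negative. In particular $\langle\gf,\vv\rangle\neq 0$ for \emph{every} nontrivial flow.

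Applying this to a futile cycle finishes the first claim: by definition a futile cycle is a flow $\vv>0$ with $\SM\vv=0$, and the computation above gives $\langle\gf,\vv\rangle\neq 0$; together these are exactly the defining conditions of a perpetuum mobile. For the second claim, note that a thermodynamic network contains no perpetuum mobile at all: if $(X,\RR,\gf)$ is thermodynamic and $\SM\vv=0$, then $\gf\in(\ker\SM)^\perp$ forces $\langle\gf,\vv\rangle=0$ by Definition~\ref{def:thdyn}. Since in the irreversible case every futile cycle would be a perpetuum mobile, a thermodynamic irreversible network can have no futile cycle whatsoever.

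I do not foresee a real obstacle here. The only things to be careful about are the conventions $\vv>0$ versus $\vv\gg 0$ --- it is exactly the existence of at least one strictly positive coordinate that makes the scalar product nonzero --- and being explicit that the step ``irreversible $\Rightarrow$ exergonic'' is a modelling assumption imported from thermodynamics rather than a consequence of the bare stoichiometric setup.
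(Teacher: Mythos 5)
Your proof is correct and follows essentially the same route as the paper: both arguments use the convention, stated just before the proposition, that irreversible reactions are exergonic ($\gf_r<0$), observe that any flow $\vv>0$ then has $\langle\gf,\vv\rangle<0$, and conclude that a futile cycle is a perpetuum mobile, which a thermodynamic network cannot contain. Your explicit remark that ``irreversible $\Rightarrow$ exergonic'' is a modelling assumption rather than a consequence of the stoichiometric data is a fair and accurate reading of the paper's setup.
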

\begin{proof}
  Consider a futile cycle, that is, a flow $\vv > 0$ with $\SM\vv=0$.
  Since all reactions are exergonic, $\vv_r>0$ implies $\gf_r<0$ and
  further $\langle \gf, \vv\rangle < 0$, that is, $\vv$ is a perpetuum
  mobile. Now, if there is a futile cycle and hence a perpetuum mobile,
  then the network is not thermodynamic.
\end{proof}

\subsubsection*{Thermodynamic soundness} 

We next ask whether a RN $(X,\RR)$ can always be endowed with a vector of
reaction energies $\gf$ such that $(X,\RR,\gf)$ is
thermodynamic. 
\begin{definition}
  A RN $(X,\RR)$ is \emph{thermodynamically sound} if there is a
  vector of reaction energies $\gf$ such that
  $(X,\RR,\gf)$ is a thermodynamic network.
\end{definition}
We note that thermodynamic soundness is a hereditary property of RNs, since
we have seen above that if $(X,\RR,\gf)$ is a thermodynamic network so are
all its subnetworks $(X',\RR',\gf')$.

Again, we first consider purely reversible or irreversible RNs.
\begin{proposition}
  \label{pro:rev} 
  Every reversible RN is thermodynamically sound. 
\end{proposition}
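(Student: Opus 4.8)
The plan is to reduce the statement directly to Proposition~\ref{pro:thermodynamic}, which characterizes thermodynamic networks as exactly those whose reaction energies arise from molecular energies via Hess' law. Concretely, to prove that a reversible RN $(X,\RR)$ is thermodynamically sound I only need to exhibit \emph{some} $\gf\in\R^\RR$ with $\gf\in\im\SM^\trans$; then $(X,\RR,\gf)$ is thermodynamic by Definition~\ref{def:thdyn}, and the claim follows.

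First I would pick an arbitrary vector of molecular energies $\GF\in\R^X$ --- any choice works, e.g.\ $\GF=\mathbf 0$ --- and set $\gf:=\SM^\trans\GF$. For every $\vv\in\ker\SM$ this gives $\langle\gf,\vv\rangle=\langle\SM^\trans\GF,\vv\rangle=\langle\GF,\SM\vv\rangle=0$, so $\gf\in(\ker\SM)^\perp$ and $(X,\RR,\gf)$ is thermodynamic; hence $(X,\RR)$ is thermodynamically sound. Strictly speaking this argument never used reversibility, so I would next record where reversibility is actually pertinent. A reversible RN necessarily contains many futile cycles: for each $r\in\RR$ the flow $\vv$ with $\vv_r=\vv_{\bar r}=1$ (and all other entries zero) satisfies $\SM\vv=0$. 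The point is that none of these is a perpetuum mobile under the constructed $\gf$: reversing a reaction negates its stoichiometric column, so $\gf=\SM^\trans\GF$ automatically satisfies $\gf_{\bar r}=-\gf_r$, in agreement with Lemma~\ref{fact:g-reverse}, whence $\langle\gf,\vv\rangle=\gf_r+\gf_{\bar r}=0$. Thus reversibility guarantees that the unavoidable futile cycles carry no net energy, which is precisely the obstruction one might have feared.

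I do not expect any genuine difficulty here: once Proposition~\ref{pro:thermodynamic} is available, the statement is essentially immediate, the only ``choice'' being an inconsequential choice of $\GF$. The real work is postponed to the irreversible (and mixed) case, where one additionally insists that irreversible reactions be exergonic, $\gf_r<0$; there the interplay between that sign constraint and the futile cycles in $\ker\SM$ is what turns thermodynamic soundness into a substantive condition on $\SM$.
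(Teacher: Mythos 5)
Your proof is correct and is essentially the paper's own argument: both simply exhibit some $\gf\in\im\SM^\trans=(\ker\SM)^\perp$, and your explicit detour through Hess' law (Prop.~\ref{pro:thermodynamic}) is just a restatement of that identity. The one small difference is that the paper uses $\SM\neq 0$ (which holds since the RN is closed and $\RR\neq\emptyset$) to pick a \emph{non-zero} $\gf$, whereas your suggested choice $\GF=\mathbf{0}$ yields $\gf=\mathbf{0}$ — admissible under the paper's literal definition, but if a non-trivial energy vector is wanted one should take $\GF\notin\ker\SM^\trans$.
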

\begin{proof}
  Since $\SM \neq 0$ (the zero matrix), obviously
  $(\ker \SM)^\perp = \im \SM^\trans \neq \{0\}$ (the zero vector), and
  hence there is a non-zero $\gf \in (\ker \SM)^\perp$.
\end{proof}

\begin{theorem} 
  An irreversible RN is thermodynamically sound
  if and only if there are no futile cycles.
\end{theorem}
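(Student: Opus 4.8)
The plan is to prove both directions by relating futile cycles to the geometry of $\ker\SM$ and the cone of flows $\R^\RR_{\ge}$. For the ``only if'' direction, suppose the irreversible RN $(X,\RR)$ is thermodynamically sound, so there is $\gf$ with $\gf\in(\ker\SM)^\perp$. As noted in the excerpt (in the digression following the analog for irreversible networks), irreversibility forces each reaction to be exergonic in any thermodynamic realization is \emph{not} automatic --- rather, I should argue directly: if a futile cycle $\vv>0$ with $\SM\vv=0$ existed, then since $\gf\in(\ker\SM)^\perp$ we would get $\langle\gf,\vv\rangle=0$; but this alone is not yet a contradiction. So the cleaner route for ``only if'' is the contrapositive combined with the preceding proposition: if there \emph{is} a futile cycle, then by the proposition immediately above (every futile cycle in an irreversible thermodynamic network is a perpetuum mobile, forcing non-thermodynamicity), no choice of $\gf$ can make the network thermodynamic, i.e., the RN is not thermodynamically sound. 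Wait --- that proposition already assumed a fixed $\gf$ making exergonicity hold. I will instead need: for \emph{any} $\gf\in(\ker\SM)^\perp$ and any futile cycle $\vv$, $\langle\gf,\vv\rangle=0$, which contradicts nothing by itself. Hence the real content is the ``if'' direction, and ``only if'' must come from a sign argument I develop below.

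Let me restructure. The key object is the relation between $\ker\SM\cap\R^\RR_{\ge}$ (flows that are futile cycles) and separating hyperplanes. For the ``if'' direction: assume there are no futile cycles, i.e., $\ker\SM\cap\R^\RR_{\ge}=\{0\}$. I want to produce $\gf\in(\ker\SM)^\perp$; but more is needed --- for the network to be a sensible thermodynamic model with irreversible reactions, I actually want $\gf\ll 0$ (every irreversible reaction exergonic) \emph{and} $\gf\in(\ker\SM)^\perp$. The statement as phrased only demands thermodynamic soundness, i.e., existence of \emph{some} $\gf\in(\ker\SM)^\perp$, which by Proposition~\ref{pro:rev}'s argument is trivial whenever $\SM\neq 0$. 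So the theorem must implicitly be reading ``thermodynamically sound'' together with the physical requirement $\gf\ll0$ for irreversible reactions --- or equivalently, the right reading is: there is $\gf$ with $\gf\in(\ker\SM)^\perp$ and $\gf_r<0$ for all $r$ (since all reactions are irreversible). I will adopt that reading. Then the plan is a separating-hyperplane / Gordan-type argument: the system ``$\exists \gf: \SM^\trans\text{-coker condition} \wedge \gf\ll0$'' is feasible iff its Gordan alternative ``$\exists \vv\ge 0,\ \vv\neq 0,\ \SM\vv=0$'' is infeasible --- which is exactly the absence of futile cycles.

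Concretely: First I would restate thermodynamic soundness for an irreversible RN as: there exists $\gf\in\R^\RR$ with $\gf\ll 0$ and $\gf\perp\ker\SM$, equivalently $\gf=\SM^\trans\GF$ for some $\GF\in\R^X$ (Hess' law, Proposition~\ref{pro:thermodynamic}) with all coordinates of $\SM^\trans\GF$ negative. Next, apply a theorem of the alternative (Stiemke's lemma, or Gordan): exactly one of the following holds --- (a) $\exists\,\GF\in\R^X$ with $\SM^\trans\GF\ll 0$; (b) $\exists\,\vv\in\R^\RR_{\ge},\ \vv\neq 0$, with $\SM\vv=0$. Absence of futile cycles is precisely the negation of (b), so it yields (a), giving the desired $\gf$; this proves ``if''. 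For the converse, (a) implies the negation of (b), i.e., if the RN is thermodynamically sound (via some $\gf\ll0$ perpendicular to $\ker\SM$, equivalently case (a)), then no nonzero nonnegative $\vv$ lies in $\ker\SM$, i.e., no futile cycles; this proves ``only if''. I would then remark that this recovers and sharpens the preceding proposition: in an irreversible thermodynamic network every futile cycle would be a perpetuum mobile, so soundness forbids them.

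The main obstacle is getting the theorem-of-the-alternative invocation exactly right: Stiemke's lemma states that either $\{y : A^\trans y > 0,\ A^\trans y \neq 0\}\ne\emptyset$ or $\{x \gg 0 : Ax = 0\}\ne\emptyset$ --- so I must be careful about which variant (strict vs.\ non-strict, $>0$ vs.\ $\gg 0$) matches ``$\gf\ll0$ everywhere'' versus ``$\vv>0$, i.e. $\vv\ge0,\vv\ne0$''. The clean matchup is: \emph{Gordan's theorem} --- exactly one of $\{\gf : \SM^\trans\GF = \gf \ll 0\}\ne\emptyset$ or $\{\vv\ge 0, \vv\ne 0 : \SM\vv=0\}\ne\emptyset$ holds --- applied to the matrix $-\SM^\trans$. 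Verifying that ``$\vv>0$'' in the paper's notation corresponds exactly to ``$\vv\ge0$, $\vv\ne0$'' in Gordan (it does, by the notational conventions fixed earlier in the excerpt) closes the gap. A secondary subtlety: I should state explicitly that for irreversible RNs ``thermodynamically sound'' is being used in the strengthened sense that the witnessing $\gf$ has $\gf_r<0$ for every (irreversible) reaction $r$; without this the ``if'' direction is vacuous and the ``only if'' direction fails. I would flag this reading at the start of the proof.
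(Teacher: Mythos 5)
Your final argument is correct and is essentially the paper's own proof: an application of Gordan's theorem of the alternative to $\SM^\trans$, pairing the existence of $\gf=\SM^\trans\GF\ll 0$ in $(\ker\SM)^\perp$ against the existence of a nonzero nonnegative $\vv\in\ker\SM$. You also correctly flag the one subtlety the paper leaves implicit, namely that for an irreversible network ``thermodynamically sound'' must be read as requiring the witnessing $\gf$ to be exergonic ($\gf\ll 0$), exactly the reading the paper's proof uses and later formalizes in Definition~\ref{def:mixed}.
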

\begin{proof}
  By Gordan's Theorem (which is in turn a special case of Minty's Lemma
  \cite{Minty:74}, see Appendix B in \cite{Mueller2019}): Either there is a
  negative $\gf \in (\ker \SM)^\perp$ or there is a non-zero, non-positive
  $\vv \in \ker \SM$.  That is, either there is $\gf \ll 0$ with
  $\gf \in (\ker \SM)^\perp$ (the network is thermodyn.\ sound) or there is
  $\vv < 0$ with $\vv \in \ker \SM$; equivalently, there is a futile cycle
  $\vv>0$.
\end{proof}

It is not always obvious from the specification of an artificial chemistry
model whether or not it is thermodynamically sound. As an example, we
consider the artificial chemistry proposed in \cite{Dondi:12}.  It
considers only binary reactions (two educts) that produce two products, aim
to ensure conservation of particle numbers.  In one variant, the networks
only containes irreversible, and thus exergonic reactions. It may produce,
for instance, the following set of reactions:
\begin{equation} 
  \begin{split}
\ce{A + B}&\longrightarrow \ce{C + D} , \\
\ce{A + C}&\longrightarrow \ce{E + B} , \\
\ce{B + D}&\longrightarrow \ce{F + A} , \\
\ce{E + F}&\longrightarrow \ce{A + B} ,
\end{split}
\label{eq:xampl2}
\end{equation}
and assume that all reactions are exergonic.  Their sum corresponds to the
flow $\vv = (1,1,1,1)^\trans \ge 0$ and yields the exergonic composite
reaction
\begin{equation*}
  \begin{split}
    & \ce{2A + 2B + C + D + E + F} \longrightarrow \\
    & \hspace{12ex} \ce{2A + 2B + C + D + E + F} ,
  \end{split}
\end{equation*}
that is, $\mathbf{Sv}=0$.  Thus the model admits a futile cycle composed
entirely of exergonic reactions and hence a perpetuum mobile. Thus it is
not thermodynamically sound. 

\subsubsection*{Mixed Networks}

In many applications, RNs contain both reversible and irreversible
reactions, $\RR = \RR_\mathrm{rev}\cupdot \RR_\mathrm{irr}$. There are two
interpretations of such models:
\begin{itemize}
\item[(a)] In the (\emph{lax}) sense used above, reversible reactions can be
  associated with arbitrary energies, while irreversible reactions are
  considered exergonic. That is, the reaction energies must satisfy
  $\gf_{r}<0$ for $r\in\RR_{\mathrm{irr}}$.
\item[(b)] In a \emph{strict} sense, the reaction energies assigned to
  irreversible reactions are much more negative than the reaction energies
  of the reversible ones.
  After scaling, one requires \\
  $|\gf_r|\le1$ (that is, $-1 \le \gf_r \le 1$) for
  $r\in\RR_{\mathrm{rev}}$
  and \\
  $|\gf_r|\ge\gamma$ (that is, $\gf_r \le -\gamma$) for
  $r\in\RR_{\mathrm{irr}}$
  and (large) $\gamma>1$. \\
  The intuition is that reactions $r$ with $\gf_r \ge \gamma$ can be
  neglected.
\end{itemize}
The following example shows that thermodynamic soundness differs in the lax
and strict senses.
  
\begin{example}
Consider the following RN
(with reaction energies $\gf$):
\begin{equation} 
  \begin{split}
    1\colon \quad \ce{A}&\longrightarrow \ce{B}, \quad \gf_1=+1, \\
    \bar 1\colon \quad \ce{B}&\longrightarrow \ce{A}, \quad \gf_{\bar 1}=-1, \\
    2\colon \quad \ce{B}&\longrightarrow \ce{C}, \quad \gf_2=-g, \\
    3\colon \quad \ce{C}&\longrightarrow \ce{A}, \quad \gf_3=-g,
  \end{split}
  \label{eq:example2}
\end{equation}
for some $g>0$.
It contains two futile cycles: \\
$\ce{A} \stackrel 1 \to \ce{B} \stackrel {\bar 1} \to \ce{A}$,
$\vv=(1,1,0,0)^\trans$ with $\langle \gf,\vv\rangle = 0$, \\
$\ce{A} \stackrel 1 \to \ce{B} \stackrel 2 \to \ce{C} \stackrel 3 \to
\ce{A}$,
$\vv=(1,0,1,1)^\trans$, $\langle \gf,\vv\rangle = 1-2g$. \\
By setting $g=1/2$, the RN can be made thermodynamic.  (Then the second
futile cycle is not a perpetuum mobile.)

However, the RN in (\ref{eq:example2}) \emph{cannot} be seen as the limit
of a thermodynamic, reversible network
$(\ce{A}\leftrightarrow\ce{B}\leftrightarrow\ce{C}\leftrightarrow\ce{A})$
for large $g$.  Thereby, one considers small $\gf_1,\gf_{\bar 1}$ and large
negative $\gf_2,\gf_3$ (and hence large positive
$\gf_{\bar 2},\gf_{\bar 3}$, that is, negligible reverse reactions
$\bar 2, \bar 3$).  Any such (limit of a) reversible RN contains a
perpetuum mobile (the second futile cycle); equivalently, it is not
thermodynamic.
\end{example}

\begin{definition} \label{def:mixed}
  A mixed network $(X, \RR_\mathrm{rev}\cupdot \RR_\mathrm{irr})$ is
  \emph{thermodynamically sound} if there are reaction energies
  $\gf$ such that $(X,\RR,\gf)$ is thermodynamic and $\gf_r<0$ for
  $r\in\RR_\mathrm{irr}$.
  
  $(X, \RR_\mathrm{rev}\cupdot \RR_\mathrm{irr})$ is \emph{strictly}
  thermodynamically sound if, for all $\gamma>1$, there are reaction
  energies $\gf$ such that $(X,\RR,\gf)$ is thermodynamic, $|\gf_r| \le 1$
  for $r\in\RR_\mathrm{rev}$, and $\gf_r<0$ with $|\gf_r| \ge \gamma$ for
  $r\in\RR_\mathrm{irr}$.
\end{definition}
The scaling condition can also be written in the form
\begin{equation}
  \min_{r\in\RR_{\textrm{irr}}} |\gf_r| \ge 
  \gamma \max_{r\in\RR_{\mathrm{rev}}} |\gf_r| \quad \text{for all }
  \gamma>1.
\end{equation}
A more detailed justification for strict thermodynamic soundness in mixed
networks will be given in the next subsection when considering open RNs.
Here, we focus on the relationship of thermodynamic soundness and futile
cycles.

\begin{theorem} \label{thm:mixed1}
  A mixed RN $(X,\RR_\mathrm{rev}\cupdot \RR_\mathrm{irr})$ is 
  thermodynamically sound if and only if there is no irreversible futile
  cycle.
\end{theorem}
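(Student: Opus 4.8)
The plan is to reduce the mixed case to a Gordan-type alternative, exactly as in the purely irreversible theorem above, but now applied to a cleverly chosen subspace. First I would fix some notation: write $\RR=\RR_\mathrm{rev}\cupdot\RR_\mathrm{irr}$, and recall that a mixed RN is thermodynamically sound (in the lax sense of Definition~\ref{def:mixed}) precisely when there exists $\gf\in(\ker\SM)^\perp=\im\SM^\trans$ with $\gf_r<0$ for every $r\in\RR_\mathrm{irr}$ (no sign constraint on the reversible coordinates). An \emph{irreversible futile cycle} should be read as a futile cycle $\vv>0$ whose support meets $\RR_\mathrm{irr}$, i.e.\ a flow $\vv\ge 0$ with $\SM\vv=0$ and $\vv_r>0$ for at least one $r\in\RR_\mathrm{irr}$; I would state this explicitly at the start of the proof so the alternative lines up.

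For the ``only if'' direction I would argue contrapositively and directly: suppose $\vv$ is an irreversible futile cycle, and let $\gf$ be any vector witnessing thermodynamic soundness. Since $\SM\vv=0$ and $\gf\in(\ker\SM)^\perp$, we get $\langle\gf,\vv\rangle=0$. But $\vv_r\ge 0$ for all $r$, $\gf_r<0$ for all $r\in\RR_\mathrm{irr}$, and $\vv_r>0$ for some $r\in\RR_\mathrm{irr}$; splitting the sum over $\RR_\mathrm{irr}$ and $\RR_\mathrm{rev}$ does not immediately give a contradiction because the reversible coordinates of $\gf$ are sign-unconstrained. The fix is to first restrict attention to the subnetwork supported on $\supp(\vv)$: on that subnetwork every reaction carries positive flux, so I would instead use a \emph{refined} separating argument, or better, invoke the correct alternative in the ``if'' direction and deduce ``only if'' from it. Concretely, I expect the clean route is to prove the equivalence in one shot via the following form of the theorem of the alternative (Stiemke/Gordan on a subspace with partial sign constraints, the ``generalized Gordan'' in Appendix~B of \cite{Mueller2019}): exactly one of the following holds --- (A) there is $\gf\in\im\SM^\trans$ with $\gf_{\RR_\mathrm{irr}}\ll 0$ and $\gf_{\RR_\mathrm{rev}}$ arbitrary; (B) there is $\vv\in\ker\SM$ with $\vv_{\RR_\mathrm{irr}}\ge 0$, $\vv_{\RR_\mathrm{irr}}\ne 0$, and $\vv_{\RR_\mathrm{rev}}=0$. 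Statement (A) is thermodynamic soundness; statement (B), after splitting $\vv$ on $\RR_\mathrm{rev}$ into its positive and negative parts and using reversibility of $\RR_\mathrm{rev}$ to flip the negative part into a flow (as in Lemma~\ref{lem}, restricted to the reversible reactions), is exactly the existence of an irreversible futile cycle $\vv'>0$ with $\vv'_r>0$ for some $r\in\RR_\mathrm{irr}$. Note a vector with $\vv_{\RR_\mathrm{rev}}=0$ is already a flow on the reversible part trivially, so the reversibility trick is only needed to convert a general $\ker\SM$ element into the stated normal form before applying the alternative --- I would present it in that order: state the alternative, identify (A) with soundness, then massage (B) into ``irreversible futile cycle.''

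The main obstacle is getting the precise form of the theorem of the alternative right: we need sign constraints on only the $\RR_\mathrm{irr}$ block of $\gf$ (strict negativity) and, dually, the complementary block of $\vv$ forced to vanish while the $\RR_\mathrm{irr}$ block of $\vv$ is merely nonnegative and nonzero. This is a Motzkin/Minty-style transposition theorem rather than plain Gordan, and the bookkeeping of which inequalities are strict, which are weak, and which coordinates are free has to be done carefully so that the two alternatives are genuinely exhaustive and mutually exclusive. Once that lemma is pinned down (citing Minty~\cite{Minty:74} / Appendix~B of \cite{Mueller2019}), the rest is the short reversibility manipulation already packaged in Lemma~\ref{lem} and the elementary sign computation above; hereditarity of the property is not needed here but could be remarked on. I would end by noting that in the strict sense the statement will need a different (quantitative) argument, deferred to the next result.
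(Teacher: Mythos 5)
Your overall strategy --- a Minty/Motzkin-type alternative with strict negativity on the $\RR_\mathrm{irr}$ block of $\gf$ and no constraint on the $\RR_\mathrm{rev}$ block --- is exactly the paper's, and the alternative you write down is the correct one: (A) $\gf\in\im\SM^\trans$ with $\gf_{\RR_\mathrm{irr}}\ll 0$, versus (B) a non-zero $\vv\in\ker\SM$ with $\vv_{\RR_\mathrm{irr}}\ge 0$ and $\vv_{\RR_\mathrm{rev}}=0$. The gap is in how you translate (B) back into the language of the theorem. You declare at the outset that an \emph{irreversible futile cycle} is ``a futile cycle $\vv>0$ whose support meets $\RR_\mathrm{irr}$,'' and you then claim that (B) is ``exactly'' the existence of such a cycle, invoking Lemma~\ref{lem} to massage a general kernel element into that form. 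That identification is false. Lemma~\ref{lem} can flip \emph{negative} reversible entries into positive entries on the reverse reactions, but it cannot make \emph{positive} reversible entries vanish; so a futile cycle that genuinely uses a reversible reaction does not yield a kernel vector with $\vv_{\RR_\mathrm{rev}}=0$. The intended meaning of ``irreversible futile cycle'' is a futile cycle with $\supp(\vv)\subseteq\RR_\mathrm{irr}$, i.e.\ one built entirely from irreversible reactions --- which is precisely what alternative (B) gives with no massaging at all. Your weaker reading (``support merely meets $\RR_\mathrm{irr}$'') is the hypothesis of Theorem~\ref{thm:mixed2}, which characterizes \emph{strict} soundness, not lax soundness.

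With your reading the theorem is simply not true, and the paper's own Example~(\ref{eq:example2}) is a counterexample: the cycle $\ce{A}\stackrel{1}{\to}\ce{B}\stackrel{2}{\to}\ce{C}\stackrel{3}{\to}\ce{A}$ is a futile cycle whose support contains the irreversible reactions $2$ and $3$ (and the reversible reaction $1$), yet the network is thermodynamically sound in the lax sense (take $g=1/2$), and one checks that $\ker\SM$ contains no non-zero vector supported on $\{2,3\}$ alone. This also explains why your attempted direct ``only if'' argument stalled: the sign computation $\langle\gf,\vv\rangle=0$ genuinely fails to produce a contradiction when $\vv$ has positive reversible entries, because the theorem makes no claim about such cycles. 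Once you replace your definition by ``$\supp(\vv)\subseteq\RR_\mathrm{irr}$,'' both directions fall out of the alternative immediately (the ``only if'' direction even becomes the elementary sign argument you first tried, since on a cycle supported in $\RR_\mathrm{irr}$ every term of $\langle\gf,\vv\rangle$ is $\le 0$ with at least one $<0$), and no appeal to Lemma~\ref{lem} is needed in this theorem at all --- it is Theorem~\ref{thm:mixed2} that requires it.
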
  
\begin{proof}
  By a ``sign vector version'' of Minty's Lemma: Either there is
  $\gf \in (\ker \SM)^\perp$ with $\gf_r<0$ for $r \in \RR_\mathrm{irr}$
  (the network is thermodynamically sound) or there is a non-zero
  $\vv \in \ker \SM$ with $\vv_r \le 0$ for $r \in \RR_\mathrm{irr}$ and
  $\vv_r = 0$ for $r \in \RR_\mathrm{rev}$; equivalently, there is a futile
  cycle $\vv>0$ with $\supp(\vv) \subseteq \RR_\mathrm{irr}$.
\end{proof}

\begin{theorem} \label{thm:mixed2}
  A mixed RN $(X, \RR_\mathrm{rev}\cupdot \RR_\mathrm{irr})$ is strictly
  thermodynamically sound if and only if no futile cycle 
  contains an irreversible reaction.
\end{theorem}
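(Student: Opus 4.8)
The plan is to prove the equivalence as a cycle of three implications: \emph{strictly thermodynamically sound} $\Rightarrow$ \emph{no futile cycle contains an irreversible reaction} $\Rightarrow$ \emph{there is a vector $\gf^* \in (\ker\SM)^\perp$ with $\gf^*_r = 0$ for $r \in \RR_\mathrm{rev}$ and $\gf^*_r < 0$ for $r \in \RR_\mathrm{irr}$} $\Rightarrow$ \emph{strictly thermodynamically sound}. For the first implication I would argue by contraposition. Suppose $\vv > 0$ is a futile cycle with $\SM\vv = 0$ and $\vv_{r_0} > 0$ for some $r_0 \in \RR_\mathrm{irr}$, and let $\gf$ be any admissible energy vector, i.e.\ thermodynamic with $|\gf_r| \le 1$ for $r \in \RR_\mathrm{rev}$ and $\gf_r \le -\gamma$ for $r \in \RR_\mathrm{irr}$. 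Since $\vv \in \ker\SM$ and $\gf \in (\ker\SM)^\perp$, we have $\langle \gf, \vv \rangle = 0$. Splitting this scalar product, the $\RR_\mathrm{rev}$-part has absolute value at most $W := \sum_{r \in \RR_\mathrm{rev}} \vv_r$, while the $\RR_\mathrm{irr}$-part is at most $-\gamma\vv_{r_0}$ because all of its terms are non-positive; hence $0 \le W - \gamma\vv_{r_0}$, i.e.\ $\gamma \le W/\vv_{r_0}$. As $W$ and $\vv_{r_0} > 0$ are fixed, this fails for all sufficiently large $\gamma$, so no admissible $\gf$ exists for such $\gamma$ and Definition~\ref{def:mixed} is violated.

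The middle implication is the crux. Here I would apply a sign-vector version of Minty's Lemma (as in the proof of Theorem~\ref{thm:mixed1}) to $\SM$ with the painting $\RR = \RR_\mathrm{rev} \cupdot \RR_\mathrm{irr}$: either there is $\gf^* \in (\ker\SM)^\perp = \im\SM^\trans$ with $\gf^*_r = 0$ for $r \in \RR_\mathrm{rev}$ and $\gf^*_r < 0$ for $r \in \RR_\mathrm{irr}$, or there is a non-zero $\vv \in \ker\SM$ with $\vv_r \ge 0$ for $r \in \RR_\mathrm{irr}$, not all of these entries zero, and $\vv_r$ unrestricted for $r \in \RR_\mathrm{rev}$. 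To exclude the second alternative under our hypothesis, I would invoke Lemma~\ref{lem}: write $\vv = \vv' + \vv''$, where $\vv'$ is supported on $\RR_\mathrm{irr}$ (so $\vv' \ge 0$, $\vv' \ne 0$) and $\vv''$ is supported on $\RR_\mathrm{rev}$; since the reversible reactions occur together with their reverses, $(X,\RR_\mathrm{rev})$ is a reversible subnetwork, and Lemma~\ref{lem} replaces $\vv''$ by a flow $\tilde\vv'' \ge 0$ supported on $\RR_\mathrm{rev}$ with $\SM\tilde\vv'' = \SM\vv''$. Then $\tilde\vv := \vv' + \tilde\vv''$ satisfies $\tilde\vv \ge 0$, $\tilde\vv \ne 0$, $\SM\tilde\vv = 0$, and $\supp(\tilde\vv) \cap \RR_\mathrm{irr} \ne \emptyset$ — a futile cycle containing an irreversible reaction, contradicting the hypothesis. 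Hence the first alternative holds and $\gf^*$ exists.

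The last implication is a one-line rescaling: for any $\gamma > 1$, put $m := \min_{r \in \RR_\mathrm{irr}} |\gf^*_r| > 0$ and $\gf := (\gamma/m)\,\gf^*$. Then $\gf \in (\ker\SM)^\perp$, so $(X,\RR,\gf)$ is thermodynamic; $\gf_r = 0$, hence $|\gf_r| \le 1$, for $r \in \RR_\mathrm{rev}$; and $\gf_r \le -\gamma < 0$ for $r \in \RR_\mathrm{irr}$. Thus Definition~\ref{def:mixed} is met for every $\gamma > 1$. I expect the middle implication to be the only subtle point: the alternative produced by Minty's Lemma constrains the kernel vector only on $\RR_\mathrm{irr}$ and leaves it free on $\RR_\mathrm{rev}$, so it need not be a flow; converting it into a genuine non-negative futile cycle is exactly what reversibility of $\RR_\mathrm{rev}$ (through Lemma~\ref{lem}) provides, and one must check that the reverse-pairing stays inside $\RR_\mathrm{rev}$ so that the intersection of the support with $\RR_\mathrm{irr}$ — and hence the occurrence of an irreversible reaction — is preserved.
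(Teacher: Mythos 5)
Your proof is correct. It rests on the same two essential ingredients as the paper's — a theorem of the alternative for $\ker \SM$ with sign constraints on the irreversible coordinates, plus Lemma~\ref{lem} to turn the resulting kernel vector into a genuine non-negative futile cycle whose support still meets $\RR_\mathrm{irr}$ — but it organizes them differently. The paper invokes an interval version of Minty's Lemma separately for each $\gamma$ (requiring $\gf_r\in[-1,1]$ on $\RR_\mathrm{rev}$ and $\gf_r\in(-\infty,-\gamma]$ on $\RR_\mathrm{irr}$) and then argues, somewhat tersely, that for $\gamma$ large enough the failing alternative is equivalent to the existence of a futile cycle containing an irreversible reaction. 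You instead extract a single $\gamma$-independent certificate $\gf^*$ that vanishes on $\RR_\mathrm{rev}$ and is strictly negative on $\RR_\mathrm{irr}$ (this is Gordan's theorem applied to the projection of $\ker\SM$ onto the $\RR_\mathrm{irr}$-coordinates, whose orthogonal complement consists exactly of the vectors in $(\ker\SM)^\perp$ vanishing on $\RR_\mathrm{rev}$), and you dispose of the universal quantifier over $\gamma$ by a one-line rescaling; the forward implication you settle by the direct estimate $0=\langle\gf,\vv\rangle\le W-\gamma\vv_{r_0}$ rather than by the alternative. This buys a cleaner and more explicit treatment of the ``for all $\gamma>1$'' clause at no real extra cost. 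Your closing remark also correctly isolates the one point that needs care and that the paper leaves implicit: $\RR_\mathrm{rev}$ is closed under reversal, so Lemma~\ref{lem} applied to the reversible part keeps $\tilde\vv''$ supported in $\RR_\mathrm{rev}$ and leaves the $\RR_\mathrm{irr}$-coordinates of $\vv'$ untouched, preserving the occurrence of an irreversible reaction in the resulting futile cycle.
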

\begin{proof}
  By Minty's Lemma: Let $\gamma>1$.  Either there is
  $\gf \in (\ker \SM)^\perp$ with $\gf_r \in [-1,1]$ for
  $r \in \RR_\mathrm{rev}$ and $\gf_r \in (-\infty,-\gamma]$ for
  $r \in \RR_\mathrm{irr}$ or there is $\vv \in \ker \SM$ with
  \begin{equation} \label{int1} \sum_{r \in \RR_\mathrm{rev}} \vv_r \,
    [-1,1] + \sum_{r \in \RR_\mathrm{irr}} \vv_r \, (-\infty,-\gamma] > 0.
  \end{equation}
  Thereby, a sum of intervals is defined in the obvious way, yielding an
  interval which is positive if each of its elements is positive.  Via
  $\vv\to-\vv$, the interval condition~\eqref{int1} is equivalent to: there
  is $\vv \in \ker \SM$ with
  \begin{equation}
    \label{int2}
    \sum_{r \in \RR_\mathrm{rev}} \vv_r \,
    [-1,1] + \sum_{r \in \RR_\mathrm{irr}} \vv_r \, [\gamma,\infty) > 0 .
  \end{equation}
  As necessary conditions, we find (i) $\vv_{r^*} > 0$ for some
  $r^* \in \RR_\mathrm{irr}$ and (ii) $\vv_{r} \ge 0$ for all
  $r \in \RR_\mathrm{irr}$.  By Lemma~\ref{lem}, (iii) there is an
  equivalent flow with $\vv_{r} \ge 0$ for $r \in \RR_\mathrm{rev}$.  That
  is, there is a futile cycle $\vv>0$ involving an irreversible reaction.
  For $\gamma$ large enough, the necessary conditions are also sufficient
  for the interval condition~\eqref{int2}.
\end{proof}

We may characterize strict thermodynamic soundness for mixed networks also
in geometric terms:
\begin{corollary}
  Let
  $\SM = (\SM_\mathrm{rev} \; \SM_\mathrm{irr}) \in \R^{X \times
    (\RR_\mathrm{rev} \cupdot \RR_\mathrm{irr})}$,
  $L_\mathrm{rev} = \im \SM_\mathrm{rev}$, and
  $C_\mathrm{irr} = \cone \SM_\mathrm{irr}$.  Then, $(X, \RR)$ is strictly
  thermodynamically sound if and only if
  $L_\mathrm{rev} \cap C_\mathrm{irr} = \{0\}$.
\end{corollary}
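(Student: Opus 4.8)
The plan is to read the corollary off Theorem~\ref{thm:mixed2}, by which $(X,\RR)$ is strictly thermodynamically sound precisely when no futile cycle involves an irreversible reaction. I decompose every flow as $\vv=(\vv_\mathrm{rev},\vv_\mathrm{irr})$ along the block structure $\SM=(\SM_\mathrm{rev}\;\SM_\mathrm{irr})$; for $\vv\ge 0$, having $\vv_r>0$ for some $r\in\RR_\mathrm{irr}$ is the same as $\vv_\mathrm{irr}\ne 0$. So it suffices to establish
\[
\exists\,\vv\ge 0:\ \SM\vv=0,\ \vv_\mathrm{irr}\ne 0
\quad\Longleftrightarrow\quad
L_\mathrm{rev}\cap C_\mathrm{irr}\ne\{0\}
\]
and then negate both sides. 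The link used in either direction is that $\SM\vv=0$ forces $z:=\SM_\mathrm{irr}\vv_\mathrm{irr}=-\SM_\mathrm{rev}\vv_\mathrm{rev}$, which lies in $\im\SM_\mathrm{rev}=L_\mathrm{rev}$ unconditionally and in $\cone\SM_\mathrm{irr}=C_\mathrm{irr}$ whenever $\vv_\mathrm{irr}\ge 0$.

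For ``$\Leftarrow$'' I would pick a non-zero $z\in L_\mathrm{rev}\cap C_\mathrm{irr}$ and write $z=\SM_\mathrm{rev}\mathbf{a}$ for some $\mathbf{a}\in\R^{\RR_\mathrm{rev}}$ and $z=\SM_\mathrm{irr}\mathbf{b}$ for some $\mathbf{b}\ge 0$ with $\mathbf{b}\ne 0$ (since $z\ne 0$), so that $\SM_\mathrm{rev}(-\mathbf{a})+\SM_\mathrm{irr}\mathbf{b}=0$. Because $\RR_\mathrm{rev}$ is reversible, I replace the negative coordinates of $-\mathbf{a}$ by their reverse reactions --- the exchange used in the proof of Lemma~\ref{lem}, applied to the reversible subnetwork carried by $\RR_\mathrm{rev}$ --- obtaining a flow $\tilde{\mathbf{a}}\ge 0$ with $\SM_\mathrm{rev}\tilde{\mathbf{a}}=\SM_\mathrm{rev}(-\mathbf{a})=-z$. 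Then $\vv:=(\tilde{\mathbf{a}},\mathbf{b})\ge 0$ satisfies $\SM\vv=-z+z=0$ and $\vv_\mathrm{irr}=\mathbf{b}\ne 0$: a futile cycle through an irreversible reaction.

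For ``$\Rightarrow$'' I would start from such a futile cycle $\vv=(\vv_\mathrm{rev},\vv_\mathrm{irr})\ge 0$, $\SM\vv=0$, $\vv_\mathrm{irr}\ne 0$, and take $z:=\SM_\mathrm{irr}\vv_\mathrm{irr}=-\SM_\mathrm{rev}\vv_\mathrm{rev}\in L_\mathrm{rev}\cap C_\mathrm{irr}$, which closes the argument as long as $z\ne 0$. This is the step I expect to be the real obstacle: if $z=0$, then $\SM_\mathrm{irr}\vv_\mathrm{irr}=0$ with $\vv_\mathrm{irr}>0$, i.e.\ $(X,\RR)$ carries an irreversible futile cycle, so by Theorem~\ref{thm:mixed1} it fails even ordinary thermodynamic soundness and $C_\mathrm{irr}=\cone\SM_\mathrm{irr}$ is not pointed. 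Since $L_\mathrm{rev}\cap C_\mathrm{irr}=\{0\}$ does not by itself exclude this, I would prove the corollary under the (natural) assumption that the mixed RN is thermodynamically sound --- equivalently, $C_\mathrm{irr}$ is pointed, equivalently $\SM_\mathrm{irr}$ has no futile cycle; this is automatic whenever strict soundness holds, so the forward implication is unaffected, and under it $z=0$ cannot occur and $z$ is the desired non-zero witness. Apart from this point, the proof is just the bookkeeping of the two block decompositions together with the reversibility exchange of Lemma~\ref{lem}.
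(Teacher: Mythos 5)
The paper states this corollary without proof, as an immediate geometric repackaging of Theorem~\ref{thm:mixed2}, so there is no official argument to compare against; your proposal carries out exactly the translation the authors intend, and both of your implications are executed correctly (one small point worth making explicit in your ``$\Leftarrow$'' step: the exchange producing $\tilde{\mathbf a}$ needs $\RR_\mathrm{rev}$ to be closed under reversal, which is the intended reading of a mixed network and is also what the paper's own proof of Theorem~\ref{thm:mixed2} uses via Lemma~\ref{lem}; and $\mathbf b\neq 0$ does follow from $z\neq 0$ as you claim).

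More importantly, the obstacle you isolated in the ``$\Rightarrow$'' direction is not an artifact of your proof strategy: the corollary as literally stated fails without your amendment. Take $X=\{\ce{A},\ce{B},\ce{C}\}$, $\RR_\mathrm{rev}=\emptyset$, and $\RR_\mathrm{irr}=\{\ce{A}\to\ce{B},\ \ce{B}\to\ce{C},\ \ce{C}\to\ce{A}\}$. Then $\vv=(1,1,1)^\trans$ is a futile cycle supported on $\RR_\mathrm{irr}$, so by Theorem~\ref{thm:mixed1} the network is not even (lax) thermodynamically sound, let alone strictly so; yet $L_\mathrm{rev}=\im\SM_\mathrm{rev}=\{0\}$, hence $L_\mathrm{rev}\cap C_\mathrm{irr}=\{0\}$ and the stated criterion would certify strict soundness. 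This is precisely the degenerate case $z=\SM_\mathrm{irr}\vv_\mathrm{irr}=0$ that your proof detects. Your repair --- adding the hypothesis of (lax) thermodynamic soundness, equivalently pointedness of $C_\mathrm{irr}$, equivalently the absence of a purely irreversible futile cycle --- is correct and is harmless for the forward implication, as you note. An alternative repair keeps the statement unconditional by strengthening the geometric condition to ``$\SM_\mathrm{irr}\vv_\mathrm{irr}\notin L_\mathrm{rev}$ for every $\vv_\mathrm{irr}>0$'', which (using the reversibility exchange) is exactly the negation of ``some futile cycle contains an irreversible reaction'' and can be written as the single intersection condition $(L_\mathrm{rev}+C_\mathrm{irr})\cap(-C_\mathrm{irr})=\{0\}$, packaging $L_\mathrm{rev}\cap C_\mathrm{irr}=\{0\}$ together with the pointedness of $C_\mathrm{irr}$.
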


\subsubsection*{Open (mixed) networks}

Opening the RN, i.e., adding transport reactions alters the representation
of reaction energies.  We now have to consider chemical potentials
involving concentrations, i.e., we replace the (Gibbs free) energies
$\GF_x$ by $\GF_x + R\,T\ln [x]$, where $[x]$ is the activity of $x$, which
approximately coincides with the concentrations. A reaction $r$ then
proceeds in the forward direction whenenver the chemical potential of the
products is smaller than the chemical potential of the educts, i.e., if
\begin{equation}
  \sum_x s^+_{xr} (\GF_x + R\,T\ln[x]) <
  \sum_x s^-_{xr} (\GF_x + R\,T\ln[x])\,.  
\end{equation}
This condition can be rewritten in terms of the reaction (Gibbs free)
energies and (the logarithm of) the ``reaction quotient'', see e.g.\
\cite{Pekar:05}:
\begin{equation}
  \gf_r < - R\,T \sum_{x\in X} s_{xr}\ln[x]
  \label{eq:openupper}
\end{equation}
The activities $[x]$ for $x\in X$ therefore define an upper bound on the
reaction energy $\gf_r$. In an open system, (internal) concentrations may
be buffered as fixed values or are implictly determined by given influxes
or external concentrations \cite{Polettini:14}.  Given a specification of
the environment, i.e., of the transport fluxes and/or buffered
concentrations, the upper bound in Equ. (\ref{eq:openupper}) can have an
arbitrary value. Thus, if an irreversible reaction in $\RR$ is meant to
proceed forward for all conditions, it must be possible to choose $\gf_r<0$
arbitrarily small, i.e., $|\gf_r|$ arbitrarily large.  This amounts to
requiring that $(X,\RR^p)$ is \emph{strictly} thermodynamically sound. In
many studies of reaction networks, one requires that a reaction proceeds
forward in a given situation, but allows that it proceeds backward in other
situations.  In this (lax) interpretation of irreversibility it is
sufficient to require that $(X,\RR^p)$ is thermodynamically sound, but not
necessarily strictly thermodynamically sound.

In Def.~\ref{def:mixed}, we introduce (strict) thermodynamical soundness in
terms of reaction energies, and in Thms. \ref{thm:mixed1} and
\ref{thm:mixed2}, we characterize it in terms of futile cycles.  In a
corresponding approach \cite{Gorban2011,Gorban2013}, ``extended'' detailed
balance is required for (closed) RNs with irreversible reactions at
thermodynamic equilibrium.  Thereby, activities $[x]$, rate constants
$k_+,k_-$ and equilibrium constants $K$ are explicitly used to formulate
Wegscheider conditions for non-reversible RNs that are limits of reversible
systems.  The characterization of such systems in \cite{Gorban2011} is
equivalent to our results.

\subsubsection*{Reversible Completion}

As models of chemistry, non-reversible networks are abstractions
that are obtained from reversible thermodynamics networks by omitting the
reverse of reactions that mostly flow into one direction.
\begin{definition}
  \label{def:rc}
  Let $(X,\RR,\gf)$ be a
  thermodynamic RN with
  $\RR=\RR_\mathrm{rev}\cupdot \RR_\mathrm{irr}$.
  The \emph{reversible completion} of
  $(X,\RR,\gf)$ is the RN 
  $(X,\RR^*,\gf^*)$ with
  $\RR^* = \RR_\mathrm{rev}\cupdot \RR_\mathrm{irr} \cupdot \{\bar
  r \mid r\in \RR_\mathrm{irr}\}$ and $\gf^*_r=\gf_r$ for
  $r\in \RR_\mathrm{rev}\cupdot \RR_\mathrm{irr}$ and $\gf^*_{\bar r}=
  -\gf_r$ for $r\in\RR_\mathrm{irr}$.
\end{definition}

\begin{lemma}
  \label{lem:completion}
  If $(X,\RR,\gf)$ is a thermodynamic RN, then its reversible
  completion $(X,\RR^*,\gf^*)$ is also a thermodynamic RN.
\end{lemma}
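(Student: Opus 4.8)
The plan is to reduce immediately to the molecular-energy picture of Proposition~\ref{pro:thermodynamic}. Since $(X,\RR,\gf)$ is thermodynamic, there is a vector of molecular energies $\GF \in \R^X$ satisfying Hess' law $\gf = \SM^\trans\GF$. I claim the \emph{same} $\GF$ witnesses thermodynamicity of the reversible completion. First I would record how the stoichiometric matrix $\SM^* \in \N_0^{X\times\RR^*}$ of $(X,\RR^*)$ is built from $\SM$: the columns indexed by $r \in \RR_\mathrm{rev}\cupdot\RR_\mathrm{irr}$ are unchanged, $\SM^*_{\cdot r} = \SM_{\cdot r}$, while the column of a newly adjoined reverse reaction $\bar r$ (with $r\in\RR_\mathrm{irr}$) is $\SM^*_{\cdot\bar r} = -\SM_{\cdot r}$; this is immediate from the defining relations $s^-_{x\bar r}=s^+_{xr}$, $s^+_{x\bar r}=s^-_{xr}$ of a reverse reaction.

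Next I would verify $\gf^* = (\SM^*)^\trans\GF$ coordinate by coordinate: for $r\in\RR_\mathrm{rev}\cupdot\RR_\mathrm{irr}$ one gets $\langle\SM_{\cdot r},\GF\rangle = \gf_r = \gf^*_r$, and for an adjoined reverse reaction $\bar r$ one gets $\langle-\SM_{\cdot r},\GF\rangle = -\gf_r = \gf^*_{\bar r}$, which is exactly the value prescribed in Definition~\ref{def:rc} (and consistent with Lemma~\ref{fact:g-reverse}). Hence $\gf^*\in\im(\SM^*)^\trans = (\ker\SM^*)^\perp$, so $(X,\RR^*,\gf^*)$ is thermodynamic by Definition~\ref{def:thdyn}; equivalently, one may simply quote Proposition~\ref{pro:thermodynamic} in the converse direction with the witness $\GF$.

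I do not expect a genuine obstacle; the only point requiring a word of care is that the reversible completion is well-formed, i.e., that $\{\bar r \mid r\in\RR_\mathrm{irr}\}$ really consists of new reactions disjoint from $\RR$, so that $\gf^*_{\bar r} := -\gf_r$ is a consistent single-valued assignment, which is implicit in the $\cupdot$ notation and the meaning of $\RR_\mathrm{irr}$. If one prefers to sidestep Proposition~\ref{pro:thermodynamic}, an equally short argument works straight from Definition~\ref{def:thdyn}: given $\vv^*\in\ker\SM^*$, define $\vv\in\R^\RR$ by $\vv_r=\vv^*_r$ on $\RR_\mathrm{rev}$ and $\vv_r=\vv^*_r-\vv^*_{\bar r}$ on $\RR_\mathrm{irr}$; then $\SM\vv=\SM^*\vv^*=0$, so $\langle\gf,\vv\rangle=0$ by hypothesis, and the same sign bookkeeping gives $\langle\gf^*,\vv^*\rangle=\langle\gf,\vv\rangle=0$, as required.
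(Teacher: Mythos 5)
Your proposal is correct and follows essentially the same route as the paper: both invoke Proposition~\ref{pro:thermodynamic} to obtain a vector $\GF$ of molecular energies with $\gf=\SM^\trans\GF$ and then check that the same $\GF$ satisfies Hess' law for the adjoined reverse reactions, since their stoichiometric columns are the negatives of the originals and hence yield $\gf^*_{\bar r}=-\gf_r$ as prescribed by Definition~\ref{def:rc}. Your extra direct argument from Definition~\ref{def:thdyn} is a fine alternative but not needed.
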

\begin{proof}
  Let $\bar r\in\RR^*$ be the reverse reaction of $r \in \RR_\mathrm{irr}$.
  By Prop.~\ref{pro:thermodynamic}, for every $r\in\RR$ there is a vector
  $\GF\in\R^X$ satisfying Hess' law. It suffices to show that $\GF$ still
  satisfies Hess' law for $(X,\RR^*)$. By the definition of $\bar r$, its
  reaction energy is
  $\gf^*_{\bar r} = \sum_{x\in X} \GF_x (s^+_{x{\bar r}} - s^-_{x{\bar r}})
  = \sum_{x\in X} \GF_x (s^-_{xr}-s^+_{xr}) = -\gf_r$, as required by
  Def.~\ref{def:rc}. Thus $(X,\RR^*,\gf^*)$ is also thermodynamic.
\end{proof}
The following result is an immediate consequence of Lemma~\ref{lem:completion}.
\begin{proposition}
  If the RN $(X,\RR)$ is thermodynamically sound, then its reversible
  completion is also thermodynamically sound, and the reaction energies
  $\gf$ can be choosen such that $\gf_r<0<\gf_{\bar r}$ for all
  $r\in\RR_\mathrm{irr}$.
\end{proposition}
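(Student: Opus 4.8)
The plan is to unfold the relevant definitions and feed them directly into Lemma~\ref{lem:completion}; essentially no work is needed beyond bookkeeping. First I would invoke Definition~\ref{def:mixed}: since $(X,\RR)$ with $\RR=\RR_\mathrm{rev}\cupdot\RR_\mathrm{irr}$ is thermodynamically sound, there is a vector of reaction energies $\gf\in\R^\RR$ such that $(X,\RR,\gf)$ is thermodynamic and $\gf_r<0$ for every $r\in\RR_\mathrm{irr}$. Then I would form the reversible completion $(X,\RR^*,\gf^*)$ exactly as in Definition~\ref{def:rc}, so that $\gf^*_r=\gf_r$ for $r\in\RR_\mathrm{rev}\cupdot\RR_\mathrm{irr}$ and $\gf^*_{\bar r}=-\gf_r$ for $r\in\RR_\mathrm{irr}$.

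Next I would apply Lemma~\ref{lem:completion} verbatim: because $(X,\RR,\gf)$ is thermodynamic, its reversible completion $(X,\RR^*,\gf^*)$ is thermodynamic as well. Hence $\gf^*$ is a vector of reaction energies witnessing thermodynamic soundness of $(X,\RR^*)$. (Note that after completion every $r\in\RR_\mathrm{irr}$ has its reverse $\bar r$ present, so $(X,\RR^*)$ is in fact a fully reversible RN, for which soundness would already follow from Proposition~\ref{pro:rev}; the substance of the statement is the extra control on signs.) That control is immediate from the construction: for each $r\in\RR_\mathrm{irr}$ we have $\gf^*_r=\gf_r<0$ by the choice of $\gf$ and $\gf^*_{\bar r}=-\gf_r>0$, so $\gf^*_r<0<\gf^*_{\bar r}$, as claimed.

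The only point deserving a sentence of care is the reading of ``thermodynamically sound'' for the completed network: whether one regards $\RR^*$ as purely reversible or keeps the bookkeeping split with $\RR_\mathrm{irr}$ still flagged irreversible, the exhibited $\gf^*$ satisfies both readings (it makes the network thermodynamic and has $\gf^*_r<0$ on $\RR_\mathrm{irr}$). I do not anticipate any genuine obstacle; the proposition is a corollary, and the entire argument is a two-line specialization of Lemma~\ref{lem:completion} together with the explicit sign formula for $\gf^*$ supplied by Definition~\ref{def:rc}.
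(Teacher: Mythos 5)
Your proof is correct and follows exactly the route the paper intends: the paper states this proposition without proof as ``an immediate consequence of Lemma~\ref{lem:completion},'' and your argument is precisely that unfolding — take the witness $\gf$ from Definition~\ref{def:mixed}, apply Lemma~\ref{lem:completion} to get that the completion is thermodynamic, and read off the sign condition $\gf^*_r=\gf_r<0<-\gf_r=\gf^*_{\bar r}$ from Definition~\ref{def:rc}. Nothing is missing.
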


\subsection{Mass conservation and cornucopias/abysses}

Thermodynamic soundness is not sufficient to ensure chemical realism.  As
an example, consider the random kinetics model introduced in
\cite{Bigan:13}. It assigns (a randomly chosen) energy $G(x)$ to each
$x\in X$. Each reaction $r$ is defined by randomly picking a set of educts
$e_r^-$ and products $e_r^+$. A possible instance of this model comprises
four compounds with molecular energies $G(\ce{A}) = -5$, $G(\ce{B}) = -5$,
$G(\ce{C}) = -10$, and $G(\ce{X}) = -2$, and two reactions
\begin{equation}
  \ce{A + B} \longrightarrow \ce{C + X}, \quad
  \ce{C} \longrightarrow \ce{A + B}.
  \label{eq:xampl3}
\end{equation}
The first reaction is exergonic with $\gf_1=-2$ and the second has reaction
energy $\gf_2=0$. The composite reaction, obtained as their sum, is
$\ce{A + B} \to \ce{A + B + X}$. Ignoring the effective catalysts \ce{A}
and \ce{B}, the corresponding net reaction is
$\ce{\varnothing} \to \ce{X}$.  In this universe, therefore, it is possible
to spontaneosly create mass in a sequence of exergonic reactions.  Reverting
the signs of the energies reverts the two reactions and thus yields an
exergonic reaction that makes $\ce{X}$ disappear.

We can again describe this situation in terms of flows. Recall that
$[\SM\vv]_x$ is the net production or consumption of species $x$. The
spontaneous creation or annihilation of mass thus corresponds to flows
$\vv>0$ with $\SM\vv>0$ or $\SM\vv<0$, respectively.
\begin{definition}
  Let $(X,\RR)$ be a RN. A flow $\vv>0$ is a \emph{cornucopia} if
  $\SM\vv>0$ and an \emph{abyss} if $\SM\vv<0$.
\end{definition}
Systems with cornucopias or abysses cannot be considered as closed
systems. The proper part of chemical reaction networks therefore must be
free of cornucopias and abysses.

Since in a reversible network any vector $\vv \in \R^\RR$ can be
transformed into an equivalent flow $\mathbf{\tilde v} \ge 0$ (with
$\SM \mathbf{\tilde v} = \SM \vv$), cf.~Lemma~\ref{lem}, we have the
following characterization.
\begin{proposition} 
  A reversible RN is free of cornucopias and abysses if and only if there
  is no vector $\vv\in\R^\RR$ such that $\SM\vv>0$.
\end{proposition}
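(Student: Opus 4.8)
The plan is to prove the two implications directly, using Lemma~\ref{lem} (applicable because the network is reversible) for the non-trivial direction.

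For the ``if'' direction I would argue by contraposition on each of the two forbidden configurations. Suppose no vector $\vv\in\R^\RR$ satisfies $\SM\vv>0$. A cornucopia is in particular a flow $\vv>0$, hence a vector in $\R^\RR$, with $\SM\vv>0$, so none can exist. If $\vv>0$ were an abyss, i.e.\ $\SM\vv<0$, then $-\vv\in\R^\RR$ would satisfy $\SM(-\vv)=-\SM\vv>0$, again a contradiction. Hence the reversible RN is free of cornucopias and abysses.

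For the ``only if'' direction I would assume the reversible RN is free of cornucopias and abysses and, for contradiction, that some $\vv\in\R^\RR$ has $\SM\vv>0$. By reversibility, Lemma~\ref{lem} yields a flow $\mathbf{\tilde v}\ge0$ with $\SM\mathbf{\tilde v}=\SM\vv$. Since $\SM\vv\ne0$, necessarily $\mathbf{\tilde v}\ne0$, so in fact $\mathbf{\tilde v}>0$; together with $\SM\mathbf{\tilde v}=\SM\vv>0$ this is a cornucopia, contradicting the assumption.

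The only points requiring care are the bookkeeping with the relations $\ge$, $>$, $\gg$ and the observation that Lemma~\ref{lem} only delivers $\mathbf{\tilde v}\ge0$, not $\mathbf{\tilde v}>0$ -- the strict inequality being recovered precisely because $\SM\mathbf{\tilde v}=\SM\vv\ne0$ forbids $\mathbf{\tilde v}=0$. There is no genuine obstacle here: once Lemma~\ref{lem} is in hand, the statement is essentially the remark that in a reversible network the set $\{\SM\vv : \vv\ge0\}$ coincides with $\{\SM\vv : \vv\in\R^\RR\}=\im\SM$, so a one-signed nonzero element of the latter already occurs among flows.
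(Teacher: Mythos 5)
Your proof is correct and follows exactly the route the paper intends: the paper states this proposition as an immediate consequence of Lemma~\ref{lem} (converting an arbitrary $\vv\in\R^\RR$ with $\SM\vv>0$ into a nonnegative, hence nontrivial, flow with the same image, i.e.\ a cornucopia), with the converse direction being the trivial observation that cornucopias and abysses are themselves such vectors up to sign. Your careful note that $\mathbf{\tilde v}\ne 0$ follows from $\SM\mathbf{\tilde v}\ne 0$ is the only point of bookkeeping, and you handle it correctly.
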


In fact, mass conservation rules out cornucopias and abysses.  More
generally, a \emph{reaction invariant} is a property that does not change
over the course of a chemical reaction
\cite{Horn:72a,Gadewar:01,Flockerzi:07}.  Here, we are only interested in
\emph{linear} reaction invariants, also called \emph{conservation laws}
\cite{Rao:18}, that is, quantitative properties of molecules (such as mass)
whose sum is the same for educts and products.
\begin{definition}
  \label{def:conslaw}
  A \emph{linear reaction invariant} or \emph{conservation law} is a non-zero
  vector $\mm\in \R^X$ that satisfies
  $\sum_{x\in X} \mm_x \, s^+_{xr} = \sum_{x\in X} \mm_x \, s^-_{xr}$ for all
  reactions $r\in\RR$,
  that is, $\mm^\trans\SM=0$. 
\end{definition}
\begin{definition}
  A RN is \emph{conservative} if it has a positive conservation law, that
  is, if there is $\mm\in\R^X$ such that $\mm\gg0$ and $\mm^\trans \SM=0$.
\end{definition}

By definition, a conservative network is free of cornucopias and abysses.
Conversely, by the result below, if a reversible network is not
conservative, then it contains a cornucopia (and an abyss).

\begin{theorem}
  A reversible RN $(X,\RR)$ is free of cornucopias and abysses if and only
  if it is conservative.
\end{theorem}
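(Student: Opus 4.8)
The plan is to prove the two implications separately, treating the easy direction directly and reducing the hard direction to the Proposition immediately preceding the statement together with a Stiemke-type theorem of the alternative.

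For the implication ``conservative $\Rightarrow$ free of cornucopias and abysses'' no reversibility is needed, and it is essentially the remark already made above: if $\mm\gg0$ satisfies $\mm^\trans\SM=0$, then $\mm^\trans(\SM\vv)=0$ for every flow $\vv>0$. Since $\mm\gg0$, the vector $\SM\vv$ can therefore be neither semipositive ($\SM\vv>0$) nor seminegative ($\SM\vv<0$); that is, there is no cornucopia and no abyss.

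For the converse, suppose the reversible RN $(X,\RR)$ is free of cornucopias and abysses. By the preceding Proposition --- which uses reversibility through Lemma~\ref{lem} to replace a flow $\vv\ge0$ by an arbitrary vector $\vv\in\R^\RR$ with the same net reaction --- this is equivalent to the statement that there is no $\vv\in\R^\RR$ with $\SM\vv>0$; in other words, $\im\SM$ contains no semipositive vector. Now I would apply the theorem of the alternative of Stiemke type, the companion of the Gordan/Minty alternative already invoked for the irreversible case: exactly one of the following holds --- (a) there is $\vv\in\R^\RR$ with $\SM\vv>0$, or (b) there is $\mm\in\R^X$ with $\mm\gg0$ and $\SM^\trans\mm=0$. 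Since (a) fails, (b) must hold; and $\SM^\trans\mm=0$ is the same as $\mm^\trans\SM=0$, so $(X,\RR)$ is conservative.

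The only genuine point of care is selecting and citing the correct alternative: it must be Stiemke's lemma rather than Gordan's theorem, so that the primal vector $\SM\vv$ carries the semipositivity while the dual vector $\mm$ carries the strict positivity (Gordan would produce the wrong pairing and only yield a nonzero, not a strictly positive, conservation law). Like Gordan's theorem, Stiemke's lemma is a special case of Minty's Lemma, so it can be referenced in exactly the same way as in the proof that an irreversible RN is thermodynamically sound iff it has no futile cycles. Everything else is routine and is already absorbed into the preceding Proposition: the symmetry $\vv\mapsto-\vv$ that makes the absence of cornucopia-type vectors in $\im\SM$ automatically exclude abyss-type vectors, and the fact that in a reversible network a cornucopia reverses to an abyss and vice versa.
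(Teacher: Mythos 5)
Your proof is correct and follows essentially the same route as the paper: the paper also invokes Stiemke's Theorem (as a special case of Minty's Lemma) to get the alternative between a semipositive vector in $\im\SM$ and a strictly positive vector in $\ker\SM^\trans$, and likewise uses reversibility (via Lemma~\ref{lem}, packaged in the preceding Proposition) to identify ``no $\vv\in\R^\RR$ with $\SM\vv>0$'' with the absence of cornucopias and abysses. Your explicit remark that Stiemke rather than Gordan is the correct pairing of strictness is a nice touch but does not change the argument.
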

\begin{proof}
  By Stiemke's Theorem (which is in turn a special case of Minty's Lemma):
  Either there is a non-zero, non-negative $\mathbf{n} \in \im \SM$ or there
  is a positive $\mm \in (\im \SM)^\perp = \ker \SM^\trans$.  That is,
  either there is $\vv\in\R^\RR$ with $\mathbf{n} = \SM \vv > 0$
  (corresponding to a cornucopia $\mathbf{\tilde v}>0$) or there is
  $\mm\gg0$ with $\SM^\trans \mm=0$ (as claimed).
\end{proof}

We therefore conclude that every closed chemical RN must have a positive
reaction invariant. This is no longer true if the RN is embedded in an open
system and mass exchange with the environment is allowed.  By construction,
each transport reaction violates at least one of the conservation laws of
the closed system, since $[\mm^\trans\SM]_{r}>0$ if $r$ is import reaction
and $[\mm^\trans\SM]_{r}<0$ if it is an export reaction.  As discussed
e.g.\ in \cite{Rao:18}, opening a RN by adding import or export reactions,
can only reduce the number of conservation laws and cannot introduce
additional constraints. Nevertheless, a RN must be chemically meaningful
when the import and export reactions are turned off. That is, its proper
part $(X,\RR^p)$ must be conservative to ensure that it has a chemical
realization.

\section{Realizations of reaction networks}

\subsection{Conservation of atoms and moieties}

Molecules are composed of atoms, which are -- by definition -- preserved in
every chemical reaction. For each atom type $a$, there is a conservation
law that accounts for the number of atoms of type $a$ in each compound~$x$. More precisely, denote by $\mathbf{A}_{ax}\in\N_0$ the number of atoms
of type $a$ in molecule $x$, i.e., the coefficients in the \emph{chemical
  sum formula} $\sum_a \mathbf{A}_{ax} \, a$ for compound $x$.
(Alternatively, we may think of sum formulas as multisets of atoms.)
Conservation of atoms in reaction $r$ therefore becomes
\begin{equation}
  \sum_x \mathbf{A}_{ax} \SM_{xr} = 0 .
\end{equation}
For all reactions and in matrix form, this condition reads
$\mathbf{A}\SM=0$.  Each row of the matrix $\mathbf{A}$ thus is a
non-negative linear reaction invariant, i.e., a non-negative conservation
law.

Conserved \emph{moieties} are groups of atoms that remain intact in all
reactions in which they appear
\cite{Schuster:91,Famili:03,Haraldsdottir:18}. Like atoms, they lead to
non-negative integer conservation laws.

However, (the vectors representing) conserved atoms or moieties need not
span the left kernel of the stoichiometric matrix $\SM$ and need not be
linearly independent.  To see this, consider the following two RNs
comprising a single reaction. For
\begin{equation} 
  \ce{MgCO3 \longrightarrow MgO + CO2}
\end{equation}
with $\SM = (-1, 1, 1)^\trans$, there are only two linearly independent
conservation laws, e.g.\ $(1,1,0)$ and $(1,0,1)$, corresponding to the
\emph{moieties} \ce{MgO} and \ce{CO2}, while the three vectors for the
atomic composition $A_{\ce{Mg}}=(1,1,0)$, $A_{\ce{C}}=(1,0,1)$, and
$A_{\ce{O}}=(3,1,2)$ are linearly dependent. On the other hand, as noted in
\cite{Schuster:91},
\begin{equation}
  \ce{C6H5CH3 + H2 \longrightarrow C6H6 + CH4}
  \label{ce:hydalktol}
\end{equation}
with $\SM=(-1,-1,1,1)^\trans$ has three conservation laws but only two atom
types, which correspond to the conservation laws $A_{\ce{C}}=(7,0,6,1)$ and
$A_{\ce{H}}=(8,2,6,4)$. E.g.\ the phenyl-moiety $M_{ph}=(1,0,1,0)$ or the
methyl-moiety $M_{\ce{CH4}}=(1,0,0,1)$ form the missing third, linearly
independent conservation law. The latter example also shows that atom
conservation relations are not necessarily support-minimal among the
non-negative integer left-kernel vectors of~$\SM$. In fact, also $(0,1,1,0)$
and $(0,1,0,1)$ are left-kernel vectors of~$\SM$, the chemical
interpretation of which is less obvious.

These examples show that key chemical properties such as atom conservation
or conservation of moieties are \emph{not} encoded in the stoichiometric
matrix $\SM$. In other words, two RNs can be isomorphic as hypergraphs but
describe reactions between sets of compounds that are not isomorphic in
terms of their sum formulas. For example, $\SM=(-1,-1,1,1)^\trans$ is
realized by the hydroalkylation of toluene in Equ.~(\ref{ce:hydalktol}),
but also by the inorganic reaction
\begin{equation}
  \ce{ MgO + H2SO4 \longrightarrow MgSO4 + H2O } ,
\end{equation}
having four atom conservation laws, $A_{\ce{Mg}}=(1,0,1,0)$,
$A_{\ce{O}}=(1,4,4,1)$, $A_{\ce{H}}=(0,2,0,2)$, $A_{\ce{S}}=(0,1,1,0)$, and
three moiety convervation laws, e.g.  $M_{\ce{MgO}}= (1,0,1,0)$,
$M_{\ce{H2O}}= (0,1,0,1)$, and $M_{\ce{SO3}}= (0,1,1,0)$.

\emph{``Semi-positive'' conservation laws} \cite{Schuster:91,Schuster:95}
of a RN are the non-zero elements of the polyhedral cone
\begin{equation}
  K(\SM) = \left\{ \yy\in \R^{X} \mid \yy\SM=0,
    \, \yy \ge 0 \right\},
\end{equation}
the non-negative left-kernel of $\SM$.  Thereby, $K(\SM)$ is an s-cone as
defined in \cite{MuellerRegensburger2016}, given by a subspace (here:
$\ker \SM^\trans$) and non-negativity conditions.  Since the s-cone $K(\SM)$
is contained in the non-negative orthant, its extreme (non-decomposable)
vectors agree with its support-minimal vectors.  Further, since $\SM$ is an
integer matrix, all extreme vectors of $K(\SM)$ are positive real multiples
of integer vectors.

All potential \emph{moiety conservation laws} (MCLs) \cite{DeMartino:14}
for a given stoichiometric matrix $\SM$ (but unknown atomic composition) are
non-zero, integer elements of $K(\SM)$, i.e., elements of the set
\begin{equation}
  \mathcal{K}(\SM) =
  \left\{ \yy\in \N_0^{X} \mid \yy\SM=0 \right\} \setminus \{0\} .
\end{equation}
Clearly, $\mathcal{K}(\SM)$ contains the integer extreme vectors of
$K(\SM)$.  Ultimately, one is interested in {\em minimal} MCLs, i.e.,
minimal elements of $\mathcal{K}(\SM)$, cf.~\cite{Graver1975}.  (Minimal
vectors are called maximal in \cite{Schuster:95}.)
\begin{definition}
  A vector $\yy\in\mathcal{K}(\SM)$ is \emph{minimal} if there is no
  $\mathbf{y'}\in \mathcal{K}(\SM)$ such that $\mathbf{y'}<\yy$.
\end{definition}
In fact, integer minimality and integer non-decompos\-ability are
equivalent.
\begin{proposition}
  \label{pro:intmin}
  Let $\yy\in\mathcal{K}(\SM)$.  The following statements are equivalent:
  \begin{itemize}
  \item[1.] $\yy$ is minimal.
  \item[2.] There are no two $\yy', \yy''\in\mathcal{K}(\SM)$
    such that $\yy=c'\yy'+c''\yy''$ with
    $c',c''\in\N$.
  \end{itemize}
\end{proposition}
\begin{proof}
  Suppose $\yy'<\yy$. Then $\yy=1\cdot(\yy-\yy')+1\cdot \yy'$.
  Conversely, suppose $\yy=c'\yy'+c''\yy''$. Then $\yy', \yy'' < \yy$.
\end{proof}
Most importantly, the minimal MCLs generate all MCLs.

\begin{theorem}
  Every element of $\mathcal{K}(\SM)$ is a finite integer linear
  combination of minimal elements of $\mathcal{K}(\SM)$.
\end{theorem}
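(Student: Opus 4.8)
The plan is to argue by strong induction on the $\ell_1$-norm $\|\yy\|_1 := \sum_{x\in X}\yy_x$, which is a positive integer for every $\yy\in\mathcal{K}(\SM)$ (as $\yy\ge 0$ is an integer vector and $\yy\ne 0$). The structural fact I would rely on is that $\mathcal{K}(\SM)\cup\{0\}$ is a submonoid of $\N_0^X$ that is closed under nonnegative integer scaling: $\yy\SM = 0$ and $\yy'\SM = 0$ give $(c\yy + c'\yy')\SM = 0$ for all $c,c'\in\N_0$, and such a combination lies in $\mathcal{K}(\SM)$ precisely when it is nonzero.

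For the inductive step I would fix $\yy\in\mathcal{K}(\SM)$ and assume the claim for all elements of strictly smaller $\ell_1$-norm. If $\yy$ is minimal, it is already the trivial combination $1\cdot\yy$. If not, Proposition~\ref{pro:intmin} supplies $\yy',\yy''\in\mathcal{K}(\SM)$ and $c',c''\in\N$ with $\yy = c'\yy' + c''\yy''$. Then $c'\yy'$ and $c''\yy''$ both lie in $\mathcal{K}(\SM)$ (they are nonzero nonnegative integer vectors annihilating $\SM$ on the left), and $\|c'\yy'\|_1 + \|c''\yy''\|_1 = \|\yy\|_1$ with each summand at least $1$; hence each has strictly smaller norm than $\yy$, the induction hypothesis applies to both, and adding their two representations yields one for $\yy$.

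The base case $\|\yy\|_1 = 1$ is immediate: $\yy = e_x$ for some $x\in X$, and the only nonnegative integer vectors $\le \yy$ are $0$ and $\yy$ itself; since $0\notin\mathcal{K}(\SM)$, $\yy$ is minimal. (It is in fact subsumed by the inductive step, because a norm-$1$ vector admits no decomposition into two nonzero summands and so is minimal by Proposition~\ref{pro:intmin}.)

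I do not expect a real obstacle here; this is the standard ``irreducibles generate'' descent for finitely generated submonoids of $\N_0^X$. The one point to get right is that a genuine decomposition $\yy = c'\yy' + c''\yy''$ strictly decreases the norm of each part, which hinges on $\yy',\yy''$ being nonzero --- a condition baked into the definition of $\mathcal{K}(\SM)$ (it excludes $0$) and into Proposition~\ref{pro:intmin}. I would also remark, though it is not needed for the present statement, that the set of minimal elements of $\mathcal{K}(\SM)$ is moreover \emph{finite} (a Hilbert/Graver basis; cf.\ Dickson's and Gordan's lemmas), so that $\mathcal{K}(\SM)$ is in fact generated as a monoid by finitely many minimal vectors.
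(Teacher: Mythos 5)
Your proof is correct and takes essentially the same route as the paper, which argues ``by Noetherian induction on the partial order $<$ on $\N_0^X$ and Proposition~\ref{pro:intmin}''; your strong induction on $\|\yy\|_1$ is just a concrete rank function implementing that well-founded descent, with the same appeal to Proposition~\ref{pro:intmin} for the decomposition step. The point you single out --- that nonzeroness of $\yy',\yy''$ forces a strict decrease --- is indeed the only detail that needs care, and you handle it correctly.
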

\begin{proof}
  By Noetherian induction on the partial order $<$ on $\N^X_0$ and
  Proposition~\ref{pro:intmin}.
\end{proof}
Knowing all minimal MCLs allows to represent the compounds $X$ of a RN
$(X,\RR)$ in a minimal (most coarse-grained) way.
\begin{definition}
  The minimal moiety representation (short: \emph{mm-representation}) of a
  conservative RN $(X,\RR)$ is the matrix
  $\mathbf{M}\in\N_0^{\mathcal{M} \times X}$, where the rows of
  $\mathbf{M}$ are the minimal MCLs, and $\mathcal{M}$ is the corresponding
  set of abstract moieties.
\end{definition}
For example, consider the abstract chemical reaction
\begin{equation} \label{ex:ABC}
\ce{A + B $\longrightarrow$ 2C}
\end{equation}
with $\SM = (-1,-1,2)^\trans$.  There are three minimal MCLs denoted by the
abstract moieties $\mathcal{M} = \{ \ce{X,Y,Z} \}$: on the one hand,
$M_{\ce{X}}=(2,0,1)$ and $M_{\ce{Y}}=(0,2,1)$, which are (minimal) extreme
vectors of $K(\SM)$, on the other hand, $M_{\ce{Z}}=(1,1,1)$, which is
minimal, but not extreme.  Hence, the mm-representation is given by
\begin{equation} 
  \mathbf{M} = \begin{pmatrix} 2&0&1\\ 0&2&1 \\ 1&1&1 \end{pmatrix},
\end{equation}
and the reaction \eqref{ex:ABC} can be represented as
\begin{equation} 
  \ce{X_2Z + Y_2Z $\longrightarrow$ 2 XYZ} .
\end{equation}

By definition, $\im \mathbf{M}^\trans \subseteq \ker \SM^\trans$.  In fact,
$\im \mathbf{M}^\trans = \ker \SM^\trans$, and hence there is an obvious
lower bound for the number of minimal MCLs.
\begin{lemma} \label{lem:dim} Let
  $\mathbf{M}\in\N_0^{\mathcal{M} \times X}$ be the mm-representation of a
  conservative RN $(X,\RR)$ with stoichiometric matrix~$\SM$. Then,
  $\im \mathbf{M}^\trans = \ker \SM^\trans$ and hence
  $|\mathcal{M}|\ge\dim \ker \SM^\trans$.
\end{lemma}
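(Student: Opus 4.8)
The plan is to prove the two inclusions $\im\mathbf{M}^\trans\subseteq\ker\SM^\trans$ and $\ker\SM^\trans\subseteq\im\mathbf{M}^\trans$ separately. The first is immediate and was already observed above: every row of $\mathbf{M}$ is a (minimal) MCL, so $\mathbf{M}\SM=0$, equivalently $\SM^\trans\mathbf{M}^\trans=0$, which says exactly that every column of $\mathbf{M}^\trans$, hence all of $\im\mathbf{M}^\trans$, lies in $\ker\SM^\trans$. The real content is the reverse inclusion, and I would derive it from conservativity together with the theorem just proved, that every element of $\mathcal{K}(\SM)$ is a finite integer linear combination of minimal elements of $\mathcal{K}(\SM)$, i.e.\ of rows of $\mathbf{M}$.

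For the reverse inclusion, first note that $\SM$ is an integer matrix, so $\ker\SM^\trans$ is a rational subspace of $\R^X$ and is therefore spanned over $\R$ by its integer vectors; it thus suffices to show that every integer vector $\yy\in\ker\SM^\trans$ lies in $\im\mathbf{M}^\trans$. Next I would produce a strictly positive \emph{integer} conservation law: by hypothesis $(X,\RR)$ is conservative, so the set $\{\mm\in\ker\SM^\trans:\mm\gg0\}$ is non-empty, and being the intersection of the open cone $\{\mm\gg0\}$ with the rational subspace $\ker\SM^\trans$ it contains a rational point; clearing denominators yields $\mm\in\N_0^X$ with $\mm\gg0$ and $\mm^\trans\SM=0$, so that $N\mm\in\mathcal{K}(\SM)$ for every $N\in\N$. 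Now, given an integer vector $\yy\in\ker\SM^\trans$, choose $N\in\N$ so large that $\yy+N\mm\gg0$; then $\yy+N\mm$ is a non-zero non-negative integer vector in $\ker\SM^\trans$, hence $\yy+N\mm\in\mathcal{K}(\SM)$. By the generation theorem both $\yy+N\mm$ and $N\mm$ are finite integer linear combinations of rows of $\mathbf{M}$ and so lie in $\im\mathbf{M}^\trans$; subtracting gives $\yy=(\yy+N\mm)-N\mm\in\im\mathbf{M}^\trans$. This establishes $\ker\SM^\trans\subseteq\im\mathbf{M}^\trans$, and together with the first inclusion, $\im\mathbf{M}^\trans=\ker\SM^\trans$.

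Finally, the dimension bound follows at once: $\operatorname{rank}\mathbf{M}=\dim\im\mathbf{M}^\trans=\dim\ker\SM^\trans$, and since $\mathbf{M}$ has $|\mathcal{M}|$ rows we get $|\mathcal{M}|\ge\operatorname{rank}\mathbf{M}=\dim\ker\SM^\trans$. I expect the only step needing a little care to be the extraction of an \emph{integer} strictly positive conservation law from the real one guaranteed by conservativity — a routine density argument exploiting that $\ker\SM^\trans$ is a rational subspace; everything else is bookkeeping on top of the already-established integer generation result.
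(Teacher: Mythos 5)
Your proof is correct, but it takes a genuinely different route from the paper's. The paper's argument is polyhedral: conservativity makes the s-cone $K(\SM)$ full-dimensional in $\ker\SM^\trans$, so its extreme vectors generate that subspace, and the minimal integer representatives of the extreme rays are among the rows of $\mathbf{M}$; the generation theorem for $\mathcal{K}(\SM)$ is not used at all. You instead work arithmetically: you extract a strictly positive \emph{integer} conservation law $\mm$ from conservativity (a correct density argument on the rational subspace $\ker\SM^\trans$), translate an arbitrary integer kernel vector $\yy$ by $N\mm$ into $\mathcal{K}(\SM)$, and then invoke the integer generation theorem to write both $\yy+N\mm$ and $N\mm$ over the rows of $\mathbf{M}$, so $\yy$ lies in their real span as a difference. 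Your route avoids the (mildly delicate) appeal to extreme rays of a full-dimensional pointed cone spanning the ambient space, at the cost of leaning on the generation theorem and Dickson-type finiteness of the minimal MCLs; the paper's route is shorter given the cone machinery already set up in that section and in fact shows the slightly stronger fact that the extreme minimal MCLs alone already span $\ker\SM^\trans$. Both establish the easy inclusion $\im\mathbf{M}^\trans\subseteq\ker\SM^\trans$ the same way, and the concluding rank bound is identical.
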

\begin{proof}
  Since the left kernel of $\SM$ and hence $K(\SM)$ contain a positive
  vector, we have $\dim K(\SM) = \dim \ker \SM^\trans \eqqcolon d$.  Hence,
  (the extreme vectors of) $K(\SM)$ and therefore also (the corresponding
  minimal integer vectors of) $\mathcal{K}(\SM)$ generate
  $\ker \SM^\trans$, that is, $\im \mathbf{M}^\trans = \ker \SM^\trans$.
  Hence, the number of minimal MCLs is greater equal $d$, that is,
  $|\mathcal{M}|\ge\dim \ker \SM^\trans$.
\end{proof}

By instantiating the abstract moieties $\{ \ce{X},\ce{Y},\ce{Z} \}$ with
sum formulas (multisets of atoms), every chemical realization of the
reaction can be obtained.  In general, we define an instance as follows.
\begin{definition} \label{def:sfins}
  A sum formula instance (short: \emph{sf-instance}) of a 
  RN $(X,\RR)$ with stoichiometric matrix~$\SM$ is a matrix
  $\mathbf{A}\in\N_0^{\mathcal{A} \times X}$ for some non-empty,
  finite set $\mathcal{A}$ of ``atoms'' such that
  \begin{itemize}
  \item[(i)] each column of $\mathbf{A}$ is non-zero, and
  \item[(ii)] $\mathbf{A} \SM = 0$.
  \end{itemize}
\end{definition}
Def.~\ref{def:sfins} in particular allows that $\mathbf{A}$ comprises a
single row. By condition (i), this row vector is a strictly positive
conservation law, which, as a linear combination of MCLs, may be chosen to
be integer valued. Conversely, if $(X,\RR)$ admits an sf-instance, then the
column-sum $\mathbf{m}= \mathbf{1}^\trans\mathbf{A}\in \ker\SM^\trans$ is a
strictly positive integer conservation law and thus in particular an
sf-instance with $|\mathcal{A}|=1$.  Taken together, we have shown the
following existence result.
\begin{proposition} \label{prop:sf-instance}
  A RN $(X,\RR)$ admits an \emph{sf-instance} if and only if it is
  conservative.
\end{proposition}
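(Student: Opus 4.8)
The plan is to collect the two observations already made in the paragraph preceding the statement into a single biconditional, the only mildly delicate point being the passage from a real to an integer conservation law.

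For the \emph{only if} direction, suppose $(X,\RR)$ admits an sf-instance $\mathbf{A}\in\N_0^{\mathcal{A}\times X}$. I would form the vector of column sums $\mm\in\N_0^X$ given by $\mm_x = \sum_{a\in\mathcal{A}}\mathbf{A}_{ax}$. By condition~(i), each column of $\mathbf{A}$ is a non-zero vector of non-negative integers, so $\mm_x\ge 1$ for every $x\in X$, i.e.\ $\mm\gg 0$; by condition~(ii), $\mm^\trans\SM = \mathbf{1}^\trans(\mathbf{A}\SM)=0$. Hence $\mm$ is a strictly positive conservation law and $(X,\RR)$ is conservative. (Incidentally, this shows every sf-instance can be coarsened to one with $|\mathcal{A}|=1$.)

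For the \emph{if} direction, suppose $(X,\RR)$ is conservative, so there is $\mm\in\R^X$ with $\mm\gg 0$ and $\mm^\trans\SM=0$. The task is to produce a single strictly positive \emph{integer} conservation law: I would invoke Lemma~\ref{lem:dim}, by which the minimal MCLs span $\ker\SM^\trans$. Since $\mm\in\ker\SM^\trans$ has $\mm_x\neq 0$ for each $x$, for every coordinate $x\in X$ at least one minimal MCL has a non-zero — hence positive, as its entries are non-negative integers — $x$-component. Therefore the sum $\mathbf{m}'$ of all minimal MCLs lies in $\N^X$ with $\mathbf{m}'\gg 0$ and $(\mathbf{m}')^\trans\SM=0$. (Equivalently, since $\SM$ is integral, $\ker\SM^\trans$ has a rational basis and the strictly positive vectors form an open subset of it containing $\mm$; choosing a rational point there and clearing denominators gives the same $\mathbf{m}'$.) Finally, take $\mathcal{A}$ to be any one-element set and let $\mathbf{A}\in\N_0^{\mathcal{A}\times X}$ be the single row $(\mathbf{m}')^\trans$: each column is the single positive entry $\mathbf{m}'_x$, so~(i) holds, and $\mathbf{A}\SM=(\mathbf{m}')^\trans\SM=0$, so~(ii) holds; thus $\mathbf{A}$ is an sf-instance.

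Both directions are short, so there is no genuine obstacle; the only step that deserves explicit care is the existence of a strictly positive \emph{integer} element of $\ker\SM^\trans$, which I would handle via the spanning property in Lemma~\ref{lem:dim} exactly as anticipated by the remark before the statement (alternatively via density of rationals and a least common denominator, as already used for integer futile cycles earlier in the paper).
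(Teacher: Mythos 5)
Your proof is correct and follows essentially the same route as the paper, which establishes this proposition in the paragraph preceding it: column sums of $\mathbf{A}$ give a positive integer conservation law for the ``only if'' direction, and a single-row sf-instance built from an integer-valued positive conservation law (obtained via the MCLs, i.e.\ Lemma~\ref{lem:dim}) gives the ``if'' direction. You merely make explicit the integrality step that the paper compresses into ``may be chosen to be integer valued,'' which is a welcome clarification but not a different argument.
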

The entry $\mathbf{m}_x$ of $\mathbf{m}$ can be interpreted as the total
number of atoms in compound $x\in X$. In \cite{Doty:18}, a RN is called
\emph{primitive atomic} if each reaction preserves the total number of
atoms. Thus a RN is primitive atomic if and only if it is conservative,
cf.~\cite{Doty:18}.

\subsection{Isomers and Sum Formula Realizations}

In order to gain a better understanding of sf-instances for a RN $(X,\RR)$,
we consider net reactions of the form \ce{X\to Y} in the reversible
completion of $(X,\RR)$. That is, we ask whether it is possible, in
principle, to convert \ce{X} into \ce{Y}, irrespective of whether the
conversion is thermodynamically favorable. From a chemical perspective, if
such a \emph{net isomerization reaction} exists, then \ce{X} and \ce{Y}
must be compositional isomers. These will play a key role in our discussion
of realizations of $(X,\RR)$ in terms of sf-instances.

Before we proceed, we first give a more formal account of net isomerization
reactions. Recall that a net reaction derives from an overall reaction,
which in turn is specified by an integer hyperflow. Instead of working
explicitly in the reversible completion, we may instead consider vectors
$\vv\in\mathbb{Z}^{\RR}$ with negative entries $\vv_r<0$, representing the
reverse of irreversible reactions $r\in\RR$.
\begin{definition}
  \label{def:obiso}
  Let $(X,\RR)$ be a RN with stoichiometric matrix $\SM$.  A vector
  $\vv\in\mathbb{Z}^{\RR}$, satisfying
  $k\coloneqq -[\SM\vv]_x = [\SM\vv]_y\in\N$ for some $x,y \in X$ and
  $[\SM\vv]_z=0$ for all $z\in X\setminus\{x,y\}$, specifies a \emph{net
    isomerization reaction} $k\,x\to k\,y$.  Two (distinct) compounds
  $x,y\in X$ are \emph{obligatory isomers} if $(X,\RR)$ admits a net
  isomerization reaction $k\,x\to k\,y$.  We write $x\rightleftharpoons y$
  if $x=y$ or $x$ and $y$ are obligatory isomers.
\end{definition}

\begin{proposition}
  \label{prop:equirel}
  The binary relation $x\rightleftharpoons y$ introduced in
  Def.~\ref{def:obiso} is an equivalence relation.
\end{proposition}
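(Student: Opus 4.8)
The plan is to verify the three axioms of an equivalence relation directly from Definition~\ref{def:obiso}, using only that the set of admissible vectors $\vv\in\mathbb{Z}^\RR$ is closed under negation and integer linear combination and that $\vv\mapsto\SM\vv$ is linear.

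Reflexivity holds by fiat, since $x\rightleftharpoons x$ is part of the definition. For symmetry, I would start from a witness $\vv\in\mathbb{Z}^\RR$ of $x\rightleftharpoons y$ with $x\ne y$, i.e.\ $-[\SM\vv]_x=[\SM\vv]_y=k\in\N$ and $[\SM\vv]_z=0$ for $z\notin\{x,y\}$, and observe that $-\vv\in\mathbb{Z}^\RR$ witnesses $y\rightleftharpoons x$, because linearity gives $-[\SM(-\vv)]_y=[\SM(-\vv)]_x=k$ with all other components still zero.

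The only step needing an actual construction is transitivity. Given $x\rightleftharpoons y$ and $y\rightleftharpoons z$, if two of the three compounds coincide the claim collapses to reflexivity or to a case already handled, so I would assume $x,y,z$ pairwise distinct. Let $\vv_1,\vv_2\in\mathbb{Z}^\RR$ witness the net isomerizations $k_1\,x\to k_1\,y$ and $k_2\,y\to k_2\,z$ respectively, with $k_1,k_2\in\N$. The key idea is to rescale the two flows so that the shared intermediate $y$ cancels: put $\vv\coloneqq k_2\vv_1+k_1\vv_2\in\mathbb{Z}^\RR$. Then $[\SM\vv]_y=k_2k_1+k_1(-k_2)=0$, while $[\SM\vv]_x=-k_1k_2$ and $[\SM\vv]_z=k_1k_2$, and $[\SM\vv]_w=0$ for every remaining $w\in X$; hence $\vv$ specifies the net isomerization reaction $(k_1k_2)\,x\to(k_1k_2)\,z$, so $x\rightleftharpoons z$.

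I do not expect a genuine obstacle: the statement is a routine verification once one spots the rescaling trick in the transitivity step. The only points requiring mild care are that the rescaled combination stays an integer vector (it does, being an $\N$-combination of integer vectors), that $k_1k_2\in\N$, and that the degenerate cases in which $x,y,z$ are not all distinct are dispatched cleanly.
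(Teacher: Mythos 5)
Your proof is correct and follows essentially the same route as the paper's: reflexivity by definition, symmetry via $-\vv$, and transitivity via the rescaled combination $k_2\vv_1+k_1\vv_2$ cancelling the intermediate $y$ (the paper uses $q\vv_1+p\vv_2$ with the same effect). Your explicit handling of the degenerate cases where $x,y,z$ are not pairwise distinct is a minor point of extra care that the paper leaves implicit.
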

\begin{proof}
  By definition, $\rightleftharpoons$ is reflexive.  If $\vv$ specifies the
  net isomerization reaction $k\,x\to k\,y$, then $-\vv$ specifies
  $k\,y\to k\,x$, and thus $\rightleftharpoons$ is symmetric. To verify
  transitivity, suppose $x\rightleftharpoons y$ and
  $y\rightleftharpoons z$, i.e., there are vectors $\vv^1$ and $\vv^2$ that
  specify the net isomerization reactions $p\,x \to p\,y$ and
  $q\,y \to q\,z$. Then $\vv = q\vv_1 + p\vv_2$ satisfies $[\SM\vv]_x=-pq$,
  $[\SM\vv]_z=pq$, $[\SM\vv]_y=0$, and $[\SM\vv]_u=0$ for all
  $u\in X\setminus\{x,y,z\}$, and thus specifies the net isomerization
  reaction $(pq)\,x \to (pq)\,z$. Thus, $\rightleftharpoons$ is transitive.
\end{proof}

The intuition is to define a \emph{sum formula realization} of a RN as a matrix
$\mathbf{A}$ that (i) is an sf-instance of the RN and (ii) assigns
different atomic compositions to $x$ and $y$ whenever
$x\not\rightleftharpoons y$, that is, whenever $x$ and $y$ are not isomers.
In the following, we will see that such a definition both ensures chemical
realism and leads to a useful mathematical description. The next result
relates net isomerization reactions to the structure of $\ker\SM^\trans$
(and ultimately to compositional isomers as given by MCLs and
sf-instances).

\begin{theorem}
  \label{thm:iso<->kernel}
  Let $(X,\RR)$ be a RN with stoichiometric matrix $\SM$. 
  Then $x\rightleftharpoons y$ if and only if
  $\mathbf{m}_x=\mathbf{m}_y$ for all $\mathbf{m}\in\ker\SM^\trans$.
\end{theorem}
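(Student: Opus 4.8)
The plan is to translate both directions into linear algebra over $\R$, exploiting the standard duality $\ker\SM^\trans = (\im\SM)^\perp$ together with the fact that $\SM$ has integer entries. Throughout, let $\mathbf{e}_x\in\R^X$ denote the standard unit vector supported on $x$, so that, by Definition~\ref{def:obiso}, a net isomerization reaction $k\,x\to k\,y$ is exactly a vector $\vv\in\mathbb{Z}^\RR$ with $\SM\vv = k(\mathbf{e}_y-\mathbf{e}_x)$ for some $k\in\N$.

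For the forward implication, suppose $x\rightleftharpoons y$. If $x=y$ there is nothing to prove, so assume there is a net isomerization reaction, i.e.\ a vector $\vv\in\mathbb{Z}^\RR$ with $\SM\vv=k(\mathbf{e}_y-\mathbf{e}_x)$ and $k\in\N$. Then for any $\mathbf{m}\in\ker\SM^\trans$ we have $0 = \mathbf{m}^\trans\SM\vv = k(\mathbf{m}_y-\mathbf{m}_x)$, and since $k\neq 0$ this forces $\mathbf{m}_x=\mathbf{m}_y$, as claimed.

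For the converse, assume $\mathbf{m}_x=\mathbf{m}_y$ for every $\mathbf{m}\in\ker\SM^\trans$; we may assume $x\neq y$. The hypothesis says precisely that $\mathbf{m}^\trans(\mathbf{e}_y-\mathbf{e}_x)=0$ for all $\mathbf{m}\in\ker\SM^\trans=(\im\SM)^\perp$, i.e.\ $\mathbf{e}_y-\mathbf{e}_x\in\big((\im\SM)^\perp\big)^\perp=\im\SM$. Hence the linear system $\SM\mathbf{u}=\mathbf{e}_y-\mathbf{e}_x$ has a real solution $\mathbf{u}\in\R^\RR$. Since $\SM$ and the right-hand side have rational (in fact integer) entries, the system also has a rational solution; multiplying it by the least common denominator $k\in\N$ of its entries yields an integer vector $\vv\coloneqq k\mathbf{u}\in\mathbb{Z}^\RR$ with $\SM\vv=k(\mathbf{e}_y-\mathbf{e}_x)$. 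Thus $[\SM\vv]_x=-k$, $[\SM\vv]_y=k$, and $[\SM\vv]_z=0$ for $z\in X\setminus\{x,y\}$, so $\vv$ specifies the net isomerization reaction $k\,x\to k\,y$ and therefore $x\rightleftharpoons y$.

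The only step needing a moment's care is the passage from a real to a rational (and then integer) solution of $\SM\mathbf{u}=\mathbf{e}_y-\mathbf{e}_x$; this is the same type of argument already invoked in the excerpt for integer futile cycles (a consistent linear system with rational data has a rational solution, e.g.\ by Gaussian elimination, which can be scaled to clear denominators). Everything else follows immediately from the orthogonality identities, so I do not anticipate a genuine obstacle here.
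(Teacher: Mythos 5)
Your proof is correct and follows essentially the same route as the paper's: the forward direction via $\mathbf{m}^\trans\SM\vv=0$, and the converse by observing that $\mathbf{e}_y-\mathbf{e}_x\in(\ker\SM^\trans)^\perp=\im\SM$, then passing to a rational solution and clearing denominators with the least common denominator. The only difference is cosmetic — you spell out why a consistent system with integer data has a rational solution, which the paper merely asserts.
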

\begin{proof}
  First suppose $x\rightleftharpoons y$. Then either $x=y$ (in which case
  the assertion is trivially true) or there is a net isomerization reaction
  $k\,x\to k\,y$ specified by the vector $\vv$. Let
  $\mathbf{m}\in\ker\SM^\trans$. By the definition of $\vv$, we have
  $0 = \mathbf{m}^\trans \SM \vv = \sum_{z\in X} \mathbf{m}_z[\SM\vv]_z =
  \mathbf{m}_x[\SM\vv]_x +\mathbf{m}_y[\SM\vv]_y =
  (\mathbf{m}_x-\mathbf{m}_y)[\SM\vv]_x$ and $[\SM\vv]_x\ne 0$.  Hence,
  $\mathbf{m}_x=\mathbf{m}_y$.

  Now suppose $\mathbf{m}_x=\mathbf{m}_y$ for all
  $\mathbf{m}\in\ker\SM^\trans$ and consider the vector
  $\mathbf{w}\in \mathbb{Z}^{X}$ with $\mathbf{w}_x=-1$, $\mathbf{w}_y=1$, and
  $\mathbf{w}_z=0$ for all $z\in X\setminus\{x,y\}$.  Clearly,
  $\langle \mathbf{m},\mathbf{w}\rangle=0$ for all
  $\mathbf{m}\in\ker\SM^\trans$, that is,
  $\mathbf{w}\in (\ker\SM^\trans)^\perp = \im \SM$.  Thus there is
  $\vv\in\R^{\RR}$ such that $\mathbf{w}=\SM\vv$. Since
  $\SM\in\mathbb{Z}^{X\times\RR}$, the solution $\vv$ of this linear
  equation is rational. Writing $\lcd(\vv)$ for the least common denomintor
  of the entries in $\vv$, we obtain the integer vector
  $\lcd(\vv)\, \vv\in \mathbb{Z}^{\RR}$, specifying the net isometrization
  reaction $\lcd(\vv)\, x\to \lcd(\vv)\, y$.  By definition,
  $x\rightleftharpoons y$.
\end{proof}

The proof of Thm.~\ref{thm:iso<->kernel} also provides a simple algorithm
to compute integer hyperflows $\vv$ that specify net isomerization
reactions and to identify the obligatory isomers: For each pair $x,y\in X$,
construct $\mathbf{w}$ with $\mathbf{w}_x=-1$ and $\mathbf{w}_y=1$ being
the only non-zero entries and solve the linear equation $\SM\vv=\mathbf{w}$. We
have $x\rightleftharpoons y$ if and only if a solution exists, in which
case the desired integer hyperflow is $\lcd(\vv)\,\vv$. 

\begin{figure*}
  \begin{center}
      \includegraphics[width=0.6\textwidth]{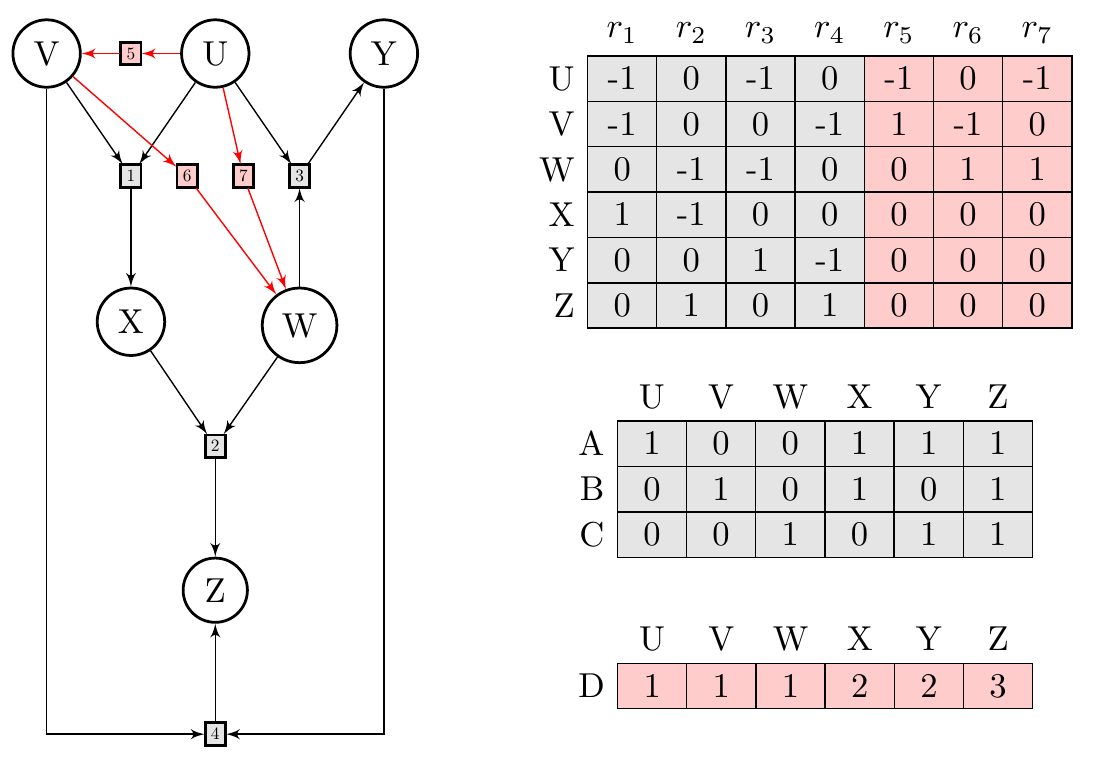}
      \caption{Reaction network (left) and stoichiometric matrix $\SM$ (top
        right) showing reactions $r_1$-$r_4$, Equ.(\ref{eq:first}), in gray
        and the isomerization reactions $r_5$-$r_7$, Equ.(\ref{eq:second})
        in light red. For the basic system (gray) we have
        $\dim\ker\SM^\trans=3$. The three MCLs are shown below $\SM$. In
        the full system, $r_1$-$r_7$, we have $\dim\ker\SM^\trans=1$ with
        the unique MCL shown at the bottom right. In the full system
        \ce{U}, \ce{V}, and \ce{W} form obligatory isomers of the monomer
        \ce{D}.  Similarly, \ce{X} and \ce{Y} are also obligatory isomers
        composed of two D units, while \ce{Z} is a trimer of \ce{D} units.
        The vector $\vv=(-1,0,1,0,0,1,0)$ is represented by the composite
        reaction 
        \ce{X + (U + W) + V \to (U + V) + Y + W} 
        and specifies the net isometrization reaction \ce{ X \to Y }.}
      \label{fig:obligisos}
  \end{center}
\end{figure*}

We next show that obligatory isomers cannot be distinguished by
sf-instances, and conversely, compounds that are not obligatory isomers are
distinguished by certain sf-instances.
\begin{theorem}
  \label{thm:sfreal}
  Let $(X,\RR)$ be a RN with stoichiometric matrix $\SM$
  and $\mathbf{A}\in\N_0^{\mathcal{A}\times X}$ be an
  sf-instance. If $\im \mathbf{A}^\trans = \ker \SM^\trans$, then the
  following statements are equivalent:
\begin{itemize}
\item[(i)] $x,y \in X$ are obligatory isomers;
\item[(ii)] $\mathbf{A}_{ax}=\mathbf{A}_{ay}$ for all $a\in\mathcal{A}$.
\end{itemize}
If $\im \mathbf{A}^\trans \subseteq \ker \SM^\trans$, then (i) implies (ii).
\end{theorem}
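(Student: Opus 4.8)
The plan is to prove the two implications separately, using Theorem~\ref{thm:iso<->kernel} as the bridge between obligatory isomers and equality of coordinates across $\ker\SM^\trans$. The key observation is that each row $\mathbf{A}_{a\cdot}$ of an sf-instance lies in $\ker\SM^\trans$ (this is exactly condition (ii) of Definition~\ref{def:sfins}, $\mathbf{A}\SM=0$), so the rows of $\mathbf{A}$ generate a subspace of $\ker\SM^\trans$; under the hypothesis $\im\mathbf{A}^\trans=\ker\SM^\trans$ they in fact span it.

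First I would show (i)$\Rightarrow$(ii), which needs only $\im\mathbf{A}^\trans\subseteq\ker\SM^\trans$. Suppose $x,y$ are obligatory isomers. Fix any $a\in\mathcal{A}$ and set $\mathbf{m}\coloneqq\mathbf{A}_{a\cdot}^\trans\in\R^X$, the $a$-th row of $\mathbf{A}$ viewed as a column vector. Since $\mathbf{A}\SM=0$, we have $\mathbf{m}^\trans\SM=0$, i.e.\ $\mathbf{m}\in\ker\SM^\trans$. By Theorem~\ref{thm:iso<->kernel}, $x\rightleftharpoons y$ forces $\mathbf{m}_x=\mathbf{m}_y$, that is, $\mathbf{A}_{ax}=\mathbf{A}_{ay}$. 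Since $a$ was arbitrary, (ii) holds; this also disposes of the final sentence of the theorem.

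Next I would prove (ii)$\Rightarrow$(i) under the stronger hypothesis $\im\mathbf{A}^\trans=\ker\SM^\trans$. Assume $\mathbf{A}_{ax}=\mathbf{A}_{ay}$ for all $a\in\mathcal{A}$. I want to conclude $\mathbf{m}_x=\mathbf{m}_y$ for \emph{every} $\mathbf{m}\in\ker\SM^\trans$, because then Theorem~\ref{thm:iso<->kernel} immediately gives $x\rightleftharpoons y$, i.e.\ (i). So let $\mathbf{m}\in\ker\SM^\trans$ be arbitrary. By the hypothesis $\ker\SM^\trans=\im\mathbf{A}^\trans$, there is a coefficient vector $\mathbf{c}\in\R^{\mathcal{A}}$ with $\mathbf{m}=\mathbf{A}^\trans\mathbf{c}$, so that $\mathbf{m}_x=\sum_{a\in\mathcal{A}}\mathbf{c}_a\,\mathbf{A}_{ax}$ and likewise for $y$. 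The assumed column-wise equality $\mathbf{A}_{ax}=\mathbf{A}_{ay}$ then yields $\mathbf{m}_x=\mathbf{m}_y$ term by term. This completes the equivalence.

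I do not expect a serious obstacle here: the argument is essentially a reformulation of ``the rows of $\mathbf{A}$ span $\ker\SM^\trans$'' combined with Theorem~\ref{thm:iso<->kernel}. The one point to be careful about is the direction of containment: for (i)$\Rightarrow$(ii) mere membership of each row in $\ker\SM^\trans$ suffices (hence the weaker hypothesis works), whereas for the converse one genuinely needs surjectivity onto $\ker\SM^\trans$, since otherwise there could be some $\mathbf{m}\in\ker\SM^\trans$ not expressible via the rows of $\mathbf{A}$, along which $x$ and $y$ differ, breaking obligatory isomerism even though all atom counts agree. It may be worth remarking explicitly that the mm-representation $\mathbf{M}$ of Lemma~\ref{lem:dim} and, more generally, any sf-instance with $\im\mathbf{A}^\trans=\ker\SM^\trans$ (which exists by Proposition~\ref{prop:sf-instance} together with Lemma~\ref{lem:dim}) satisfies the equivalence, so the hypothesis of the theorem is non-vacuous.
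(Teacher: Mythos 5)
Your proof is correct and follows essentially the same route as the paper: both reduce (i) to the condition $\mathbf{m}_x=\mathbf{m}_y$ for all $\mathbf{m}\in\ker\SM^\trans$ via Theorem~\ref{thm:iso<->kernel}, and observe that (ii) is the same condition restricted to $\im\mathbf{A}^\trans$, so that the containment versus equality of $\im\mathbf{A}^\trans$ in $\ker\SM^\trans$ gives exactly the one-way versus two-way implication. The only cosmetic difference is that you spell out the linear-combination step that the paper leaves implicit.
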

\begin{proof}
  Let $x,y \in X$ be distinct.  On the one hand, by
  Theorem~\ref{thm:iso<->kernel}, statement (i) is equivalent to
  $\mathbf{m}_x=\mathbf{m}_y$ for all $\mathbf{m} \in \ker \SM^\trans$.  On
  the other hand, statement (ii) is equivalent to
  $\mathbf{m}_x=\mathbf{m}_y$ for all
  $\mathbf{m} \in \im \mathbf{A}^\trans$.  If
  $\im \mathbf{A}^\trans = \ker \SM^\trans$, then (i) and (ii) are equivalent.
  If $\im \mathbf{A}^\trans \subseteq \ker \SM^\trans$, that is, if the
  rows of $\mathbf{A}$ are elements of $\ker \SM^\trans$, then (i) implies
  (ii).
\end{proof}

Any sf-instance $\mathbf{A}$ whose rows span $\ker\SM^\trans$ not only
identifies obligatory isomers, but also assigns distinct sum formulas to
any distinct compounds $x,y\in X$ that are not obligatory isomers. In this
case, there is at least one row (corresponding to atom $a$) for which
$\mathbf{A}_{ax}\ne\mathbf{A}_{ay}$.  This provides the formal
justification for a mathematical definition of sum formula realizations.
\begin{definition}
  \label{def:sfrel}
  A sum formula realization (short: \emph{sf-realization}) of a RN
  $(X,\RR)$ with stoichiometric matrix $\SM$ is a matrix
  $\mathbf{A}\in\N_0^{\mathcal{A} \times X}$ for some non-empty, finite set
  $\mathcal{A}$ of ``atoms'' such that
  \begin{itemize}
  \item[(i)] each column of $\mathbf{A}$ is non-zero and
  \item[(ii)] $\im \mathbf{A}^\trans = \ker\SM^\trans$.
  \end{itemize}
\end{definition}
As an illustration, consider the RN
\begin{equation}
  \label{eq:first}
  \begin{split}
    \ce{U + V $\longrightarrow$ X}, &\quad \ce{U + W $\longrightarrow$ Y}, \\
    \ce{X + W $\longrightarrow$ Z}, &\quad \ce{Y + V $\longrightarrow$ Z},
  \end{split}
\end{equation}
depicted on the left side of Fig.~\ref{fig:obligisos}. The RN can be
instantiated by the sum formulas $\ce{U} = \ce{A}$, $\ce{V}=\ce{B}$,
$\ce{W}=\ce{C}$, $\ce{X}=\ce{AB}$, $\ce{Y}=\ce{AC}$, $\ce{Z} = \ce{ABC}$.
The corresponding matrix~$\mathbf{A}$ (middle right in
Fig.~\ref{fig:obligisos}) is not only an sf-instance, its rows also span
$\ker\SM^\trans$, and hence it is an sf-realization. (In fact, it is also
the mm-representation.)  A ``reduced representation'' can be obtained by
\emph{assuming} that \ce{U}, \ce{V}, and \ce{W} are compositional isomers
corresponding to the same moiety~\ce{D}, that is,
$\ce{U} = \ce{V} = \ce{W} = \ce{D}$.  As a consequence, \ce{X} and \ce{Y}
are also compositional isomers, $\ce{X} = \ce{Y} = \ce{D2}$, and further
$\ce{Z} = \ce{D3}$.  The corresponding matrix $\mathbf{A'}$ still defines
an sf-instance, but its rows do not span $\ker\SM^\trans$.  Now consider an
extension of the RN in Equ.~(\ref{eq:first}), by adding three
isometrization reactions,
\begin{equation}
  \label{eq:second}
  \ce{U $\longrightarrow$ V}, \quad \ce{V $\longrightarrow$ W}, \quad
  \ce{U $\longrightarrow$ W}.
\end{equation}
In the extended network given by Equ.~(\ref{eq:first}) and
Equ.~(\ref{eq:second}), we have $\dim\ker\SM^\trans=1$, and thus there is a
unique MCL. The reactions in Equ.~(\ref{eq:second}) now \emph{enforce} that
\ce{U}, \ce{V}, and \ce{W} are compositional isomers and thus correspond to
the same moiety~\ce{D}.  This coincides with the ``reduced representation''
$\mathbf{A'}$ for the RN in Equ.~(\ref{eq:first}). The distinction is that,
for the RN of Equ.~(\ref{eq:first}), we may (but do not have to) assume
that \ce{U}, \ce{V}, and \ce{W} are isomers, whereas in the extended
network, no other interpretation is possible.

Finally, we characterize RNs that admit an sf-realization.
\begin{proposition}
  \label{prop:sf}
  A RN $(X,\RR)$ admits an \emph{sf-realization} if and only if it is
  conservative.
\end{proposition}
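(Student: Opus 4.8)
The plan is to prove the two implications separately. The forward implication is immediate; the reverse one carries the content, but even there no genuinely new idea is needed beyond combining Lemma~\ref{lem:dim} with the definition of conservativity.

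For necessity, observe that every sf-realization is in particular an sf-instance: condition~(ii), $\im\mathbf{A}^\trans=\ker\SM^\trans$, implies $\mathbf{A}\SM=0$, and condition~(i) is common to both definitions. Hence Proposition~\ref{prop:sf-instance} applies and $(X,\RR)$ is conservative. Equivalently, one may argue directly: the column-sum vector $\mathbf{m}$ with $\mathbf{m}_x=\sum_{a\in\mathcal{A}}\mathbf{A}_{ax}$ equals $\mathbf{A}^\trans\mathbf{1}\in\im\mathbf{A}^\trans=\ker\SM^\trans$, and it is strictly positive because every column of $\mathbf{A}$ is a non-zero vector in $\N_0^{\mathcal{A}}$; thus $\mathbf{m}$ is a positive conservation law.

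For sufficiency, assume $(X,\RR)$ is conservative and take $\mathbf{M}\in\N_0^{\mathcal{M}\times X}$, its mm-representation. This object is legitimate: conservativity yields a positive conservation law, which after clearing denominators becomes a non-zero element of $\mathcal{K}(\SM)$, so $\mathcal{K}(\SM)\neq\varnothing$, whence its set $\mathcal{M}$ of minimal elements is non-empty and, by Dickson's lemma, finite. Condition~(ii) of Definition~\ref{def:sfrel} is precisely Lemma~\ref{lem:dim}, namely $\im\mathbf{M}^\trans=\ker\SM^\trans$. It remains to check condition~(i), that no column of $\mathbf{M}$ vanishes. Suppose column $x$ were zero, i.e., $\yy_x=0$ for every minimal MCL $\yy$. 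Since these vectors span $\ker\SM^\trans$ by Lemma~\ref{lem:dim}, every $\mathbf{m}\in\ker\SM^\trans$ would then satisfy $\mathbf{m}_x=0$; but conservativity supplies some $\mathbf{m}\gg0$ in $\ker\SM^\trans$, a contradiction. Hence every column of $\mathbf{M}$ is non-zero, so $\mathbf{M}$ is an sf-realization.

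The main obstacle is purely organizational: ensuring the mm-representation is well defined (non-emptiness and finiteness of $\mathcal{M}$) and recognizing that ``the rows span $\ker\SM^\trans$'' together with ``$\ker\SM^\trans$ contains a strictly positive vector'' forces the columns to be non-zero. Both facts are already in place from the earlier development, so the proof consists of assembling them; moreover it exhibits the mm-representation itself as a canonical sf-realization.
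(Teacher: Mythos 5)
Your proof is correct and follows essentially the same route as the paper: the forward direction reduces to Proposition~\ref{prop:sf-instance} via the observation that an sf-realization is an sf-instance, and the converse exhibits the mm-representation as an sf-realization via Lemma~\ref{lem:dim}. You additionally spell out two points the paper treats as implicit -- that the set of minimal MCLs is non-empty and finite, and that no column of $\mathbf{M}$ can vanish because $\ker\SM^\trans$ contains a strictly positive vector -- both of which are correct and worth making explicit.
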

\begin{proof}
  Suppose $(X,\RR)$ admits an sf-realization, which, in particular, is an
  sf-instance.  By Prop.~\ref{prop:sf-instance}, $(X,\RR)$ is conservative.
  Conversely, suppose $(X,\RR)$ is conservative.  By definition, the
  mm-representation is an sf-instance, and by Lemma~\ref{lem:dim}, it is an
  sf-realization.
\end{proof}
  
Obligatory isomers put some restriction on sf-instances.  Still, there is
surprising freedom for sf-realizations.  We say that two sf-realizations
$\mathbf{A}$ and $\mathbf{A'}$ are equivalent, $\mathbf{A}\sim\mathbf{A'}$,
if there are integers $p,q\in\N$ such that $p\mathbf{A}=q\mathbf{A'}$. One
easily checks that $\sim$ is an equivalence relation.  If
$\dim\ker\SM^\trans=1$, then all $\mathbf{m}\in \dim\ker\SM^\trans$ are
multiples of the unique minimal MCL.  All sum formulas are then of the form
$\ce{D}_k$, and thus we can think of compounds simply as integers $k\in\N$.
Every reaction thus can be written in the form
$\sum_k s_{kr}^- \ce{D}_k \to \sum_k s_{kr}^+ \ce{D}_k$ with
$\sum_k (s_{kr}^+ - s_{kr}^-)k = 0$.
An example of practical interest is the rearrangement chemistry of
carbohydrates, found in metabolic networks such as the pentose phosphate
pathway (PPP) or the non-oxidative part of the Calvin–Benson–Bassham (CBB)
cycle in the dark phase of photosynthesis. Carbohydrates may be seen as a
``polymers'' of formaldehyd units and can therefore be written as \ce{D_k =
  (CH2O)_k}. The PPP interconverts pentoses (e.g.\ ribose) and hexoses
(such as glucose), in an atom-economic (no waste) rearrangement network
possessing the overall reaction \ce{6 (CH2O)_5 <=> 5 (CH2O)_6}. In a
similar fashion five 3-phosphoglycerates are reconfigured via carbohydrate
chemistry into three ribulose-5-phosphate which results in the overall
reaction of \ce{5 (CH2O)_3 -> 3 (CH2O)_5} if focusing on the sugar
component. Carbohydrate reaction chemistry is particularly well-suited for
the implementation of isomerization networks, and the logic and structure
of the design space of alternative networks implementing the same overall
reaction has been explored using mathematical and computational models
\cite{Meliendez-Hevia:1985,Andersen:19a}.

For $\dim\ker\SM^\trans > 1$, there is an infinite set of sf-realizations
that are pairwisely inequivalent. To see this, construct matrices
$\mathbf{A}_{t} = (t_1\mathbf{y}^1,t_2\mathbf{y}^2,\dots,t_k\mathbf{y}^k)^\trans$
from $k=\dim\ker\SM^\trans>1$ linearly independent (minimal) MCLs $\yy^i$
and with $t\in\N^k$.  Clearly, every such matrix $\mathbf{A}_{t}$ is an
sf-realization. Furthermore, $\mathbf{A}_{t}\sim \mathbf{A}_{t'}$ if and
only if there are $p,q\in\N$ such that $p{t}= q{t'}$. Hence
$\mathbf{A}_{t}\not\sim \mathbf{A}_{t'}$ if there are two distinct indices
$1\le i<j\le k$ such that $t_i / t'_i \ne t_j / t'_j$.  Clearly, there is
an infinite set $T \subseteq \N^k$ of integer vectors such that this
inequality is satisfied for all distinct ${t},{t'}\in T$. For instance, one
may choose distinct primes for all entries of ${t} \in T$.  Thus there are
infinitely many pairwisely inequivalent sf-realizations.  Furthermore, the
choice of the (minimal) MCLs is not unique, in general, allowing additional
freedom for sf-realizations. Finally, one may produce more complex
sf-realizations by appending additional rows to $\mathbf{A}$ that are
linear combinations of the basis vectors.  Therefore we have the following
result.
\begin{proposition}
  Let $(X,\RR)$ be a conservative RN with stoichiometric matrix $\SM$.  If
  $\dim\ker\SM^\trans>1$, then there are infinitely many in-equivalent
  sf-realizations of $(X,\RR)$.
  \label{prop:infty}
\end{proposition}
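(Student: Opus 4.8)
The plan is to exhibit an explicit infinite family of sf-realizations and to show that no two of them are equivalent.

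First I would use Lemma~\ref{lem:dim}: since $(X,\RR)$ is conservative, the minimal MCLs span $\ker\SM^\trans$, and hence one can select $k\coloneqq\dim\ker\SM^\trans$ linearly independent minimal MCLs $\yy^1,\dots,\yy^k\in\N_0^X$; by hypothesis $k\ge 2$. For each $t=(t_1,\dots,t_k)\in\N^k$, let $\mathbf{A}_t\in\N_0^{\{1,\dots,k\}\times X}$ be the matrix whose $i$-th row is $t_i\,\yy^i$. Next I would verify that every $\mathbf{A}_t$ is an sf-realization in the sense of Def.~\ref{def:sfrel}. Condition~(ii), $\im\mathbf{A}_t^\trans=\ker\SM^\trans$, holds because rescaling each $\yy^i$ by the positive integer $t_i$ does not change the span of the rows, which equals $\ker\SM^\trans$ since $\yy^1,\dots,\yy^k$ form a basis of this $k$-dimensional space. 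For condition~(i), suppose some column $x$ of $\mathbf{A}_t$ vanished; then $(\yy^i)_x=0$ for every $i$, hence $\mm_x=0$ for all $\mm\in\ker\SM^\trans$ --- but conservativity provides a strictly positive $\mm\in\ker\SM^\trans$, a contradiction. So every column is non-zero.

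Then I would determine when two such realizations are equivalent. By definition $\mathbf{A}_t\sim\mathbf{A}_{t'}$ iff $p\mathbf{A}_t=q\mathbf{A}_{t'}$ for some $p,q\in\N$, i.e.\ $p\,t_i\,\yy^i=q\,t_i'\,\yy^i$ for all $i$; since each $\yy^i\neq 0$, this is equivalent to $p\,t_i=q\,t_i'$ for all $i$, that is, to $t$ and $t'$ being proportional. Finally I would pick, say, $t^{(n)}=(1,n,1,\dots,1)$ for $n\in\N$: these are pairwise non-proportional because the ratio of their first two coordinates is $1/n$ (alternatively one may take vectors with pairwise distinct prime entries). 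Hence $\{\mathbf{A}_{t^{(n)}}\}_{n\in\N}$ is an infinite family of pairwise inequivalent sf-realizations, which proves the proposition.

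I do not expect a genuine obstacle here; the only points that need care are that it is conservativity --- not merely the existence of an sf-instance --- that forces the columns to be non-zero, and that Lemma~\ref{lem:dim} is precisely what licenses choosing a basis of $\ker\SM^\trans$ consisting of minimal MCLs.
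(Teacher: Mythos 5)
Your proposal is correct and follows essentially the same route as the paper: the authors also construct the family $\mathbf{A}_t$ with rows $t_i\,\yy^i$ from $k$ linearly independent minimal MCLs, observe that $\mathbf{A}_t\sim\mathbf{A}_{t'}$ exactly when $t$ and $t'$ are proportional, and exhibit an infinite set of pairwise non-proportional $t$ (they suggest distinct primes, you suggest $(1,n,1,\dots,1)$ --- either works). Your explicit verification that conservativity forces the columns of $\mathbf{A}_t$ to be non-zero is a detail the paper leaves implicit, and it is a worthwhile addition.
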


\subsection{Structural Formula Realizations} 

A structural formula represents a chemical species as a (connected)
molecular graph, whose vertices are labeled by atom types and edges refer
to chemical bonds. \emph{Lewis structures} \cite{Lewis:1916} are equivalent
to vertex-labeled multigraphs in which each bonding electron pair is
represented as an individual edge, and each non-bonding electron pair as a
loop. In particular, double or triple bonds are shown as two or three
parallel edges. The educt and product complexes $r^-$ and $r^+$ of a
reaction $r$ can then be represented as the disjoint unions of the educt
and product graphs, respectively. A chemical reaction is a graph
transformation that converts the educt graph into the product graph such
that vertices and their labels are preserved
\cite{Benkoe:03b,Rossello:05}. Only the bonds are rearranged. Since
electrons are conserved, and each edge or loop accounts for two electrons,
any reaction must preserve the sum of vertex degrees and thus the
  number of edges. Fig.~\ref{fig:H2SO4} shows an example.

\begin{figure}
  \begin{center}
    \includegraphics[width=0.9\columnwidth]{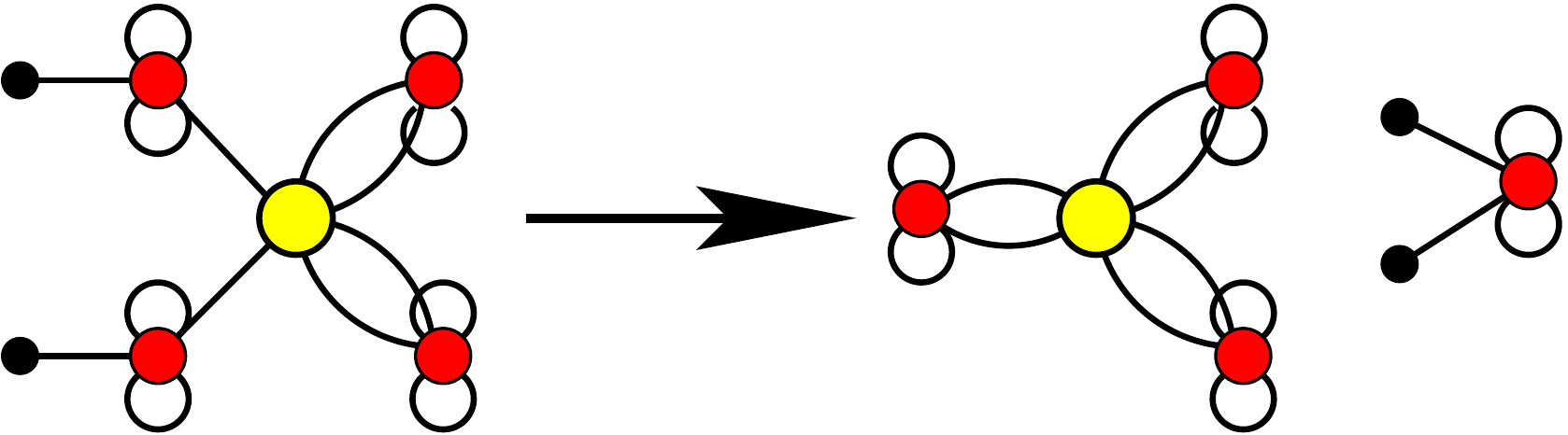}
  \end{center}
  \caption{Multigraph representation for the reaction \ce{H2SO4 -> SO3 +
      H2O}. Atoms shown in color: \ce{H} black, \ce{O} red, \ce{S} in
    yellow. Non-bonding electron pairs are represented by loops, double
    bonds by two parallel edges.}
  \label{fig:H2SO4}
\end{figure}

This idea can be generalized to sf-realizations in which ``atoms'' are
viewed as moieties. We may then interpret the vertices of a multigraph as
``fragments'' of species that are endowed with a certain number of
``valencies'' or ``half bonds''. These must be ``saturated'' by binding to
free valencies of other moieties or they must be used to form internal
bonds within a moiety. Each atom type/moiety has a fixed degree given as
the number of halfbonds (binding to other moieties or forming loops).
Correspondingly, the degree of a vertex $u$ in a multigraph is defined as
the number of edges that connect $u$ with other vertices plus twice the
number of loops.  A reaction thus preserves electrons if and only if its
only effect is to rearrange the bonds in the multigraph.

\begin{definition} \label{def:sf1}
  Let $\mathcal{A}$ be a non-empty, finite set,
  $\val\colon\mathcal{A}\to\N$ be an arbitrary function, and
  $\sum_{a\in\mathcal{A}} n_a \, a$ be a sum formula.  A multigraph
  $\Gamma= (V,E,\alpha)$ with loops and vertex coloring
  $\alpha\colon V\to\mathcal{A}$ is a corresponding \emph{structural
    formula} if it satisfies the following conditions:
  \begin{itemize}
  \item[(i)] Each vertex $u\in V$ corresponds to a moiety $\alpha(u)$, in
    particular, $|\{u\in V\colon\alpha(u)=a\}|=n_a$.
  \item[(ii)] $d(u)=\val(\alpha(u))$ for all $u\in V$, i.e., the vertex
    degree of $u$ is given by the corresponding moiety.
  \item[(iii)] $\Gamma$ is connected.
  \end{itemize}
  \label{def:structform}
\end{definition}%

The structural formulas specified in Def.~\ref{def:structform} do not
cover all Lewis structures. In particular, neither explicit charges
nor unpaired electrons are covered. While these are important from a
chemical perspective, we shall see below that such extensions are not
needed for our purposes since the straightforward multigraphs in
Def.~\ref{def:structform} already provide sufficient freedom to obtain
representations for all conservative RNs. 

\begin{definition} \label{def:sf2}
  Let $(X,\RR)$ be a RN, $\mathcal{A}$ be a non-empty, finite set, and
  $\val\colon\mathcal{A}\to\N$ be an arbitrary function.  A \emph{Lewis
    instance} is an assignment of vertex-colored multigraphs
  $\Gamma_x=(V_x,E_x,\alpha_x)$ to all $x\in X$ such that
  \begin{itemize}
  \item[(i)] 
    vertex degrees satisfy $d(u)=\val(\alpha_x(u))$, for
    all $u\in V_x$ and $x\in X$, and
  \item[(ii)] the corresponding matrix
    $\mathbf{A} \in \N_0^{\mathcal{A}\times X}$ defined by
    $\mathbf{A}_{ax} = |\{ u\in V_x\colon \alpha_x(u)=a\}|$ is an
    sf-instance.
  \end{itemize}
  Furthermore, $x\mapsto\Gamma_x$ is a \emph{Lewis realization}
   if $\mathbf{A}$ is an sf-realization. 
\end{definition}

Clearly, every Lewis realization has a corresponding sf-realization.  Given
an sf-realization, we therefore ask when there is a corresponding Lewis
realization.  By Def.~\ref{def:sf1} and~\ref{def:sf2}, we have the
following result.
\begin{lemma}
  \label{lem:suff}
  A RN $(X,\RR)$ has a Lewis realization with corresponding sf-realization
  $\mathbf{A}\in\N_0^{\mathcal{A}\times X}$ for some non-empty, finite set
  $\mathcal{A}$, if and only if there is a function
  $\val\colon\mathcal{A}\to\N$ such that for the sum formula
  $\sum_{a\in\mathcal{A}} \mathbf{A}_{ax} \, a$ for $x\in X$) there is a
  corresponding structural formula $\Gamma_x$.
\end{lemma}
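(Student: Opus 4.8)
The plan is to prove Lemma~\ref{lem:suff} by unwinding Definitions~\ref{def:sf1} and~\ref{def:sf2}; throughout, $\mathbf{A}\in\N_0^{\mathcal{A}\times X}$ is the given sf-realization (as in the surrounding discussion), so in particular $\mathbf{A}$ is an sf-instance with $\im\mathbf{A}^\trans=\ker\SM^\trans$. There is no combinatorial content beyond matching the defining conditions term by term, so I will spell out the two implications and flag the single point that deserves a word of care.

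For the ``if'' direction I would take a function $\val\colon\mathcal{A}\to\N$ together with, for each $x\in X$, a structural formula $\Gamma_x=(V_x,E_x,\alpha_x)$ for the sum formula $\sum_{a\in\mathcal{A}}\mathbf{A}_{ax}\,a$ (so that $n_a=\mathbf{A}_{ax}$ in Def.~\ref{def:sf1}), and check that $x\mapsto\Gamma_x$ is a Lewis instance for this $\val$. Condition~(i) of Def.~\ref{def:sf2}, namely $d(u)=\val(\alpha_x(u))$ for all $u\in V_x$, is precisely condition~(ii) of Def.~\ref{def:sf1}. For condition~(ii) of Def.~\ref{def:sf2}, the matrix with entries $|\{u\in V_x\colon\alpha_x(u)=a\}|$ coincides with $\mathbf{A}$ by condition~(i) of Def.~\ref{def:sf1}, which gives $|\{u\in V_x\colon\alpha_x(u)=a\}|=n_a=\mathbf{A}_{ax}$; and $\mathbf{A}$ is an sf-instance by hypothesis. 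Hence $x\mapsto\Gamma_x$ is a Lewis instance, and since $\mathbf{A}$ is in fact an sf-realization, it is a Lewis realization whose corresponding sf-realization is $\mathbf{A}$.

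For the ``only if'' direction I would start from a Lewis realization $x\mapsto\Gamma_x$ with corresponding sf-realization $\mathbf{A}$, extract from Def.~\ref{def:sf2} a function $\val\colon\mathcal{A}\to\N$ with $d(u)=\val(\alpha_x(u))$ for all $u\in V_x$, $x\in X$, and $\mathbf{A}_{ax}=|\{u\in V_x\colon\alpha_x(u)=a\}|$, and then verify that each $\Gamma_x$ is a structural formula for $\sum_{a\in\mathcal{A}}\mathbf{A}_{ax}\,a$: condition~(i) of Def.~\ref{def:sf1} is the displayed count, condition~(ii) is the degree identity, and condition~(iii), connectedness, holds because $\Gamma_x$ represents the single chemical species $x$ and is therefore a connected multigraph. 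The only step deserving comment is exactly this connectedness requirement: it appears as condition~(iii) of Def.~\ref{def:sf1} but is implicit in the notion of a Lewis instance in Def.~\ref{def:sf2} (each $\Gamma_x$ being the graph of one molecule), so it transfers between the two notions without extra work. Apart from that, the lemma is pure bookkeeping and I expect no genuine obstacle.
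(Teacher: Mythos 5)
Your proof is correct and follows essentially the same route as the paper's, which simply unwinds Definitions~\ref{def:sf1} and~\ref{def:sf2} in both directions (and does so even more tersely, dismissing the ``only if'' part with ``analogously''). Your explicit remark about connectedness --- required by Def.~\ref{def:sf1}(iii) but not literally listed in Def.~\ref{def:sf2} --- is a fair and slightly more careful observation than anything in the paper's own proof.
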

\begin{proof}
  For the 'if' part, let $\sum_{x\in\mathcal{A}} \mathbf{A}_{ax} \,a$ be
  the sum formula for $x\in X$. By assumption, there exists a
  vertex-colored multigraph $\Gamma_x=(V_x,E_x,\alpha_x)$ for $x$ such that
  (i) vertex degrees satisfy $d(u)=\val(\alpha_x(u))$ and (ii) the
  corresponding matrix equals the sf-realization $\mathbf{A}$.
  Analogously, for the 'only if' part.
\end{proof}

The appeal of this characterization is that it does not use any properties
of the RN $(X,\RR)$, at all. In fact, it is easy to see that such a
representation always exists.
\begin{lemma}
  \label{lem:cycle}
  Let $\mathcal{A}$ be a nonempty, finite set and
  $\sum_{a\in\mathcal{A}} n_a \, a$ be a sum formula. 
  Then, there exists a corresponding structural formula with $\val(a)=2$
  for all $a\in\mathcal{A}$.
\end{lemma}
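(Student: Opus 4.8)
The statement to prove is Lemma~\ref{lem:cycle}: given any sum formula $\sum_{a\in\mathcal{A}} n_a\,a$, there is a corresponding structural formula (in the sense of Def.~\ref{def:structform}) in which every vertex has degree $2$. The plan is to construct the multigraph $\Gamma$ explicitly. First I would let $N = \sum_{a\in\mathcal{A}} n_a$ be the total number of vertices, which is at least $1$ since each $n_a\in\N$ (or, more carefully, we may assume the sum formula is non-trivial, i.e.\ $N\ge 1$). For each $a\in\mathcal{A}$ create $n_a$ vertices colored $a$, so that condition (i) of Def.~\ref{def:structform} holds by construction. Now I need a connected multigraph on these $N$ vertices in which every vertex has degree exactly $2$.

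The natural choice is a single cycle through all $N$ vertices (hence the lemma's label \texttt{lem:cycle}). If $N\ge 2$, arrange the $N$ vertices in a cyclic order $u_1,u_2,\dots,u_N,u_1$ and put one edge between consecutive vertices; this is a connected $2$-regular multigraph (a cycle; for $N=2$ it is a double edge between the two vertices, which is a legitimate multigraph). Each vertex then has degree $2$, giving condition (ii), and the cycle is connected, giving condition (iii). The one remaining case is $N=1$: a single vertex $u$ with $\val(\alpha(u))=2$; here one adds a single loop at $u$, which by the degree convention (a loop contributes $2$ to the degree) gives $d(u)=2$, and a one-vertex graph is trivially connected. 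So in every case we obtain a structural formula with $\val(a)=2$ for all $a$.

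Since the only nontrivial point is to exhibit a connected $2$-regular (multi)graph on a prescribed number of labeled vertices, and cycles (with the degenerate cases of a double edge for two vertices and a loop for one vertex) do exactly this, there is no real obstacle — the proof is a direct construction. The only thing to be a little careful about is the bookkeeping of the loop/double-edge conventions so that the degree count matches Def.~\ref{def:structform}(ii), and the implicit assumption that the sum formula is non-empty (i.e.\ not all $n_a=0$), which is consistent with the paper's standing assumption that sum formulas have non-zero column sums. I would therefore write the proof as: set $\val\equiv 2$, build the $N$ color-prescribed vertices, and join them into a single cycle (loop if $N=1$), then check (i)–(iii) in one line each.
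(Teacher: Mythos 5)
Your construction is correct and is essentially identical to the paper's own proof: both place the $\sum_a n_a$ vertices on a single cycle (so every vertex has degree $2$) and handle the one-vertex case with a loop. The only addition is your explicit remark that $N=2$ yields a double edge, which the paper leaves implicit.
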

\begin{proof}
  If the sum formula is given by $n_a=1$ and $n_{a'}=0$ for all
  $a'\in\mathcal{A}\setminus\{a\}$, i.e., if it is single moiety, then the
  corresponding structural formula is a single vertex with color $a$ and a
  loop. Otherwise, arrange the $|V|=\sum_a n_a$ vertices, of which exactly
  $n_a$ are colored by $a$, in a cycle and connect the vertices along the
  cycle. Then every vertex $u$ satisfies $d(u)=\val(\alpha(u))=2$ and the
  graph is connected.
\end{proof}
The result extends to any constant function $\val(a)=2k$ (with $k \in \N$) by adding
  loops $k-1$ loops to each vertex. As an immediate consequence of
  Lem.~\ref{lem:suff} and~\ref{lem:cycle}, we have the following result.
\begin{proposition}
  $(X,\RR)$ has a Lewis realization if and only if has an sf-realization.
\end{proposition}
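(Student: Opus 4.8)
The proposition is flagged as an immediate consequence of Lemmas~\ref{lem:suff} and~\ref{lem:cycle}, so the plan is to chain these two results, handling one direction of the equivalence with each.

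For the ``only if'' direction I would simply unwind the definitions. By Definition~\ref{def:sf2}, a Lewis realization is an assignment $x\mapsto\Gamma_x$ of vertex-colored multigraphs whose associated counting matrix $\mathbf{A}\in\N_0^{\mathcal{A}\times X}$, with $\mathbf{A}_{ax}=|\{u\in V_x\colon\alpha_x(u)=a\}|$, is by definition required to be an sf-realization. Hence the existence of a Lewis realization immediately produces an sf-realization, and no further argument is needed.

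For the ``if'' direction, suppose $(X,\RR)$ admits an sf-realization $\mathbf{A}\in\N_0^{\mathcal{A}\times X}$. By Lemma~\ref{lem:suff} it suffices to produce a single valence function $\val\colon\mathcal{A}\to\N$ such that, for every $x\in X$, the sum formula $\sum_{a\in\mathcal{A}}\mathbf{A}_{ax}\,a$ admits a corresponding structural formula in the sense of Definition~\ref{def:structform}. The plan is to take the constant function $\val(a)=2$ for all $a\in\mathcal{A}$ and invoke Lemma~\ref{lem:cycle}, which constructs exactly such a structural formula for any sum formula over $\mathcal{A}$ with this valence assignment: a cycle on $\sum_a\mathbf{A}_{ax}$ vertices, colored so that exactly $\mathbf{A}_{ax}$ of them carry label $a$, or a single looped vertex in the one-moiety case. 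The only hypothesis of Lemma~\ref{lem:cycle} to verify is that each $x$ has a genuinely nonempty sum formula, i.e.\ that some $\mathbf{A}_{ax}>0$; this is precisely condition~(i) in the definitions of sf-instance and sf-realization (every column of $\mathbf{A}$ is nonzero), which holds by assumption. Passing this back through Lemma~\ref{lem:suff} then yields a Lewis realization whose corresponding sf-realization is $\mathbf{A}$.

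I do not expect a real obstacle here: all the constructive content has been isolated into Lemma~\ref{lem:cycle}, and Lemma~\ref{lem:suff} is essentially a definitional repackaging. The one point warranting a moment's care is the uniformity of $\val$ --- we need the \emph{same} function to work for all species at once --- but the constant choice $\val\equiv 2$ manifestly does, and, if desired, any constant $\val\equiv 2k$ works equally well by the loop-padding remark following Lemma~\ref{lem:cycle}.
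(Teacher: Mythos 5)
Your proof is correct and follows exactly the paper's route: the ``only if'' direction is definitional, and the ``if'' direction combines Lemma~\ref{lem:cycle} (constant valence $\val\equiv 2$, using the non-zero-column condition to ensure each sum formula is nonempty) with the characterization in Lemma~\ref{lem:suff}, which is precisely how the paper derives the proposition as an immediate consequence of those two lemmas.
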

Using Prop.~\ref{prop:sf}, we can characterize RNs that admit a Lewis
realiztion.
\begin{proposition}
  \label{pro:structform}
  A RN $(X,\RR)$ admits a Lewis realization if and only if it is
  conservative.
\end{proposition}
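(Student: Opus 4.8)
The plan is to derive this as an immediate corollary of the two results stated just above, so almost no new work is needed. The key observation is that the previous Proposition already establishes the equivalence ``$(X,\RR)$ has a Lewis realization $\iff$ $(X,\RR)$ has an sf-realization'', while Prop.~\ref{prop:sf} establishes ``$(X,\RR)$ has an sf-realization $\iff$ $(X,\RR)$ is conservative''. Chaining these two biconditionals yields exactly the claimed equivalence.

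Concretely, I would argue as follows. For the ``only if'' direction, suppose $(X,\RR)$ admits a Lewis realization with vertex-colored multigraphs $\Gamma_x=(V_x,E_x,\alpha_x)$; by Def.~\ref{def:sf2} the associated matrix $\mathbf{A}\in\N_0^{\mathcal{A}\times X}$ with $\mathbf{A}_{ax}=|\{u\in V_x:\alpha_x(u)=a\}|$ is an sf-realization, and then Prop.~\ref{prop:sf} (equivalently, Prop.~\ref{prop:sf-instance} applied to the sf-instance underlying it) forces $(X,\RR)$ to be conservative. For the ``if'' direction, suppose $(X,\RR)$ is conservative; by Prop.~\ref{prop:sf} it admits an sf-realization $\mathbf{A}$, and by the preceding Proposition (which rests on Lemma~\ref{lem:suff} together with Lemma~\ref{lem:cycle}, the latter producing a connected structural formula for any sum formula using the constant valency $\val(a)=2$) this sf-realization can be upgraded to a Lewis realization.

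There is essentially no obstacle here: the content has already been distributed across Lemmas~\ref{lem:suff} and~\ref{lem:cycle}, the preceding unnamed Proposition, and Prop.~\ref{prop:sf}. The only thing to be careful about is making explicit that a Lewis realization \emph{is} in particular an sf-realization (so that the implication to conservativity goes through), and that the cycle construction of Lemma~\ref{lem:cycle} indeed meets all three conditions of Def.~\ref{def:structform}, including connectedness — points that have already been checked in those lemmas. Hence the proof reduces to the single sentence: ``Combine the previous Proposition with Prop.~\ref{prop:sf}.''
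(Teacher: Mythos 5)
Your proposal is correct and coincides with the paper's own route: the result is stated there as an immediate consequence of chaining the preceding proposition (Lewis realization $\iff$ sf-realization, itself resting on Lemmas~\ref{lem:suff} and~\ref{lem:cycle}) with Prop.~\ref{prop:sf} (sf-realization $\iff$ conservative). The extra care you take in noting that a Lewis realization yields an sf-realization and that the cycle construction satisfies all conditions of Def.~\ref{def:structform} is exactly the content already verified in those lemmas.
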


\begin{figure*}
  \begin{center}
    \includegraphics[width=0.9\textwidth]{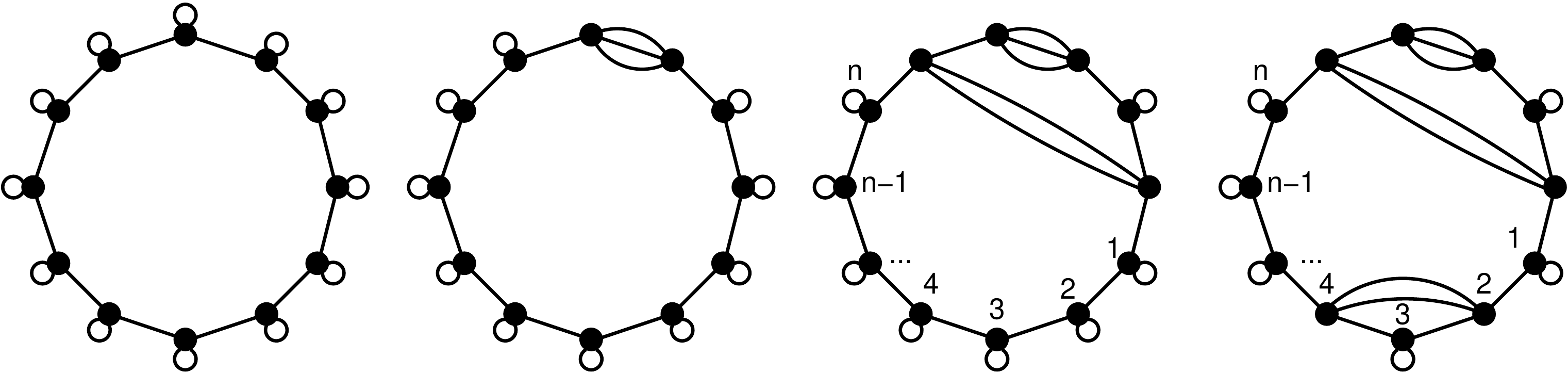}
  \end{center}
  \caption{Construction of non-isomorphic multigraphs with valency $4$ in
    the proof of Prop.~\ref{prop:injective}. The first three isomers are a
    cycle (with loops), a cycle with a single triple-bond indicating an
    ``origin'', and a graph with an additional double bond. In the third
    graph, the asymmetric arrangement of the double and triple bonds
    implies an unambiguous ordering of the remaining vertices (numbered
    from $1$ to $n$). Non-isomorphic graphs are obtained converting a
      pair of loops into a double bound.  Since each vertex has at most one
      bond in addition to the cycle, the resulting graphs correspond to
      Kleitman's ``irreducible diagrams'' \cite{Kleitman:70}. If
    crossings of bonds are excluded, the resulting induced subgraphs with
    vertex set $\{1,\dots,n\}$ are isomorphic to RNA secondary structures
    on sequences of $n$ monomers. The number $S_n$ of secondary structures
    grows asymptotically $\sim 2.6^n$ \cite{Stein:78}.  }
  \label{fig:construction}
\end{figure*}

Interestingly, the simple multigraphs in Def.~\ref{def:structform} are
sufficient to represent all conservative RNs and thus (the proper part of)
all chemical networks. Radicals and other chemical species whose structures
cannot be expressed in terms of electron pairs therefore do not add to the
universe of chemically realistic RNs.
  
Like an sf-realization, a Lewis realization does not necessarily assign
distinct multigraphs $\Gamma_x$ and $\Gamma_y$ to distinct compounds $x$
and $y$. In the case of sf-realizations, obligatory isomers must have the
same sum formula. In Lewis realizations, however, they need not have the
same multigraph. 
  \begin{proposition}
    \label{prop:injective}
    For every conservative RN $(X,\RR)$ there exists an \emph{injective
      Lewis realization} $x\mapsto \Gamma_x$.
  \end{proposition}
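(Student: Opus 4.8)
The plan is to start from an sf\nobreakdash-realization of $(X,\RR)$ --- which exists by Proposition~\ref{prop:sf} because $(X,\RR)$ is conservative --- and to refine the assignment of multigraphs so that compounds sharing a sum formula, i.e.\ obligatory isomers (by Theorem~\ref{thm:sfreal}), still receive pairwise non\nobreakdash-isomorphic multigraphs, while compounds with different sum formulas are separated automatically, since an isomorphism of vertex\nobreakdash-coloured multigraphs preserves the multiset of vertex colours.

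First I would fix an sf\nobreakdash-realization $\mathbf{A}\in\N_0^{\mathcal{A}\times X}$ and observe that, for any $c\in\N$, the matrix $c\mathbf{A}$ is again an sf\nobreakdash-realization: its columns remain non\nobreakdash-zero and its row space, hence $\im(c\mathbf{A})^\trans=\im\mathbf{A}^\trans=\ker\SM^\trans$, is unchanged. Choosing $c$ large I may therefore assume that every column of $\mathbf{A}$ has coordinate sum $N_x:=\sum_{a\in\mathcal{A}}\mathbf{A}_{ax}$ at least any prescribed value; note that compounds in one obligatory\nobreakdash-isomer class share the same sum formula and hence a common value $N_x=N$. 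Next I would set $\val(a)=4$ for all $a\in\mathcal{A}$. As in Lemma~\ref{lem:cycle}, each sum formula is then realizable by placing $N_x$ vertices (with $\mathbf{A}_{ax}$ vertices of colour $a$) on a Hamiltonian cycle, contributing degree $2$ at each vertex, and saturating the remaining degree $2$ per vertex with loops and/or extra parallel edges.

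The construction proper: enumerate each obligatory\nobreakdash-isomer class as $x_0,\dots,x_{m-1}$, so that $m\le|X|$, and let $\Gamma_{x_i}$ be the coloured Hamiltonian\nobreakdash-cycle backbone on the $N$ vertices together with $i$ pairwise vertex\nobreakdash-disjoint \emph{double bonds} (placed between vertices that are not adjacent along the cycle, which is possible once $N$ is large) and a loop on each remaining vertex. Rescaling $\mathbf{A}$ beforehand guarantees that $N$ is large enough for this to be well defined for all $i<m\le|X|$. Then $\Gamma_{x_i}$ is connected, every vertex has degree $4=\val(\alpha(u))$, and the vertex\nobreakdash-colour counts are unchanged, so $\mathbf{A}$ is still the associated matrix and $x\mapsto\Gamma_x$ is a Lewis realization by Definition~\ref{def:sf2}. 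It is injective: if $x,y$ lie in different classes, then $\Gamma_x$ and $\Gamma_y$ differ in their vertex\nobreakdash-colour multisets; and if $x=x_i$, $y=x_j$ with $i\neq j$ lie in the same class, then $\Gamma_{x_i}$ and $\Gamma_{x_j}$ have $i$ and $j$ pairs of vertices joined by a double edge, respectively, and this count is an isomorphism invariant.

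The only step that needs genuine care is this final non\nobreakdash-isomorphism argument, and the ``count the double bonds'' device sidesteps it cheaply; the richer construction indicated in Fig.~\ref{fig:construction} --- an asymmetric triple\nobreakdash-bond/double\nobreakdash-bond ``origin'' gadget that rigidifies the backbone, followed by an arbitrary non\nobreakdash-crossing base\nobreakdash-pairing of the remaining vertices (RNA\nobreakdash-secondary\nobreakdash-structure\nobreakdash-like), giving $\Theta(2.6^{N})$ mutually non\nobreakdash-isomorphic realizations --- produces far more room than the $|X|$ multigraphs we actually need and would even avoid the preliminary rescaling, at the price of having to verify that the gadget destroys all backbone automorphisms despite repeated vertex colours. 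Either way, the main obstacle is purely this book\nobreakdash-keeping: existence of the sf\nobreakdash-realization, the degree and connectivity conditions, and preservation of $\ker\SM^\trans$ are all immediate from results already in hand.
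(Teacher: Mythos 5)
Your proposal is correct and follows essentially the same route as the paper: take an sf-realization (rescaled so every molecule has enough atoms), set $\val(a)=4$, build each compound as a coloured cycle with a loop at each vertex, and distinguish obligatory isomers within a class by converting pairs of loops into double bonds. Your ``count the vertex pairs of edge-multiplicity two'' invariant is a slightly more elementary and fully airtight certificate of non-isomorphism than the paper's appeal to the $\sim 2.6^n$ crossing-free matchings (which, as you note, requires a rigidifying gadget), at the cost of needing linearly rather than logarithmically many vertices per isomer class --- immaterial for finite $X$.
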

  \begin{proof}
    Sf-representations can be constructed to have an arbitrary number of
    atoms or moieties for each $x \in X$, that is, the vertex sets $V_x$ of
    the corresponding multigraphs $\Gamma_x$ can be chosen arbitrarily
    large. Set $\val(a)=4$ for all $a\in\mathcal{A}$ and construct an
    initial Lewis representation of compounds as cycles, as in the proof of
    Lemma~\ref{lem:suff}, but with an additional loop at each
    vertex. Consider two obligatory isomers $x\rightleftharpoons y$, and
    let the (adjacent) vertices $u,v\in V_x$ be connected (by a single
    edge).  Now replace the two loops at the corresponding vertices
    $u,v\in V_y$ by two additional edges between $u$ and $v$.  If the
    equivalence class of obligatory isomers contains more than two
    compounds, choose sets of pairs of disjoint positions along the cycles
    and replace pairs of loops by double edges. This yields circular
    matchings, familiar e.g.\ from the theory of RNA secondary structures
    \cite{Stein:78,Waterman:78}. Setting $n=|V_x|-5$, one can construct
    crossing-free circular matchings on $n$ vertices, whose number grows
    faster than $2.6^n$, see also Fig.~\ref{fig:construction}.  Thus, if
    $V_x$ is chosen large enough, an arbitrarily large set of obligatory
    isomers can be represented by non-isomorphic multigraphs. Note,
    finally, that the construction of non-isomorphic graphs does not depend
    on (the cardinality of) the atom set $\mathcal{A}$, and thus the
    construction is also applicable in the case $|\mathcal{A}|=1$, i.e.,
    $\dim\ker\SM^\trans=1$.
  \end{proof}
  The proof in particular shows that the number of vertices 
  required to accommodate the obligatory isomers grows only logarithmically
  in the size of the equivalence classes of obligatory isomers.
    
\section{Discussion}

\subsection{Characterization of \chemlike reaction networks}

In this contribution, we have characterized reaction networks that are
\chemlike in the sense that they are consistent with the conservation of
energy and mass and allow an interpretation as transformations of chemical
molecules. It is worth noting that we arrive at our results without
invoking mass-action kinetics, which has been the focus of interest in
chemical reaction network theory since the 1970s
\cite{Horn:72,Horn:72a,Feinberg:72}.
Instead, we found that basic arguments from thermodynamics (without kinetic
considerations) are sufficient.  The main results of this contribution can
be summarized as follows:
\begin{itemize}
\item[(i)] A closed RN $(X,\RR)$ is thermodynamically sound if and only if
  it does not contain an irreversible futile cycle.  In particular, every
  reversible networks is thermodynamically sound.  If irreversible
  reactions are meant to proceed in a given direction for all external
  conditions (after opening the RN by adding transport reactions), then
  $(X,\RR)$ must be \emph{strictly} thermodynamically sound. Equivalently,
  a futile cycle must not contain an irreversible reaction. An
    analogous result was obtained by \cite{Gorban2011} assuming mass-action
    kinetics.
\item[(ii)] A RN $(X,\RR)$ is free of cornucopias and abysses if and only
  if it is conservative.
\item[(iii)] Both thermodynamic soundness and conservativity are completely
  determined by the stoichiometric matrix $\SM$, i.e., they are unaffected
  by catalysts.
\item[(iv)] A RN $(X,\RR)$ admits an sf-realization if and only if it is
  conservative. That is, conservative RNs admit assignments of sum formulas
  such that (i) atoms (or moieties) are conserved and (ii) two compounds
  are assigned the same sum formula if and only if they are obligatory
  isomers. Obligatory isomers, in turn, are complete determined by $\SM$.
\item[(v)] For every sf-realization of a RN $(X,\RR)$ there is also a
  Lewis-realization, i.e., an assignment of multigraphs to each compound
  such that reactions are exclusively rearrangements of edges.
\end{itemize}
Such \chemlike realizations, however, are by no means unique. In general,
the same RN has infinitely many chemical realizations corresponding to
different atomic compositions.  The structure of the stoichiometric matrix
$\SM$ of a closed RN therefore implies surprisingly little about the
underlying chemistry.

Nevertheless there is interesting information that is independent of the
concrete realization. For example, Thm.~\ref{thm:iso<->kernel} can be
reformulated as follows: The reversible completion of $(X,\RR)$ admits a
net reaction of the from $p \, x \longrightarrow q \, y$ with $x,y\in X$
and $p,q \in \N$ if and only if $q \, \mathbf{m}_{x} = p \, \mathbf{m}_{y}$
for every $\mathbf{m}\in\ker\SM^\trans$.  This identifies ``obligatory
oligomers'', necessarily composed of multiples of the same monomer.

\subsubsection*{Computational Considerations}

Somewhat surprisingly, the computational problems associated with
recognizing ``\chemlike'' RNs are not particularly difficult and can be
solved by well-established methods. To see this, recall that $(X,\RR)$ is
conservative iff there is a vector $\textbf{m}\gg0$ such that
$\SM^\trans\textbf{m}=0$ and \emph{not} thermodynamically sound iff there
is a vector $\vv> 0$ such that $\SM\vv=0$ and $\vv_r> 0$ for some
$r\in\RR_{\mathrm{irr}}$ These linear programming problems can be solved in
$O((|X|+|\RR|)^{2.37})$ time \cite{Cohen:21}.

An integer (not necessarily non-negative) basis of $\ker\SM^\trans$ can be
computed exactly in polynomial time, e.g.\ using the Smith normal form, see
\cite{Newman:97}. Chubanov's algorithm finds exact rational solutions to
systems of linear equations with a strict positivity constraint. Thus is
can be employed to compute a strictly positive integer solution
$\mathbf{m}\gg0$ to $\SM^\trans\mathbf{m}=0$ in polynomial time
\cite{Chubanov:15,Roos:18}. As a consequence, an sf-realization can also be
computed explicitly in polynomial time. Each sum formula in turn can be
converted into a graph with total effort bounded by
$\max_{x\in X} \sum_{a}\mathbf{A}_{xa}\cdot |X|$, the maximal number of
atoms that appear in a sum formula times the number of molecules.

The equivalence relation $\rightleftharpoons$ for obligatory isomers is
determined by the existence of solutions to a linear equation of the form
$\SM\vv=\mathbf{w}$ and thus can also be computed in polynomial time, again
bounded by the effort for matrix multiplication for each pair $x,y \in
X$. A much more efficient approach, however, is to compute a basis of
$\ker\SM^\trans$, from which $\rightleftharpoons$ can be read off
directly. This approach easily extends to ``obligatory oligomers.''

Treating RNs as closed systems is too restrictive to describe metabolic
networks. There, RNs are considered as open systems that allow the inflow
of nutrients and the outflow of waste products. Models of metabolism often
impose a condition of \emph{viability}. Traditionally, this is modeled as a
single export ``reaction'' $r_{bm}$ of the form
$\sum_i \alpha_i \ce{C}_i \to\varnothing$, known as the \emph{biomass
  function}~\cite{Feist:10}. It comprises all relevant precursor
metabolites $\ce{C}_i$ (forming all relevant macromolecules) in their
empirically determined proportions $\alpha_i$. Viability is then defined as
the existence of a flow $\vv>0$ with $\SM\vv=0$ and $\vv_{bm}>0$. This
linear programming problem can be tested efficiently by means of flux
balance analysis (FBA) \cite{Orth:10}. In contrast to $(X,\RR)$ being
conservative and thermodynamically sound, however, viability is a property
of the metabolic model, not of the underlying representation of the
chemistry.

\subsection{Outlook to open problems}

\subsubsection*{Construction of random \chemlike networks}

The formal characterization of \chemlike RNs developed here suggests
several interesting questions for further research. In particular, our
results define rather clearly how \emph{random} \chemlike RNs should be
defined and thus poses the question whether there are efficient algorithms
for their construction.  Let us consider the task of generating a random
\chemlike RN in a bit more detail. We first note that it suffices to
generate a stoichiometric matrix $\SM \in \N_0^{X\times\RR}$ that is
thermodynamically sound and conservative. If explicit catalysts are
desired, they can be added to a reaction without further restrictions.
More precisely, given $\SM$, we obtain a network with the same
stoichiometric matrix \emph{plus} catalysts by setting
\begin{equation}
  \begin{split}
    s_{xr}^- = c_{xr}, \; s_{xr}^+ = c_{xr}+s_{xr} &
    \quad\textrm{if } s_{xr}\ge 0 , \\
    s_{xr}^- = c_{xr}-s_{xr}, \;s_{xr}^+ = c_{xr} &
    \quad\textrm{if } s_{xr}\le 0 .
  \end{split}
\end{equation}
The ``catalyst matrix'' $\mathbf{C}$ may contain arbritrary integers
$c_{xr}\ge 0$. For the generation of a RN $(X,\RR)$, therefore, it can be
drawn independently of $\mathbf{S}$.

The key task of generating $(X,\RR)$ is therefore the construction of an
$|X|\times|\RR|$ integer matrix $\SM$ that is conservative and
thermodynamically sound. Both conditions amount to the (non)existence of
vectors with certain sign patterns in $\ker\SM$ and $\ker\SM^\trans$,
respectively.  In order to obtain a background model for a given chemical
RN, one might also ask for a random integer matrix that has a given left
nullspace and is thermodynamically sound.  In addition, one would probably
like to (approximately) preserve the fraction of zero entries per row and
column and the mean of the non-zero entries. 
To our knowledge, no efficient exact algorithms for this problem are known.

A potentially promising alternative is the independent generation of the
complex matrix $\mathbf{Y}$ and the incidence matrix $\mathbf{Z}$ of the
complex-reaction graph. Given a fixed conservative and thermodynamically
sound RN, furthermore, one can make use of the heredity of thermodynamic
soundness and conservativity and consider random subnetworks. This approach
has been explored in particular for metabolic networks: The ensemble of
viable metabolic networks in a given chemical RN can then be sampled by a
random walk on the set of reactions \cite{MatiasRodrigues:09} or a more
sophisticated Markov-Chain-Monte-Carlo procedure \cite{Samal:10,Barve:12}.

\subsubsection*{Chemistry-like realizations} 

The structural formulas constructed in Lemma~\ref{lem:cycle} are not very
``realistic' from a chemical perspective. It is of interest, therefore, if
one can construct chemically more appealing (multi-)graphs.  There seem to
be only a few constraints: (i) If a moiety $a$ appears in isolation, i.e.,
as a molecule $x=1a$, then $\val(a)$ must be even, since it contains
$\val(a)/2$ loops.  (ii) The case $\val(a)=1$ is only possible if there is
no compound composed exclusively of three or more copies of $a$ or composed
of more than two moieties with valency $1$. (iii) It is well known that the
sum of degrees must be even for every multigraph, and connectedness implies
$\sum_u \val(u)\ge 2(|V|-1)$ \cite{Edmonds:64}.
  
The problem of finding multigraph realizations is closely related to, but
not the same as, the problem determining the realizability of degree
sequences in graphs \cite{Meierling:09} or multigraphs \cite{Sierksma:91}.
As in graph theory, it seems to be of particular interest to study
realizability by structural formula in the presence of additional
constraints on admissible graphs.

An advantage of considering the multigraphs specified in
Def.~\ref{def:structform} instead of the full range of Lewis structures is
that a well-established mathematical theory is available. However,
``multigraphs with semi-edges'', which are essentially equivalent to Lewis
structures, have been studied occasionally in recent years
\cite{Getzler:98,Mednykh:15} and may be an appealing framework, in
particular, when restricted realizations are considered.

\subsubsection*{Infinite RNs}

Throughout this contribution, we have assumed that $(X,\RR)$ is finite. In
general, however, chemical universes are infinite, at least in
principle. The simplest example of infinite families are polymers. It is of
interest, therefore, to develop a theory of infinite reaction networks. To
this end, one could follow e.g.\ \cite{OstermeierL:12a}, where also
infinite directed hypergraphs are considered, and further extend the
literature on countably infinite undirected hypergraphs, see e.g.\
\cite{Banakh:19,Bustamante:20} and the references therein. Most previous
work pre-supposed $k$-uniformity, i.e., hyper-edges of (small) finite
cardinality, matching well with the situation in chemical RNs. Every sub-RN
of an infinite RN induced by a finite vertex set $Y\subset X$ can be
assumed to support only a finite number of reactions (directed hyperedges)
$\RR_Y\subset\RR$. This amounts to assuming that a sub-RN induced by finite
set of compounds $Y$ is a finite RN. Every finite sub-RN of a ``\chemlike''
infinite RN, furthermore, needs to be conservative and thermodynamically
sound. Infinite RNs will not be locally finite, in general, since every
compound $x\in X$ may have infinitely many reaction partners, e.g., all
members of a polymer family. Thus $x$ may appear in an infinite number of
reactions. These simple observations suggest infinite ``\chemlike'' RNs are
non-trivial structures whose study may turn out to be a worth-while
mathematical endeavor.
  



\section*{Competing interests}
The authors declare that they have no competing interests.

\section*{Author contributions}
CF and PFS designed the study, SM and PFS proved the mathematical results,
all authors contributed to the interpretation of the results and the
writing of the manuscript.

\section*{Acknowledgements}

This research was funded in part by the German Federal Ministry of
Education and Research within the project Center for Scalable Data
Analytics and Artificial Intelligence (ScaDS.AI.) Dresden/Leipzig (BMBF
01IS18026B).  SM was supported by the Austrian Science Fund (FWF), project
P33218.

\footnotesize

\bibliographystyle{vancouver}
\bibliography{randnet}

\begin{thebibliography}{10}

\bibitem{Sandefur:13}
Sandefur CI, Mincheva M, Schnell S.
\newblock Network representations and methods for the analysis of chemical and
  biochemical pathways.
\newblock Mol Biosyst. 2013;9:2189--2200.

\bibitem{Alon:07}
Alon U.
\newblock Network motifs: theory and experimental approaches.
\newblock Nat Rev Genet. 2007;8:450--461.

\bibitem{Shellman:13}
Shellman ER, Burant CF, Schnell S.
\newblock Network motifs provide signatures that characterize metabolism.
\newblock Mol Biosyst. 2013;9:352--360.

\bibitem{Soule:03}
Soul{\'e} C.
\newblock Graphic requirements for multistationarity.
\newblock ComplexUs. 2003;1:123--133.

\bibitem{Borenstein:08}
Borenstein E, Kupiec M, Feldman MW, Ruppin E.
\newblock Large-scale reconstruction and phylogenetic analysis of metabolic
  environments.
\newblock Proc Natl Acad Sci USA. 2008;105:14482--14487.

\bibitem{Fagerberg:13a}
Fagerberg R, Flamm C, Merkle D, Peters P, Stadler PF.
\newblock On the Complexity of Reconstructing Chemical Reaction Networks.
\newblock Math Comp Sci. 2013;7:275--292.

\bibitem{Horn:72}
Horn FJM.
\newblock Necessary and sufficient conditions for complex balancing in chemical
  kinetics.
\newblock Arch Rational Mech Anal. 1972;49:172--186.

\bibitem{Horn:72a}
Horn F, Jackson R.
\newblock General mass action kinetics.
\newblock Arch Rational Mech Anal. 1972;47:81--116.

\bibitem{Feinberg:72}
Feinberg M.
\newblock Complex balancing in general kinetic systems.
\newblock Arch Rational Mech Anal. 1972;49:187--194.

\bibitem{Craciun:09}
Craciun G, Dickenstein A, Shiu A, Sturmfels B.
\newblock Toric dynamical systems.
\newblock J Symb Comput. 2009;44:1551--1565.

\bibitem{Angeli:09}
Angeli D.
\newblock A Tutorial on Chemical Reaction Network Dynamics.
\newblock Eur J Control. 2009;15:398--406.

\bibitem{Craciun:06}
Craciun G, Feinberg M.
\newblock Multiple Equilibria in Complex Chemical Reaction Networks: {II}. The
  Species-Reaction Graph.
\newblock SIAM J Appl Math. 2016;66:1321--1338.

\bibitem{Kaltenbach:12}
Kaltenbach HM.
\newblock A unified view on bipartite species-reaction and interaction graphs
  for chemical reaction networks.
\newblock Electronic Notes Theor Comp Sci. 2020;350:79--90.

\bibitem{Shinar:13}
Shinar G, Feinberg M.
\newblock Concordant chemical reaction networks and the {S}pecies-{R}eaction
  graph.
\newblock Math Biosci. 2013;241:1--23.

\bibitem{Mincheva:06}
Mincheva M, Roussel MR.
\newblock A graph-theoretic method for detecting potential {Turing}
  bifurcations.
\newblock J Chem Phys. 2006;125:204102.

\bibitem{Zykov:74}
Zykov AA.
\newblock Hypergraphs.
\newblock Usp Math Nauk. 1974;6:89--154.

\bibitem{Zhou:11}
Zhou W, Nakhleh L.
\newblock Properties of metabolic graphs: biological organization or
  representation artifacts?
\newblock BMC Bioinformatics. 2011;12:132.

\bibitem{SantiagoArguello:21a}
Santiago~Arguello A, Stadler PF.
\newblock {Whitney}'s Connectivity Inequalities for Directed Hypergraphs.
\newblock Art Discr Appl Math. 2021;.

\bibitem{Klamt:09}
Klamt S, Haus UU, Theis F.
\newblock Hypergraphs and Cellular Networks.
\newblock PLoS Comput Biol. 2009;5:e1000385.

\bibitem{Montanez:10}
Monta{\~n}ez R, Medina MA, Sol{\'e} RV, Rodr{\'\i}guez‐Caso C.
\newblock When metabolism meets topology: Reconciling metabolite and reaction
  networks.
\newblock BioEssays. 2010;32:246--256.

\bibitem{Andersen:19a}
Andersen JL, Flamm C, Merkle D, Stadler PF.
\newblock Chemical Transformation Motifs --- Modelling Pathways as Integer
  Hyperflows.
\newblock IEEE/ACM Trans Comp Biol. 2019;16:510--523.

\bibitem{Wagner:01}
Wagner A, Fell DA.
\newblock The small world inside large metabolic networks.
\newblock Proc R Soc Lond B. 2001;268:1803--1810.

\bibitem{Jeong:00}
Jeong H, Tombor B, Albert R, Oltvai ZN, Barab{\'a}si AL.
\newblock The large-scale organization of metabolic networks.
\newblock Nature. 2000;407:651--654.

\bibitem{Gleiss:01a}
Gleiss PM, Stadler PF, Wagner A, Fell DA.
\newblock Relevant Cycles in Chemical Reaction Network.
\newblock Adv\ Complex Syst. 2001;4:207--226.

\bibitem{Fischer:15}
Fischer J, Kleidon A, Dittrich P.
\newblock Thermodynamics of Random Reaction Networks.
\newblock PLoS ONE. 2015;10:e0117312.

\bibitem{Schuster:91}
Schuster S, H{\"o}fer T.
\newblock Determining all extreme semi-positive conservation relations in
  chemical reaction systems: a test criterion for conservativity.
\newblock J Chem Soc Faraday Trans. 1991;87:2561--2566.

\bibitem{Gadewar:01}
Gadewar SB, Doherty MF, Malone MF.
\newblock A systematic method for reaction invariants and mole balances for
  complex chemistries.
\newblock Computers Chemical Eng. 2001;25:1199--1217.

\bibitem{Famili:03}
Famili I, Palsson B{\O}.
\newblock The convex basis of the left null space of the stoichiometric matrix
  leads to the definition of metabolically meaningful pools.
\newblock Biophys J. 2003;85:16--26.

\bibitem{Flockerzi:07}
Flockerzi D, Bohmann A, Kienle A.
\newblock On the existence and computation of reaction invariants.
\newblock Chem Eng Sci. 2007;62:4811--4816.

\bibitem{Haraldsdottir:18}
Haraldsd{\'o}ttir HS, Fleming RMT.
\newblock Identification of Conserved Moieties in Metabolic Networks by Graph
  Theoretical Analysis of Atom Transition Networks.
\newblock PLoS Comput Biol. 2016;12:e1004999.

\bibitem{Fontana:91a}
Fontana W.
\newblock Algorithmic chemistry.
\newblock In: Langton CG, Taylor C, Farmer JD, Rasmussen S, editors. Artificial
  Life {II}. Addison-Wesley; 1991. p. 159--210.

\bibitem{Dittrich:01}
Dittrich P, Ziegler J, Banzhaf W.
\newblock Artificial chemistries -- a review.
\newblock Artificial life. 2001;7:225--275.

\bibitem{Benkoe:03b}
Benk{\"o} G, Flamm C, Stadler PF.
\newblock A Graph-Based Toy Model of Chemistry.
\newblock J\ Chem\ Inf\ Comput\ Sci. 2003;43:1085--1093.

\bibitem{Banzhaf:15}
Banzhaf W, Yamamoto L.
\newblock Artificial Chemistries.
\newblock Cambridge, MA: MIT Press; 2015.

\bibitem{Watts:98}
Watts DJ, Strogatz SH.
\newblock Collective dynamics of 'small-world' networks.
\newblock Nature. 1998;393:409--410.

\bibitem{Newman:01}
Newman MEJ, Strogatz SH, Watts DJ.
\newblock Random graphs with arbitrary degree distributions and their
  applications.
\newblock Phys Rev E. 2001;64:026118.

\bibitem{Ravasz:02}
Ravasz E, Somera AL, Mongru DA, Oltvai ZN, Barab{\'a}si AL.
\newblock Hierarchical organization of modularity in metabolic networks.
\newblock Science. 2002;297:1551--1555.

\bibitem{Arita:04}
Arita M.
\newblock The metabolic world of \emph{Escherichia coli} is not small.
\newblock Proc Natl Acad Sci USA. 2004;101:1543--1547.

\bibitem{Azizi:17}
Azizi A, Dewar J, Wu T, Hyman JM.
\newblock Generating Bipartite Networks with a Prescribed Joint Degree
  Distribution.
\newblock J Complex Netw. 2017;5:839--857.

\bibitem{Rao:96}
Rao AR, Jana R, Bandyopadhyay S.
\newblock A {Markov} chain {Monte Carlo} method for generating random
  {$(0,1)$}-matrices with given marginals.
\newblock Indian J Statistics Ser A. 1996;58:225--242.

\bibitem{Hanhijaervi:09}
Hanhij{\"a}rvi S, Garriga GC, Puolam{\"a}ki K.
\newblock Randomization Techniques for Graphs.
\newblock In: Proceedings of the 2009 {SIAM} International Conference on Data
  Mining. SIAM; 2009. p. 780--791.

\bibitem{Strona:14}
Strona G, Nappo D, Boccacci F, Fattorini S, San-Miguel-Ayanz J.
\newblock A fast and unbiased procedure to randomize ecological binary matrices
  with fixed row and column totals.
\newblock Nat Comm. 2014;5:4114.

\bibitem{Saracco:15}
Saracco F, Di~Clemente R, Gabrielli A, Squartini T.
\newblock Randomizing bipartite networks: the case of the {World Trade Web}.
\newblock Sci Rep. 2015;5:10595.

\bibitem{dePanafieu:15}
de~Panafieu {\'E}.
\newblock Phase transition of random non-uniform hypergraphs.
\newblock J Discrete Alg. 2015;31:26--39.

\bibitem{Ghoshal:09}
Ghoshal G, Zlati{\'c} V, Caldarelli G, Newman MEJ.
\newblock Random hypergraphs and their applications.
\newblock Phys Rev E. 2009;79:066118.

\bibitem{Sloan:12}
Sloan RH, Stasi1 D, Tur{\'a}n G.
\newblock Random {Horn} Formulas and Propagation Connectivity for Directed
  Hypergraphs.
\newblock Discrete Math Theor Comp Sci. 2012;14:29--36.

\bibitem{Nakajima:21}
Nakajima K, Shudo K, Masuda N.
\newblock Randomizing hypergraphs preserving degree correlation and local
  clustering.
\newblock IEEE Trans Network Sci Eng. 2021;.

\bibitem{Braun:19}
Braun P.
\newblock Randomization of chemical reaction networks based on a graph-language
  model [MSc thesis].
\newblock Universit{\"a}t Wien, Fakult{\"a}t f{\"u}r Physik; 2019.
\newblock \url{https://othes.univie.ac.at/58106/}.

\bibitem{Samal:10}
Samal A, Matias~Rodrigues JF, Jost J, Martin OC, Wagner A.
\newblock Genotype networks in metabolic reaction spaces.
\newblock BMC Syst Biol. 2010;4:30.

\bibitem{Kim:19}
Kim H, Smith HB, Mathis C, Raymond J, Walker SI.
\newblock Universal scaling across biochemical networks on {Earth}.
\newblock Science Advances. 2019;5:eaau0149.

\bibitem{MatiasRodrigues:09}
Matias~Rodrigues JF, Wagner A.
\newblock Evolutionary plasticity and innovations in complex metabolic reaction
  networks.
\newblock PLoS Comput Biol. 2009;5:e1000613.

\bibitem{Oro61}
Or{\'o} J, Kimball AP.
\newblock Synthesis of purines under possible primitive earth conditions. {I}.
  {A}denine from hydrogen cyanide.
\newblock Arch Biochem Biophys. 1961;94:217--227.

\bibitem{Andersen:13b}
Andersen JL, Andersen T, Flamm C, Hanczyc M, Merkle D, Stadler PF.
\newblock Navigating the Chemical Space of {HCN} Polymerization and Hydrolysis:
  Guiding Graph Grammars by Mass Spectrometry Data.
\newblock Entropy. 2013;15:4066--4083.

\bibitem{Schilling2000}
H SC, Letscher D, Palsson B{\O}.
\newblock Theory for the Systemic Definition of Metabolic Pathways and their
  use in Interpreting Metabolic Function from a Pathway-Oriented Perspective.
\newblock J Theor Biol. 2000;203(3):229--248.

\bibitem{Beard2002}
Beard DA, Liang Sd, Qian H.
\newblock Energy Balance for Analysis of Complex Metabolic Networks.
\newblock Biophys J. 2002;83:79--86.

\bibitem{Schwender:04}
Schwender J, Ohlrogge J, Shachar-Hill Y.
\newblock Understanding flux in plant metabolic networks.
\newblock Curr Opin Plant Biol. 2004;7:309--317.

\bibitem{Qian:06}
Qian H, Beard DA.
\newblock Metabolic futile cycles and their functions: a systems analysis of
  energy and control.
\newblock IEE Proc Systems Biology. 2006;153:192--200.

\bibitem{Minty:74}
Minty GJ.
\newblock A ``from scratch'' proof of a theorem of {Rockafellar} and
  {Fulkerson}.
\newblock Mathematical Programming. 1974;7:368--375.

\bibitem{Mueller2019}
M{\"u}ller S, Hofbauer J, Regensburger G.
\newblock On the bijectivity of families of exponential/generalized polynomial
  maps.
\newblock SIAM J Appl Algebra Geom. 2019;3(3):412--438.

\bibitem{Dondi:12}
Dondi D, Merli D, Albini A, Zeffiroa A, Serpone N.
\newblock Chemical reaction networks as a model to describe {UVC}- and
  radiolyticallyinduced reactions of simple compounds.
\newblock Photochem Photobiol Sci. 2012;11:835--842.

\bibitem{Pekar:05}
Peka{\v{r}} M.
\newblock Thermodynamics and foundations of mass-action kinetics.
\newblock Prog Reaction Kinetics Mechanism. 2005;30:3--113.

\bibitem{Polettini:14}
Polettini M, Esposito M.
\newblock Irreversible thermodynamics of open chemical networks. {I}.
  {E}mergent cycles and broken conservation laws.
\newblock J Chem Phys. 2014;141:024117.

\bibitem{Gorban2011}
Gorban AN, Yablonsky GS.
\newblock Extended detailed balance for systems with irreversible reactions.
\newblock Chemical Engineering Science. 2011;66(21):5388--5399.

\bibitem{Gorban2013}
Gorban AN, M ME, Yablonsky GS.
\newblock Thermodynamics in the limit of irreversible reactions.
\newblock Physica A: Stat Mech Appl. 2013;392(6):1318--1335.

\bibitem{Bigan:13}
Bigan E, Steyaert JM, Douady S.
\newblock Properties of Random Complex Chemical Reaction Networks and Their
  Relevance to Biological Toy Models.
\newblock arXiv; 2013. 1303.7439.

\bibitem{Rao:18}
Rao R, Esposito M.
\newblock Conservation laws and work fluctuation relations in chemical reaction
  networks.
\newblock J Chem Phys. 2018;149:245101.

\bibitem{Schuster:95}
Schuster S, Hilgetag C.
\newblock What information about the conserved-moiety structure of chemical
  reaction systems can be derived from their stoichiometry?
\newblock J Phys Chem. 1995;99:8017--8023.

\bibitem{MuellerRegensburger2016}
M{\"u}ller S, Regensburger G.
\newblock Elementary vectors and conformal sums in polyhedral geometry and
  their relevance for metabolic pathway analysis.
\newblock Front Genet. 2016;7:1--11.

\bibitem{DeMartino:14}
De~Martino A, De~Martino D, Mulet R, Pagnani A.
\newblock Identifying All Moiety Conservation Laws in Genome-Scale Metabolic
  Networks.
\newblock PLoS One. 2014;9:e100750.

\bibitem{Graver1975}
Graver JE.
\newblock On the foundations of linear and integer linear programming. {I}.
\newblock Math Programming. 1975;9:207--226.

\bibitem{Doty:18}
Doty D, Zhu S.
\newblock Computational complexity of atomic chemical reaction networks.
\newblock Natural Computing. 2018;17:677--691.

\bibitem{Meliendez-Hevia:1985}
Meli{\'e}ndez-Hevia E, Isidoro A.
\newblock The Game of the Pentose Phosphate Cycle.
\newblock J Theor Biol. 1985;117(2):251--263.

\bibitem{Lewis:1916}
Lewis GN.
\newblock The Atom and the Molecule.
\newblock J Am Chem Soc. 1916;38:762--785.

\bibitem{Rossello:05}
Rossello F, Valiente G.
\newblock Chemical Graphs, Chemical Reaction Graphs, and Chemical Graph
  Transformation.
\newblock Electr Notes Theor Comp Sci. 2005;127:157--166.

\bibitem{Kleitman:70}
Kleitman DJ.
\newblock Proportions of irreducible diagrams.
\newblock Studies Appl Math. 1970;49:297--299.

\bibitem{Stein:78}
Stein PR, Waterman MS.
\newblock On some new sequences generalizing the {Catalan} and {Motzkin}
  numbers.
\newblock Discr Math. 1979;26:261--272.

\bibitem{Waterman:78}
Waterman MS, Smith TF.
\newblock {RNA} secondary structure: a complete mathematical analysis.
\newblock Math Biosci. 1978;42:257--266.

\bibitem{Cohen:21}
Cohen MB, Lee YT, Song Z.
\newblock Solving Linear Programs in the Current Matrix Multiplication Time.
\newblock J ACM. 2021;68:31--39.

\bibitem{Newman:97}
Newman M.
\newblock The {Smith} Normal Form.
\newblock Lin Alg Appl. 1997;254:367--381.

\bibitem{Chubanov:15}
Chubanov S.
\newblock A polynomial projection algorithm for linear feasibility problems.
\newblock Mathematical Programming. 2015;153:687--713.

\bibitem{Roos:18}
Root K.
\newblock An improved version of {Chubanov}'s method for solving a homogeneous
  feasibility problem.
\newblock Opt Methods Softw. 2018;33:26--44.

\bibitem{Feist:10}
Feist AM, Palsson B{\O}.
\newblock The Biomass Objective Function.
\newblock Curr Opin Microbiol. 2010;13:344--349.

\bibitem{Orth:10}
Orth JD, Thiele I, Palsson B{\O}.
\newblock What is flux balance analysis?
\newblock Nature Biotech. 2010;28:245--248.

\bibitem{Barve:12}
Barve A, Matias~Rodrigues J, Wagner A.
\newblock Superessential reactions in metabolic networks.
\newblock Proc Natl Acad Sci. 2012;109:E1121--E1130.

\bibitem{Edmonds:64}
Edmonds J.
\newblock Existence of $k$-edge connected ordinary graphs with prescribed
  degrees.
\newblock J Res Nat Bur Standards Sect B. 1964;68:73--74.

\bibitem{Meierling:09}
Meierling D, Volkmann L.
\newblock A remark on degree sequences of multigraphs.
\newblock Math Methods Oper Res. 2009;69:369--374.

\bibitem{Sierksma:91}
Sierksma G, Hoogeveen H.
\newblock Seven criteria for integer sequences being graphic.
\newblock J Graph Th. 1991;15:223--231.

\bibitem{Getzler:98}
Getzler E, Kapranov MM.
\newblock Modular operads.
\newblock Compositio Mathematica. 1998;110:65--125.

\bibitem{Mednykh:15}
Mednykh AD, Nedela R.
\newblock Harmonic Morphisms of Graphs: {P}art {I}: Graph Coverings.
\newblock Banska Bystrica: Vydavatelstvo Univerzity Mateja Bela; 2015.

\bibitem{OstermeierL:12a}
Ostermeier L, Hellmuth M, Stadler PF.
\newblock The Cartesian Product of Hypergraphs.
\newblock J\ Graph Th. 2012;70:180--196.

\bibitem{Banakh:19}
Banakha T, van~der Zypen D.
\newblock Minimal covers of infinite hypergraphs.
\newblock Discr Math. 2019;342:3043--3046.

\bibitem{Bustamante:20}
Bustamante S, Corsten J, Frankl N.
\newblock Partitioning Infinite Hypergraphs into Few Monochromatic
  {Berge}-Paths.
\newblock Graphs Combinatorics. 2020;36:437--444.

\end{thebibliography}


\end{document}